\newtheorem{theorem}{Theorem}
\newtheorem{lemma}{Lemma}
\newtheorem{corollary}{Corollary}
\newtheorem{remark}{Remark}
\newcommand{\yv}{{\bm y}}
\newcommand{\yrv}{\bm Y}
\newcommand{\xv}{{\bm x}}
\newcommand{\xrv}{\bm X}
\newcommand{\zv}{{\bm z}}
\newcommand{\zrv}{\bm Z}
\newcommand{\HM}{{\mathsf H}}
\newcommand{\HRM}{\mathbb{H}}
\newcommand{\ERM}{\mathbb{E}}
\newcommand{\ARM}{\mathbb{A}}
\newcommand{\IM}{{\mathsf I}}
\newcommand{\nt}{{n_{\rm t}}}
\newcommand{\nr}{{n_{\rm r}}}
\newcommand{\nts}{{n_{{\rm t},s}}}
\newcommand{\ntone}{{n_{{\rm t},1}}}
\newcommand{\nttwo}{{n_{{\rm t},2}}}
\newcommand{\SNR}{{\sf SNR}}
\newcommand{\gmi}{I^{\rm gmi}}
\newcommand{\trans}[1]{#1^{\textnormal{\textsf{\tiny T}}}} 
\newcommand{\ii}{{\sf i}}
\newcommand{\Expect}{\mathsf{E}}
\DeclareMathAlphabet{\mathpzc}{OT1}{pzc}{m}{it}
\DeclareSymbolFont{lettersA}{U}{txmia}{m}{it}
 \DeclareMathSymbol{\real}{\mathord}{lettersA}{"92}
 \DeclareMathSymbol{\field}{\mathord}{lettersA}{"83} 
 \DeclareMathSymbol{\integ}{\mathord}{lettersA}{"9A}
 \DeclareMathSymbol{\naturals}{\mathord}{lettersA}{"8E}
\newcommand{\hermi}[1]{#1^{\dagger}} 
\def\tagform@#1{\maketag@@@{\ignorespaces#1\unskip\@@italiccorr}}
\let\orgtheequation\theequation
\def\theequation{(\orgtheequation)}
\begin{document}
\title{Nearest Neighbor Decoding and Pilot-Aided Channel Estimation for Fading Channels}

\author{A. Taufiq Asyhari, Tobias Koch and Albert Guill{\'e}n i F{\`a}bregas

\thanks{The material in this paper was presented in part at the 2011 IEEE International Symposium on Information Theory (ISIT), Saint Petersburg, Russia, July 31--August 5, 2011 and the 49th Annual Allerton Conference on Communication, Control and Computing, Monticello, Illinois, USA, September 28--30, 2011.}

\thanks{A. T. Asyhari was with Intelligent Information and Communications Research Center, Microelectronics and Information Systems Research Center, National Chiao Tung University, Hsinchu, Taiwan. He is now with School of Engineering and Informatics, University of Bradford, Bradford BD7 1DP, United Kingdom  (email: taufiq-a@ieee.org).}

\thanks{T. Koch was with the Department of Engineering, University of Cambridge, Cambridge, CB2 1PZ, United Kingdom. He is now with the Signal Theory and Communications Department, Universidad Carlos III de Madrid, 28911 Legan\'es, Spain (email: koch@tsc.uc3m.es).}

\thanks{A. Guill\'en i F\`abregas is with the Instituci\'o Catalana de Recerca i Estudis Avan\c{c}ats (ICREA), the Department of Information and Communication Technologies, Universitat Pompeu Fabra, Barcelona, Spain, and the Department of Engineering, University of Cambridge, Cambridge, CB2 1PZ, United Kingdom (email: guillen@ieee.org).}

\thanks{The work of A. T. Asyhari was supported in part by the Yousef Jameel Scholarship at University of Cambridge. T. Koch has received funding from the European's Seventh Framework Programme (FP7/2007--2013) under grant agreement No.\ 252663.}

}

\maketitle

\begin{abstract}
We study the information rates of non-coherent, stationary, Gaussian, multiple-input multiple-output (MIMO) flat-fading channels that are achievable with nearest neighbor decoding and pilot-aided channel estimation. In particular, we investigate the behavior of these achievable rates in the limit as the signal-to-noise ratio (SNR) tends to infinity by analyzing the capacity pre-log, which is defined as the limiting ratio of the capacity to the logarithm of the SNR as the SNR tends to infinity. We demonstrate that a scheme estimating the channel using pilot symbols and detecting the message using nearest neighbor decoding (while assuming that the channel estimation is perfect) essentially achieves the capacity pre-log of non-coherent multiple-input single-output flat-fading channels, and it essentially achieves the best so far known lower bound on the capacity pre-log of non-coherent MIMO flat-fading channels. We then extend our analysis to the multiple-access channel.
\end{abstract}

\begin{IEEEkeywords}
Achievable rates, fading channels, high signal-to-noise ratio (SNR), mismatched decoding, multiple-access channels, multiple antennas, nearest neighbor decoding, non-coherent, pilot-aided channel estimation.
\end{IEEEkeywords}

\IEEEpeerreviewmaketitle

\graphicspath{{Figure/EPS/}{Figure/}}

\section{Introduction}

The capacity of coherent multiple-input multiple-output (MIMO) channels increases with the signal-to-noise ratio (SNR) as $\min(\nt, \nr) \log \SNR$, where $\nt$ and $\nr$ are the number of transmit and receive antennas, respectively, and $\SNR$ denotes  the SNR per receive antenna \cite{Bell_foschini_layered_space-time,telatar_multiantenna_Gaussian}. The growth factor $\min(\nt, \nr)$ is sometimes referred to as the capacity pre-log \cite{IEEE:lapidoth:ontheasymptotic-capacity} or spatial multiplexing gain \cite{IEEE:zheng:diversityand}. This capacity growth can be achieved using a nearest neighbor decoder which selects the codeword that is closest (in a Euclidean distance sense) to the channel output. In fact, for coherent fading channels with additive Gaussian noise, this decoder is the maximum-likelihood decoder and is therefore optimal in the sense that it minimizes the error probability (see \cite{IEEE:lapidoth:nearestneighbournongaussian} and references therein). The coherent channel model assumes that there is a genie that provides the fading coefficients to the decoder; this assumption is difficult to achieve in practice. In this paper, we replace the role of the genie by a scheme that estimates the fading via pilot symbols. This can be viewed as a particular coding strategy over a non-coherent fading channel, i.e., a channel where both communication ends do not have access to fading coefficients but may be aware of the fading statistics. Note that with imperfect fading estimation, the nearest neighbor decoder that treats the fading estimate as if it were perfect is not necessarily optimal. Nevertheless, we show that, in some cases, nearest neighbor decoding with pilot-aided channel estimation achieves the capacity pre-log of non-coherent fading channels. (The capacity pre-log is defined as the limiting ratio of the capacity to the logarithm of the SNR as the SNR tends to infinity.)

The capacity of non-coherent fading channels has been studied in a number of works. Building upon \cite{Allerton:Marzetta:BLAST}, Hassibi and Hochwald \cite{IEEE:hassibi:howmuchtraining} studied the capacity of the block-fading channel and used pilot symbols (also known as training symbols) to obtain reasonably accurate fading estimates. Jindal and Lozano \cite{IEEE:Jindal:unifiedtreatmentblock-continuous-fading} provided tools for a unified treatment of pilot-based channel estimation in both block and stationary fading channels with bandlimited power spectral densities. In these works, lower bounds on the channel capacity were obtained. Lapidoth \cite{IEEE:lapidoth:ontheasymptotic-capacity} studied a single-input single-output (SISO) fading channel for more general stationary fading processes and showed that, depending on the predictability of the fading process, the capacity growth in SNR can be, \emph{inter alia}, logarithmic or double logarithmic. The extension of \cite{IEEE:lapidoth:ontheasymptotic-capacity} to multiple-input single-output (MISO) fading channels can be found in \cite{IEEE:koch:fadingnumber_degreeoffreedom}. A lower bound on the capacity of stationary MIMO fading channels was derived by Etkin and Tse in \cite{IEEE:etkin:degreeofffreedomMIMO}.

Lapidoth and Shamai \cite{IEEE:lapidoth:fadingchannels_howperfect} and Weingarten \emph{et. al.} \cite{IEEE:weingarten:gaussiancodes} studied non-coherent stationary fading channels from a mismatched-decoding perspective. In particular, they studied achievable rates with Gaussian codebooks and nearest neighbor decoding. In both works, it is assumed that there is a genie that provides imperfect estimates of the fading coefficients.

In this work, we add the estimation of the fading coefficients to the analysis. In particular, we study a communication system where the transmitter emits pilot symbols at regular intervals, and where the receiver separately performs \emph{channel estimation} and \emph{data detection}. Specifically, based on the channel outputs corresponding to pilot transmissions, the channel estimator produces estimates of the fading for the remaining time instants using a linear minimum mean-square error (LMMSE) interpolator. Using these estimates, the data detector employs a nearest neighbor decoder that detects the transmitted message. We study the achievable rates of this communication scheme at high SNR. In particular, we study the pre-log for fading processes with bandlimited power spectral densities. (The pre-log is defined as the limiting ratio of the achievable rate to the logarithm of the SNR as the SNR tends to infinity.)

For SISO fading channels, using some simplifying arguments, Lozano \cite{IEEE:lozano:interplay_spectral_doppler} and Jindal and Lozano \cite{IEEE:Jindal:unifiedtreatmentblock-continuous-fading} showed that this scheme achieves the capacity pre-log. In this paper, we prove this result without any simplifying assumptions and extend it to MIMO fading channels. We show that the maximum rate pre-log with nearest neighbor decoding and pilot-aided channel estimation is given by the capacity pre-log of the coherent fading channel $\min(\nt,\nr)$ times the fraction of time used for the transmission of data. Hence, the loss with respect to the coherent case is solely due to the transmission of pilots used to obtain accurate fading estimates. If the inverse of twice the bandwidth of the fading process is an integer, then for MISO channels, the above scheme achieves the capacity pre-log derived by Koch and Lapidoth \cite{IEEE:koch:fadingnumber_degreeoffreedom}. For MIMO channels, the above scheme achieves the best so far known lower bound on the capacity pre-log obtained in \cite{IEEE:etkin:degreeofffreedomMIMO}.

The rest of the paper is organized as follows. Section~\ref{sec:model} describes the channel model and introduces our transmission scheme along with nearest neighbor decoding and pilots for channel estimation. Section~\ref{sec:prelog} defines the pre-log and presents the main result. Section~\ref{sec:mac} extends the use of our scheme to a fading multiple-access channel (MAC). Sections \ref{sec:proof-theorem-1} and \ref{sec:proof-mac-pre-log} provide the proofs of our main results. Section \ref{sec:conclusion} summarizes the results and concludes the paper.

\section{System Model and Transmission Scheme}
\label{sec:model}

We consider a discrete-time MIMO flat-fading channel with $\nt$ transmit antennas and $\nr$ receive antennas. Thus, the channel output at time instant $k \in \integ$ (where $\integ$ denotes the set of integers) is the complex-valued $\nr$-dimensional random vector given by 
\begin{equation}
 \yrv_k = \sqrt{\frac{\sf SNR}{\nt}} \HRM_k \xv_k + \zrv_k. \label{eq:channelmodel}
\end{equation}
Here  $\xv_k \in \field^{\nt}$ denotes the time-$k$ channel input vector (with $\field$ denoting the set of complex numbers), $\HRM_k$ denotes the $(\nr \times \nt)$-dimensional random fading matrix at time $k$, and $\zrv_k $ denotes the $\nr$-variate random additive noise vector at time $k$.

The noise process $\{\zrv_k, k \in \integ \}$ is a sequence of independent and identically distributed (i.i.d.) complex-Gaussian random vectors with zero mean and covariance matrix ${\sf I}_{\nr}$, where ${\sf I}_{\nr}$ is the $\nr \times \nr$ identity matrix. $\sf SNR$ denotes the average SNR for each received antenna.

The fading process $\{ \HRM_k, k \in \integ \}$ is stationary, ergodic and complex-Gaussian. We assume that the $\nr\cdot\nt$ processes $\{H_k(r,t), k \in \integ \}$, $r=1,\ldots,\nr$, $t=1,\ldots,\nt$ are independent and have the same law, with each process having zero mean, unit variance, and power spectral density $f_H (\lambda)$, $-\frac{1}{2} \leq \lambda \leq \frac{1}{2}$. Thus, $f_H(\cdot)$ is a non-negative (measurable) function satisfying
 \begin{equation}
 \mathsf{E} \left[ H_{k+m} (r,t) H^*_{k} (r,t) \right] = \int^{1/2}_{-1/2} e^{\ii 2\pi m \lambda} f_H (\lambda) d \lambda,
\end{equation}
where $(\cdot)^*$ denotes complex conjugation, and where $\ii \triangleq \sqrt{-1}$. We further assume that the power spectral density $f_H(\cdot)$ has bandwidth $\lambda_D<1/2$, i.e., $f_H (\lambda) = 0$ for $|\lambda| > \lambda_D$ and $f_H(\lambda)>0$ otherwise. We finally assume that the fading process $\{\HRM_k, k\in\integ\}$ and the noise process $\{\zrv_k, k \in \integ \}$ are independent and that their joint law does not depend on $\{\xv_k, k\in\integ\}$.

The transmission involves both codewords and pilots. The former conveys the message to be transmitted, and the latter are used to facilitate the estimation of the fading coefficients at the receiver. We denote a codeword conveying a message $m$, $m \in \mathcal{M}$ (where $\mathcal{M} = \left \{1,\dotsc, \lfloor e^{nR} \rfloor \right \}$ is the set of possible messages, and where $\lfloor b \rfloor$ denotes the largest integer smaller than or equal to $b$) at rate $R$ by the length-$n$ sequence of input vectors $\bar \xv_1 {(m)},\dotsc,\bar \xv_n {(m)}$. The codeword is selected from the codebook $\mathcal{C}$, which is drawn i.i.d. from an $\nt$-variate complex-Gaussian distribution with zero mean and identity covariance matrix such that 
\begin{equation}
\frac{1}{n} \sum^{n}_{k=1}\mathsf{E} \left[ \left \| \bar \xrv_k {(m)} \right \|^2 \right] = \nt,~~ m \in \mathcal{M}  \label{eq:pw_constraint}
\end{equation} 
where $\|\cdot\|$ denotes the Euclidean norm.

To estimate the fading matrix, we transmit orthogonal pilot vectors. The pilot vector ${\bm p}_t \in \field^\nt$ used to estimate the fading coefficients corresponding to the $t$-th transmit antenna is given by $p_t(t)=1$ and $p_t(t')=0$ for $t'\neq t$. For example, the first pilot vector is ${\bm p}_1=\trans{\left(1,0,\cdots,0 \right)}$, where $\trans{(\cdot)}$ denotes the transpose. To estimate the whole fading matrix, we thus need to send the $\nt$ pilot vectors ${\bm p}_1,\ldots,{\bm p}_{\nt}$.

 \begin{figure*}[t]
 \begin{center}
\begin{pspicture}(-5,-0.65)(12,1.8)

\psframe[linewidth=0.02,fillstyle=solid,fillcolor=darkgray](-2,1.5)(-1.7,1.8)
\rput(-1.2,1.65){\scriptsize Pilot}

\psframe[linewidth=0.02,fillstyle=solid,fillcolor=lightgray](3.7,1.5)(4,1.8)
\rput(4.5,1.65){\scriptsize Data}

\psframe[linewidth=0.02](8.0,1.5)(8.3,1.8)
\rput(9.5,1.65){\scriptsize No transmission}

\rput(4.3,0.9){\scriptsize $N + \left(\frac{N}{L - n_{\rm t}} + 1 \right)n_{\rm t}$}
\psline[linewidth=0.02,linestyle=dotted,dotsep=1pt]{<->}(0,0.5)(8.6,0.5)

\rput(6.95,-1){\scriptsize $L$}
\psline[linewidth=0.02,linestyle=dotted,dotsep=1pt]{<->}(5.9,-0.8)(8,-0.8)

\rput(9.65,-1){\scriptsize $L(T-1)$}
\psline[linewidth=0.02,linestyle=dotted,dotsep=1pt]{<->}(8.6,-0.8)(10.7,-0.8)

\rput(-1.05,-1){\scriptsize $L(T-1)$}
\psline[linewidth=0.02,linestyle=dotted,dotsep=1pt]{<->}(-2.1,-0.8)(0,-0.8)

{
\rput(-3.5,0.15){\scriptsize $t=1$}

\psframe[linewidth=0.02,fillstyle=solid,fillcolor=darkgray](-2.1,0)(-1.8,0.3)
\psframe[linewidth=0.02](-1.8,0)(-1.5,0.3)
\psframe[linewidth=0.02](-1.5,0)(-1.2,0.3)
\psframe[linewidth=0.02](-1.2,0)(-0.9,0.3)
\psframe[linewidth=0.02](-0.9,0)(-0.6,0.3)
\psframe[linewidth=0.02](-0.6,0)(-0.3,0.3)
\psframe[linewidth=0.02](-0.3,0)(0,0.3)

\psframe[linewidth=0.02,fillstyle=solid,fillcolor=darkgray](0,0)(0.3,0.3)
\psframe[linewidth=0.02](0.3,0)(0.6,0.3) 
\psframe[linewidth=0.02,fillstyle=solid,fillcolor=lightgray](0.6,0)(0.9,0.3)
\psframe[linewidth=0.02,fillstyle=solid,fillcolor=lightgray](0.9,0)(1.2,0.3)
\psframe[linewidth=0.02,fillstyle=solid,fillcolor=lightgray](1.2,0)(1.5,0.3)
\psframe[linewidth=0.02,fillstyle=solid,fillcolor=lightgray](1.5,0)(1.8,0.3)
\psframe[linewidth=0.02,fillstyle=solid,fillcolor=lightgray](1.8,0)(2.1,0.3)

\psframe[linewidth=0.02,fillstyle=solid,fillcolor=darkgray](2.1,0)(2.4,0.3)
\psframe[linewidth=0.02](2.4,0)(2.7,0.3) 
\psframe[linewidth=0.02,fillstyle=solid,fillcolor=lightgray](2.7,0)(3,0.3)
\psframe[linewidth=0.02,fillstyle=solid,fillcolor=lightgray](3,0)(3.3,0.3)

\pscircle[fillstyle=solid,fillcolor=black,linewidth=0.0875,linecolor=white](3.8,0.15){0.14}
\pscircle[fillstyle=solid,fillcolor=black,linewidth=0.0875,linecolor=white](4.3,0.15){0.14}
\pscircle[fillstyle=solid,fillcolor=black,linewidth=0.0875,linecolor=white](4.8,0.15){0.14}

\psframe[linewidth=0.02](5.3,0)(8.6,0.3)
\psframe[linewidth=0.02,fillstyle=solid,fillcolor=lightgray](5.3,0)(5.6,0.3)
\psframe[linewidth=0.02,fillstyle=solid,fillcolor=lightgray](5.6,0)(5.9,0.3)
\psframe[linewidth=0.02,fillstyle=solid,fillcolor=darkgray](5.9,0)(6.2,0.3)
\psframe[linewidth=0.02](6.2,0)(6.5,0.3)
\psframe[linewidth=0.02,fillstyle=solid,fillcolor=lightgray](6.5,0)(6.8,0.3)
\psframe[linewidth=0.02,fillstyle=solid,fillcolor=lightgray](6.8,0)(7.1,0.3)
\psframe[linewidth=0.02,fillstyle=solid,fillcolor=lightgray](7.1,0)(7.4,0.3)
\psframe[linewidth=0.02,fillstyle=solid,fillcolor=lightgray](7.4,0)(7.7,0.3)
\psframe[linewidth=0.02,fillstyle=solid,fillcolor=lightgray](7.7,0)(8,0.3)

\psframe[linewidth=0.02,fillstyle=solid,fillcolor=darkgray](8,0)(8.3,0.3)
\psframe[linewidth=0.02](8.3,0)(8.6,0.3)
\psframe[linewidth=0.02](8.6,0)(8.9,0.3)
\psframe[linewidth=0.02](8.9,0)(9.2,0.3)
\psframe[linewidth=0.02](9.2,0)(9.5,0.3)
\psframe[linewidth=0.02](9.5,0)(9.8,0.3)
\psframe[linewidth=0.02](9.8,0)(10.1,0.3)

\psframe[linewidth=0.02,fillstyle=solid,fillcolor=darkgray](10.1,0)(10.4,0.3)
\psframe[linewidth=0.02](10.4,0)(10.7,0.3)

}

{
\rput(-3.5,-0.45){\scriptsize $t=2$}

\psframe[linewidth=0.02](-2.1,-0.6)(-1.8,-0.3)
\psframe[linewidth=0.02,fillstyle=solid,fillcolor=darkgray](-1.5,-0.6)(-1.8,-0.3)
\psframe[linewidth=0.02](-1.5,-0.6)(-1.2,-0.3) 
\psframe[linewidth=0.02](-1.2,-0.6)(-0.9,-0.3)
\psframe[linewidth=0.02](-0.9,-0.6)(-0.6,-0.3) 
\psframe[linewidth=0.02](-0.6,-0.6)(-0.3,-0.3)
\psframe[linewidth=0.02](-0.3,-0.6)(0,-0.3) 

\psframe[linewidth=0.02](0,-0.6)(0.3,-0.3)
\psframe[linewidth=0.02,fillstyle=solid,fillcolor=darkgray](0.3,-0.6)(0.6,-0.3)
\psframe[linewidth=0.02,fillstyle=solid,fillcolor=lightgray](0.6,-0.6)(0.9,-0.3)
\psframe[linewidth=0.02,fillstyle=solid,fillcolor=lightgray](0.9,-0.6)(1.2,-0.3)
\psframe[linewidth=0.02,fillstyle=solid,fillcolor=lightgray](1.2,-0.6)(1.5,-0.3)
\psframe[linewidth=0.02,fillstyle=solid,fillcolor=lightgray](1.5,-0.6)(1.8,-0.3)
\psframe[linewidth=0.02,fillstyle=solid,fillcolor=lightgray](1.8,-0.6)(2.1,-0.3)

\psframe[linewidth=0.02](2.1,-0.6)(2.4,-0.3)
\psframe[linewidth=0.02,fillstyle=solid,fillcolor=darkgray](2.4,-0.6)(2.7,-0.3)
\psframe[linewidth=0.02,fillstyle=solid,fillcolor=lightgray](2.7,-0.6)(3,-0.3)
\psframe[linewidth=0.02,fillstyle=solid,fillcolor=lightgray](3,-0.6)(3.3,-0.3)

\pscircle[fillstyle=solid,fillcolor=black,linewidth=0.0875,linecolor=white](3.8,-0.45){0.14}
\pscircle[fillstyle=solid,fillcolor=black,linewidth=0.0875,linecolor=white](4.3,-0.45){0.14}
\pscircle[fillstyle=solid,fillcolor=black,linewidth=0.0875,linecolor=white](4.8,-0.45){0.14}

\psframe[linewidth=0.02,fillstyle=solid,fillcolor=lightgray](5.3,-0.6)(5.6,-0.3)
\psframe[linewidth=0.02,fillstyle=solid,fillcolor=lightgray](5.6,-0.6)(5.9,-0.3)
\psframe[linewidth=0.02](5.9,-0.6)(6.2,-0.3)
\psframe[linewidth=0.02,fillstyle=solid,fillcolor=darkgray](6.2,-0.6)(6.5,-0.3)

\psframe[linewidth=0.02,fillstyle=solid,fillcolor=lightgray](6.5,-0.6)(6.8,-0.3)
\psframe[linewidth=0.02,fillstyle=solid,fillcolor=lightgray](6.8,-0.6)(7.1,-0.3)
\psframe[linewidth=0.02,fillstyle=solid,fillcolor=lightgray](7.1,-0.6)(7.4,-0.3)
\psframe[linewidth=0.02,fillstyle=solid,fillcolor=lightgray](7.4,-0.6)(7.7,-0.3)
\psframe[linewidth=0.02,fillstyle=solid,fillcolor=lightgray](7.7,-0.6)(8,-0.3)

\psframe[linewidth=0.02](8,-0.6)(8.3,-0.3)
\psframe[linewidth=0.02,fillstyle=solid,fillcolor=darkgray](8.3,-0.6)(8.6,-0.3)
\psframe[linewidth=0.02](8.6,-0.6)(8.9,-0.3)
\psframe[linewidth=0.02](8.9,-0.6)(9.2,-0.3)
\psframe[linewidth=0.02](9.2,-0.6)(9.5,-0.3)
\psframe[linewidth=0.02](9.5,-0.6)(9.8,-0.3)
\psframe[linewidth=0.02](9.8,-0.6)(10.1,-0.3)

\psframe[linewidth=0.02](10.1,-0.6)(10.4,-0.3)
\psframe[linewidth=0.02,fillstyle=solid,fillcolor=darkgray](10.7,-0.6)(10.4,-0.3)

}

\end{pspicture}
\end{center}
 \caption{Structure of pilot and data transmission for $n_{\rm t} = 2$, $L = 7$ and $T=2$.}
 \vspace*{-2mm}
 \label{fig:pilot_data_illustration}
 \end{figure*}
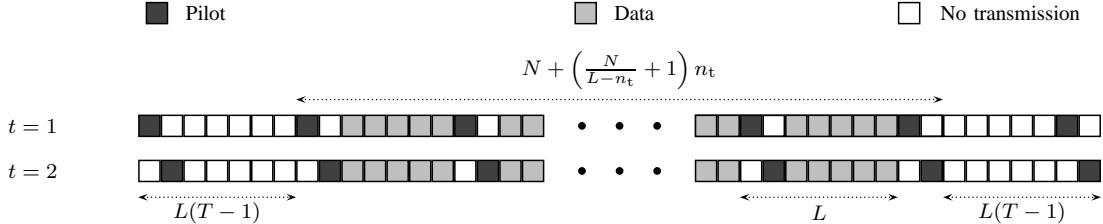

The transmission scheme is as follows. Every $L$ time instants (for some $L\in \naturals$, where $\naturals$ is the set of all positive integers), we transmit the $\nt$ pilot vectors ${\bm p}_1,\ldots,{\bm p}_{\nt}$. Each codeword is then split up into blocks of $L-\nt$ data vectors, which will be transmitted after the $\nt$ pilot vectors. The process of transmitting $L-\nt$ data vectors and $\nt$ pilot vectors continues until all $n$ data vectors are completed. Herein we assume that $n$ is an integer multiple of $L-\nt$.\footnote{If $n$ is not an integer multiple of $L-\nt$, then the last $L-\nt$ instants are not fully used by data vectors and contain therefore time instants where we do not transmit anything. The thereby incurred loss in information rate vanishes as $n$ tends to infinity.} Prior to transmitting the first data block, and after transmitting the last data block, we introduce a guard period of $L (T -1)$ time instants (for some $T\in\naturals$), where we transmit every $L$ time instants the $\nt$ pilot vectors ${\bm p}_1,\ldots,{\bm p}_{\nt}$, but we do not transmit data vectors in between. The guard period ensures that, at every time instant, we can employ a channel estimator that bases its estimation on the channel outputs corresponding to the $T$ past and the $T$ future pilot transmissions. This facilitates the analysis and does not incur any loss in terms of achievable rates. The above transmission scheme is illustrated in Fig.~\ref{fig:pilot_data_illustration}. The channel estimator is described in the following. 

Note that the total block-length of the above transmission scheme (comprising data vectors, pilot vectors and guard period) is given by
\begin{equation}
n' = n_{\rm p} + n + n_{\rm g} \label{eq:total_length}
\end{equation}
where $n_{\rm p}$ denotes the number of channel uses reserved for pilot vectors, and where $n_{\rm g}$ denotes the number of channel uses during the silent guard period, i.e.,
\begin{align}{}
 n_{\rm p} &= \left(\frac{n}{L - \nt} + 1  + 2 (T-1) \right) \nt, \label{eq:number-of-pilots} \\
 n_{\rm g} &= 2(L-\nt)(T-1). \label{eq:number-for-guard}
\end{align}

We now turn to the decoder. Let $\mathcal{D}$ denote the set of time indices where data vectors of a codeword are transmitted, and let $\mathcal{P}$ denote the set of time indices where pilots are transmitted. The decoder consists of two parts: a \emph{channel estimator} and a \emph{data detector}. The channel estimator considers the channel output vectors ${\yrv}_k$, $k\in\mathcal{P}$ corresponding to the past and future $T$ pilot transmissions and estimates $H_{k}(r,t)$ using a linear interpolator, so the estimate $\hat{H}_k^{(T)}(r,t)$ of the fading coefficient $H_k(r,t)$ is given by
\begin{equation}
\label{eq:LMMSEestimation}
 \hat H_k^{(T)}(r,t) = \sum^{k + T L}_{\substack{ k' = k - TL:\\ k' \in \mathcal{P}}} a_{k'} (r,t) Y_{k'}(r)
\end{equation}
where the coefficients $a_{k'} (r,t)$ are chosen in order to minimize the mean-squared error.\footnote{It has been shown in \cite{IEEE:ohno:averageratePSAM} that for the linear interpolator in \eqref{eq:LMMSEestimation}, only the observations when pilots are transmitted, $\yrv_{k'},~k' \in \mathcal{P}$ are relevant for fading estimation. }

Note that, since the pilot vectors transmit only from one antenna, the fading coefficients corresponding to all transmit and receive antennas $(r,t)$ can be observed. Further note that, since the fading processes $\{H_k(r,t), k \in \integ \}$, $r=1,\ldots,\nr$, $t=1,\ldots,\nt$ are independent, estimating $H_{k}(r,t)$ only based on $\{Y_k(r),k\in\integ\}$ rather than on $\{\yrv_k,k\in\integ\}$ incurs no loss in optimality.

Since the time-lags between $\HRM_k$, $k\in\mathcal{D}$ and the observations $\yrv_{k'}$, $k'\in\mathcal{P}$ depend on $k$, it follows that the interpolation error
\begin{equation}
E_k^{(T)}(r,t)  \triangleq  H_k(r,t) - \hat H_k^{(T)}(r,t) \label{eq:Ert}
\end{equation}
is not stationary but cyclo-stationary with period $L$. It can be shown that, irrespective of $r$, the variance of the interpolation error
\begin{equation}
\epsilon^{2}_{\ell,T}(r,t) \triangleq \mathsf{E}\left[\left|H_k(r,t)-\hat{H}^{(T)}_k(r,t)\right|^2\right]
\end{equation}
tends to the following expression as $T$ tends to infinity \cite{IEEE:ohno:averageratePSAM}
\begin{align}
 \epsilon^2_\ell (t) & \triangleq  \lim_{T \rightarrow \infty} \epsilon^2_{\ell,T}(r,t) \\ 
 &= 1 -  \int^{1/2}_{-1/2} \frac{\SNR |f_{L,\ell-t+1}(\lambda) |^2}{\SNR f_{L,0} (\lambda) + \nt} d \lambda \label{eq:interpolation-error-variance-T-infinity-general}
\end{align}
where $\ell \triangleq k\mod L$ denotes the remainder of $k/L$. Here $f_{L,\ell}(\cdot)$ is given by
\begin{equation}
 f_{L,\ell} (\lambda) = \frac{1}{L} \sum^{L-1}_{\nu=0} \bar{f}_H \left( \frac{\lambda - \nu}{L} \right)e^{\ii 2\pi \ell \frac{\lambda - \nu}{L} }, \qquad \ell =0,\dotsc,L-1 \label{eq:undersampled-spectrum-ell}
\end{equation}
 and $\bar{f}_H(\cdot)$ is the periodic continuation of $f_H (\cdot)$, i.e., it is the periodic function of period $[-1/2,1/2)$ that coincides with $f_H(\lambda)$ for $-1/2\leq\lambda\leq 1/2$. If
\begin{equation}
 L \leq \frac{1}{2 \lambda_D}  \label{eq:nyquist}
\end{equation}
then $|f_{L,\ell}(\cdot)|$ becomes
\begin{equation}
|f_{L,\ell}(\lambda)| = f_{L,0}(\lambda)=\frac{1}{L} f_H\left(\frac{\lambda}{L}\right), ~~ -\frac{1}{2}\leq\lambda\leq\frac{1}{2}.
\end{equation}
In this case, irrespective of $\ell$ and $t$, the variance of the interpolation error is given by
\begin{equation}
 \epsilon^2_\ell (t) = \epsilon^2 = 1 -  \int^{1/2}_{-1/2} \frac{\SNR \left [f_{H}(\lambda) \right]^2}{\SNR f_{H} (\lambda) + L\nt} d \lambda, \qquad \ell =0,\dotsc,L-1,~t = 1,\dotsc,\nt \label{eq:fadingestimate_error}
\end{equation}
which vanishes as the $\SNR$ tends to infinity. Recall that $\lambda_D$ denotes the bandwidth of $f_H(\cdot)$. Thus, \eqref{eq:nyquist} implies that no aliasing occurs as we undersample the fading process $L$ times. Note that in contrast to \eqref{eq:interpolation-error-variance-T-infinity-general}, the variance in \eqref{eq:fadingestimate_error} is independent of the transmit antenna index $t$. See Section \ref{subsec:channelestimator} for a more detailed discussion.

The channel estimator feeds the sequence of fading estimates $\{\hat\HRM_k^{(T)},k\in\mathcal{D}\}$ (which is composed of the matrix entries $\{\hat{H}_k^{(T)}(r,t),k\in\mathcal{D}\}$) to the data detector. We shall denote its realization by $\{\hat \HM^{(T)}_k, k \in \mathcal{D} \}$. Based on the channel outputs $\{\yv_k,k\in \mathcal{D} \}$ and fading estimates $\{\hat\HM_k^{(T)},k\in\mathcal{D}\}$, the data detector uses a nearest neighbor decoder to guess which message was transmitted. Thus, the decoder decides on the message $\hat m$ that satisfies
\begin{equation}
 \hat m = \arg \min_{m \in\mathcal{M}}  D(m) 
\end{equation}
where
\begin{equation}
 D(m) \triangleq \sum_{k \in \mathcal{D}^{(n')}} \left \|\yv_k - \sqrt{ \frac{\SNR}{ \nt}} ~ \hat \HM^{(T)}_k \xv_k {(m)} \right\|^2. \label{eq:D-m-secII}
\end{equation}
On the RHS of \eqref{eq:D-m-secII},  assuming that the first pilot symbol is transmitted at time $k=0$, we have defined 
\begin{equation}
\mathcal{D}^{(n')} \triangleq  \{0,\dotsc, n' - 1 \} \cap \mathcal{D} \label{eq:Dnprime-defined}
\end{equation}
as a set of time indices for a single codeword transmission.

\section{The Pre-Log}
\label{sec:prelog}

We say that a rate
\begin{equation}
R(\SNR) \triangleq \frac{\log |\mathcal{M}|}{n}
\end{equation}
is achievable if there exists a code with $\lfloor e^{nR} \rfloor$ codewords such that the error probability tends to zero as the codeword length $n$ tends to infinity. In this work, we study the set of rates that are achievable with nearest neighbor decoding and pilot-aided channel estimation. We focus on the achievable rates at high $\SNR$. In particular, we are interested in the maximum achievable pre-log, defined as
\begin{equation}
\Pi_{R^*} \triangleq \limsup_{\SNR \rightarrow \infty} ~ \frac{R^*(\SNR)}{\log \SNR}  \label{def:gmi-pre-log-gaussian}
\end{equation}
where $R^*(\SNR)$ is the maximum achievable rate, maximized over all possible encoders.

The capacity pre-log---which is given by \eqref{def:gmi-pre-log-gaussian} but with $R^*(\SNR)$ replaced by the capacity\footnote{The capacity is defined as the supremum of all achievable rates maximized over all possible encoders and decoders.} $C(\SNR)$---of SISO fading channels was computed by Lapidoth \cite{IEEE:lapidoth:ontheasymptotic-capacity} as
\begin{equation}
\Pi_{C} = \mu \big(\{\lambda\colon f_H (\lambda) = 0\} \big) \label{eq:eta}
\end{equation}
where $\mu (\cdot)$ denotes the Lebesgue measure on the interval $[-1/2,1/2]$. Koch and Lapidoth \cite{IEEE:koch:fadingnumber_degreeoffreedom} extended this result to MISO fading channels and showed that if the fading processes $\{H_k(t),k\in\integ\}$, $t=1,\ldots,\nt$ are independent and have the same law, then the capacity pre-log of MISO fading channels is equal to the capacity pre-log of the SISO fading channel with fading process $\{H_k(1),k\in\integ\}$. Using \eqref{eq:eta}, the capacity pre-log of MISO fading channels with bandlimited power spectral densities of bandwidth $\lambda_D$ can be evaluated as
\begin{equation}
\label{eq:MISO-capacity-prelog-BL}
\Pi_C = 1-2\lambda_D.
\end{equation}
Since $R^*(\SNR)\leq C(\SNR)$, it follows that $\Pi_{R^*}\leq\Pi_C$.

To the best of our knowledge, the capacity pre-log of MIMO fading channels is unknown. For independent fading processes $\{H_k(r,t),k\in\integ\}$, $t=1,\dots,\nt$, $r=1,\ldots,\nr$ that have the same law, the best so far known lower bound on the MIMO pre-log is due to Etkin and Tse \cite{IEEE:etkin:degreeofffreedomMIMO}, and is given by
\begin{equation}
\Pi_{C} \geq \min (\nt, \nr) \Big(1 - \min(\nt,\nr) \mu\big(\{ \lambda\colon f_H(\lambda)> 0\}\big) \Big).\label{eq:etaMIMO}
\end{equation}
For power spectral densities that are bandlimited to $\lambda_D$, this becomes
\begin{equation}
\label{eq:MIMO-capacity-prelog-BL}
\Pi_C \geq \min (\nt, \nr) \big(1 - \min(\nt,\nr)\, 2\lambda_D\big).
\end{equation}
Observe that \eqref{eq:MIMO-capacity-prelog-BL} specializes to \eqref{eq:MISO-capacity-prelog-BL} for $\nr=1$. It should be noted that the capacity pre-log for MISO and SISO fading channels was derived under a peak-power constraint on the channel inputs, whereas the lower bound on the capacity pre-log for MIMO fading channels was derived under an average-power constraint. Clearly, the capacity pre-log corresponding to a peak-power constraint can never be larger than the capacity pre-log corresponding to an average-power constraint. It is believed that the two pre-logs are in fact identical (see the conclusion in \cite{IEEE:lapidoth:ontheasymptotic-capacity}).

In this paper, we show that a communication scheme that employs nearest neighbor decoding and pilot-aided channel estimation achieves the following pre-log.

\begin{theorem}
\label{th:pre-log-MIMO}
 Consider the Gaussian MIMO flat-fading channel with $\nt$ transmit antennas and $\nr$ receive antennas \eqref{eq:channelmodel}. Then, the transmission and decoding scheme described in Section \ref{sec:model} achieves
\begin{equation}
 \Pi_{R^*} \geq \min(\nt, \nr) \left(1 - \frac{\min(\nt,\nr)}{L^*}\right) \label{eq:pre-log_L}
\end{equation}
where $L^*  =  \left \lfloor \frac{1}{2\lambda_D} \right \rfloor$.
\end{theorem}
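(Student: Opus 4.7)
The plan is to instantiate the scheme of Section~\ref{sec:model} with $\nts \triangleq \min(\nt,\nr)$ active transmit antennas, pilot spacing $L = L^* = \lfloor 1/(2\lambda_D) \rfloor$, and an i.i.d.\ complex-Gaussian codebook on the active antennas with zero mean and covariance $\IM_{\nts}$. Since $L^*$ is the largest integer satisfying~\eqref{eq:nyquist}, the interpolation-error variance $\epsilon^2_{\ell,T}(r,t)$ tends, as $T\to\infty$, to the $\ell$- and $t$-independent quantity $\epsilon^2$ in~\eqref{eq:fadingestimate_error}. I would then lower-bound the rate achievable by the nearest neighbor decoder~\eqref{eq:D-m-secII} by the generalized mutual information (GMI) of the induced mismatched decoder, following the framework of~\cite{IEEE:lapidoth:fadingchannels_howperfect,IEEE:weingarten:gaussiancodes}.

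\textbf{Key decomposition.} For each data instant $k \in \mathcal{D}$, I would write $\yrv_k = \sqrt{\SNR/\nts}\,\hat\HRM_k^{(T)}\xrv_k + \tilde\zrv_k$ with $\tilde\zrv_k \triangleq \sqrt{\SNR/\nts}\,\EM_k^{(T)}\xrv_k + \zrv_k$ and $\EM_k^{(T)} \triangleq \HRM_k - \hat\HRM_k^{(T)}$. By the LMMSE orthogonality principle and joint Gaussianity, $\EM_k^{(T)}$ is independent of $\hat\HRM_k^{(T)}$, so that, conditionally on $(\hat\HRM_k^{(T)},\xrv_k)$, $\tilde\zrv_k$ is zero-mean complex Gaussian with per-component variance $1 + (\SNR/\nts)\,\epsilon^2\,\|\xrv_k\|^2$. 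Rewriting~\eqref{eq:fadingestimate_error} as $\epsilon^2 = \int_{-1/2}^{1/2} \frac{L^*\nts\,f_H(\lambda)}{\SNR\,f_H(\lambda)+L^*\nts}\,d\lambda$ shows that $\SNR\,\epsilon^2$ stays bounded uniformly in $\SNR$ (it tends to $2\lambda_D L^* \nts \leq \nts$), so the effective noise variance is $O(1)$ as $\SNR\to\infty$.

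\textbf{GMI evaluation and pre-log.} Substituting this decomposition into the Gaussian-codebook GMI expression of~\cite{IEEE:weingarten:gaussiancodes} and simplifying yields, per data channel use, a bound of the form
\begin{equation}
 \gmi \;=\; \mathsf{E}\!\left[\log\det\!\left(\IM_{\nr} + \frac{(\SNR/\nts)\,\hat\HRM_k^{(T)}(\hat\HRM_k^{(T)})^{\dagger}}{1 + \SNR\,\epsilon^2}\right)\right] + o(1),
\end{equation}
whose numerator grows linearly in $\SNR$ while the denominator remains bounded. Since $\hat\HRM_k^{(T)}$ is a zero-mean Gaussian matrix with almost-surely full column rank when $\nts\leq\nr$, standard high-$\SNR$ asymptotics on $\mathsf{E}[\log\det]$ of Wishart-type matrices give $\gmi = \nts \log\SNR + O(1)$. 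Multiplying by the asymptotic fraction of total channel uses dedicated to data, which from~\eqref{eq:total_length}--\eqref{eq:number-for-guard} equals $(L^*-\nts)/L^*$ in the limit $n\to\infty$, yields a rate per channel use of $\nts(1 - \nts/L^*)\log\SNR + O(1)$ and hence the claimed pre-log.

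\textbf{Main obstacle.} The hard part is making the GMI calculation rigorous at high $\SNR$. I expect three principal issues to dominate: (i) a uniform-in-$\SNR$ control of $\mathsf{E}[\log\det(\cdot)]$ against the lower tail of the smallest eigenvalue of $\hat\HRM_k^{(T)}(\hat\HRM_k^{(T)})^{\dagger}$; (ii) justifying the order of limits $\SNR\to\infty$, $n\to\infty$ and $T\to\infty$, since $\epsilon^2$ is only defined in the $T\to\infty$ limit whereas the decoder and its error probability are defined at finite $T$; and (iii) handling the cyclostationarity of $\EM_k^{(T)}$ for finite $T$, so that it may eventually be replaced by the single constant $\epsilon^2$. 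The remaining accounting of pilots, data symbols and guard intervals is routine once these points are in place.
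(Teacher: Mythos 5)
Your proposal follows essentially the same route as the paper's proof: restrict to $\min(\nt,\nr)$ active antennas with $L=L^*$, decompose the output into $\sqrt{\SNR/\nt}\,\hat\HRM_k^{(T)}\xrv_k$ plus an effective noise built from the interpolation error, lower-bound the GMI of the mismatched nearest neighbor decoder \`a la Lapidoth--Shamai/Weingarten et al., and extract the pre-log from a $\log\det$ term whose effective noise level stays bounded because $\SNR\,\epsilon^2\leq L\nt$. The three obstacles you flag are exactly the ones the paper disposes of, in the order $n\to\infty$, then $T\to\infty$, then $\SNR\to\infty$: cyclostationarity is handled per-phase via Lemma \ref{lemma:interpolator-property-fix-T} and the worst-case variance $\epsilon^2_{*,T}$ in a suboptimal choice of $\theta$, the $T\to\infty$ step uses convergence in distribution plus Portmanteau's lemma, and no eigenvalue tail control is needed since $\log\det(\IM+{\sf A})\geq\log\det{\sf A}$ together with the exact digamma formula for $\Expect[\log\det\bar\HRM\bar\HRM^\dagger]$ suffices.
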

\begin{proof}
See Section \ref{sec:proof-theorem-1}.
\end{proof}

\begin{remark} 
We derive Theorem \ref{th:pre-log-MIMO} for i.i.d. Gaussian codebooks, which satisfy the average-power constraint \eqref{eq:pw_constraint}. Nevertheless, it can be shown that Theorem \ref{th:pre-log-MIMO} continues to hold when the channel inputs satisfy a peak-power constraint. More specifically, we show in Section \ref{subsec:note-input-distribution} that a sufficient condition on the input distribution with power constraint $\mathsf{E} \left[\left \| \bar \xrv \right \|^2 \right] \leq \nt$ for achieving the pre-log is that its  probability density function (p.d.f.) $p_{\xrv} (\bar \xv)$ satisfies
\begin{equation}
p_{\xrv} (\bar \xv)  \leq \frac{K}{\pi^\nt} e^{-\| \bar \xv \|^2},   \quad \bar \xv \in \field^\nt \label{eq:pdf-constraint}
\end{equation} 
for some $K$ satisfying 
\begin{equation}
\lim_{\SNR \to \infty} \: \frac{\log K}{\log \SNR} = 0. \label{eq:peak-density-cons}
\end{equation}
The condition \eqref{eq:pdf-constraint} is satisfied, for example, by truncated Gaussian inputs, for which the $\nt$ elements in $\bar \xrv$ are independent and identically distributed and 
\begin{align}
  p_{\xrv} (\bar \xv) = \frac{1}{\hat K\pi^\nt} e^{-|\bar \xv|^2},   \quad \bar \xv \in \left \{\bar \xv \in \field^\nt: |\bar x(t)| \leq 1, \: 1 \leq t \leq \nt \right \} 
\end{align}
with
\begin{equation}
\hat K = \left({\int_{|\bar x | \leq 1} \frac{1}{\pi} e^{-|\bar x|^2} d \bar x} \right)^\nt.
\end{equation}
\end{remark}

If $1/(2\lambda_D)$ is an integer, then \eqref{eq:pre-log_L} becomes
\begin{equation}
\Pi_{R^*} \geq \min(\nt, \nr) \big(1 - \min(\nt,\nr)\,2\lambda_D\big).\label{eq:pre-log_lambda}
\end{equation}
Thus, in this case nearest neighbor decoding together with pilot-aided channel estimation achieves the capacity pre-log of MISO fading channels \eqref{eq:MISO-capacity-prelog-BL} as well as the lower bound on the capacity pre-log of MIMO fading channels \eqref{eq:MIMO-capacity-prelog-BL}.

Suppose that both the transmitter and the receiver use the same number of antennas, namely $\nt' \triangleq \nr' \triangleq \min(\nt,\nr)$. Then, as the codeword length tends to infinity, we have from \eqref{eq:total_length}--\eqref{eq:number-for-guard} that the fraction of time consumed for the transmission of pilots is given by
\begin{equation}
\lim_{n \to \infty} \frac{n_{\rm p}}{n'} = \lim_{n \to \infty} \frac{\left(\frac{n}{L - \nt} + 1  + 2 (T-1) \right) \nt'}{\left(\frac{n}{L - \nt'} + 1  + 2 (T-1) \right) \nt' + n + 2(L -\nt')(T-1)} = \frac{\nt'}{L}.
\end{equation}
Consequently, from the achievable pre-log \eqref{eq:pre-log_L}, namely 
\begin{equation}
\Pi_{R^*} \geq \nt' \left(1 - \frac{\nt'}{L}  \right), \qquad L \leq \frac{1}{2\lambda_D}, \label{eq:pre-log-ntprime-min-nt-nr}
\end{equation}
we observe that the loss compared to the capacity pre-log of the coherent fading channel $\nt' = \min(\nt,\nr)$ is given by the fraction of time used for the transmission of pilots. From this we infer that the nearest neighbor decoder in combination with the channel estimator described in Section \ref{sec:model} is optimal at high SNR in the sense that it achieves the capacity pre-log of the coherent fading channel. This further implies that the achievable pre-log in Theorem \ref{th:pre-log-MIMO} is the best pre-log that can be achieved by any scheme employing $\nt'$ pilot vectors.

To achieve the pre-log in Theorem \ref{th:pre-log-MIMO}, we assume that the training period $L$ satisfies $L \leq \frac{1}{2\lambda_D}$, in which case the variance of the interpolation error  \eqref{eq:fadingestimate_error}, namely
\begin{equation}
 \epsilon^2 = 1 - \int^{1/2}_{-1/2} \frac{\SNR \left[ f_H (\lambda) \right]^2}{\SNR f_H(\lambda) + L \nt } d \lambda \approx \frac{2\lambda_D L \nt}{\SNR},
\end{equation}
vanishes as the inverse of the SNR. The achievable pre-log is then maximized by maximizing $L \leq \frac{1}{2\lambda_D}$. Note that as a criterion of ``perfect side information''  for nearest neighbor decoding in fading channels, Lapidoth and Shamai \cite{IEEE:lapidoth:fadingchannels_howperfect} suggested that the variance of the fading estimation error should be negligible compared to the reciprocal of the SNR. Using the linear interpolator \eqref{eq:LMMSEestimation}, we obtain an estimation error with variance decaying as the reciprocal of the SNR provided that $L \leq \frac{1}{2\lambda_D}$. Thus, the condition $L \leq \frac{1}{2\lambda_D}$ can be viewed as a sufficient condition for obtaining ``nearly perfect side information'' in the sense that the variance of the interpolation error is of the same order as the reciprocal of the SNR.

Of course, one could increase the training period $L$ beyond $\frac{1}{2\lambda_D}$. Indeed, by increasing $L$, we could reduce the rate loss due to the transmission of pilots as indicated in \eqref{eq:pre-log-ntprime-min-nt-nr} at the cost of obtaining a larger fading estimation error, which in turn may reduce the reliability of the nearest neighbor decoder. To understand this trade-off better, we shall analyze the achievable pre-log when $L > \frac{1}{2\lambda_D}$. Note that for $L > \frac{1}{2\lambda_D}$, the variance of the interpolation error follows from \eqref{eq:interpolation-error-variance-T-infinity-general} 
\begin{align}
 \epsilon^2_\ell (t) & =  1 - \int^{1/2}_{-1/2} \frac{\SNR \left|f_{L,\ell - t + 1} (\lambda) \right|^2}{\SNR f_{L,0} (\lambda) + \nt} d \lambda \\
& =  \int^{1/2}_{-1/2} \frac{\nt f_{L,0} (\lambda)}{\SNR f_{L,0} (\lambda) + \nt} d\lambda + \int^{1/2}_{-1/2} \frac{\SNR \left( \left[ f_{L, 0} (\lambda) \right]^2 - \left|f_{L, \ell-t+1} (\lambda) \right|^2 \right)}{\SNR f_{L,0} (\lambda) + \nt} d\lambda.   \label{eq:epsilon-sq-ell-t-L-geq-inv-2lambdaD}
\end{align}
The former integral 
\begin{equation}
 \int^{1/2}_{-1/2} \frac{\nt f_{L,0} (\lambda)}{\SNR f_{L,0} (\lambda) + \nt} d\lambda \approx \frac{\nt}{\SNR}
\end{equation}
vanishes as the SNR tends to infinity. However, we prove in Appendix \ref{sec:proof-lemma:interpolator-property-T-infinity} that as the SNR tends to infinity, the latter integral 
\begin{equation}
\int^{1/2}_{-1/2} \frac{\SNR \left( \left[ f_{L, 0} (\lambda) \right]^2 - \left|f_{L, \ell-t+1} (\lambda) \right|^2 \right)}{\SNR f_{L,0} (\lambda) + \nt} d\lambda
\end{equation}
is bounded away from zero. This implies that the interpolation error \eqref{eq:epsilon-sq-ell-t-L-geq-inv-2lambdaD} does not vanish as the SNR tends to infinity, and the pre-log achievable with the scheme described in Section \ref{sec:model} is zero. It thus follows that the condition $L \leq \frac{1}{2\lambda_D}$ is necessary in order to achieve a positive pre-log.

Comparing \eqref{eq:pre-log_L} and \eqref{eq:MIMO-capacity-prelog-BL} with the capacity pre-log $\min(\nt,\nr)$ for coherent fading channels \cite{Bell_foschini_layered_space-time,telatar_multiantenna_Gaussian}, we observe that, for a fading process of bandwidth $\lambda_D$, the penalty for not knowing the fading coefficients is roughly $(\min(\nt,\nr))^2 \cdot 2\lambda_D$. Consequently, the lower bound \eqref{eq:pre-log_L} does not grow linearly with $\min(\nt,\nr)$, but it is a quadratic function of $\min(\nt,\nr)$ that achieves its maximum at
\begin{equation}
\min(\nt,\nr) = \frac{L^*}{2}.
\end{equation}
This gives rise to the lower bound
\begin{equation}
\Pi_{R^*} \geq \frac{L^*}{4}
\end{equation}
which cannot be larger than $1/(8\lambda_D)$. The same holds for the lower bound \eqref{eq:etaMIMO}.

\section{Fading Multiple-Access Channels }
\label{sec:mac}

In this section, we extend the use of nearest neighbor decoding with pilot-aided channel estimation to the fading MAC. We are interested in the achievable pre-log region that can be achieved with this scheme. 

\begin{figure}[t]
\begin{center}
\includegraphics[width=0.9\textwidth]{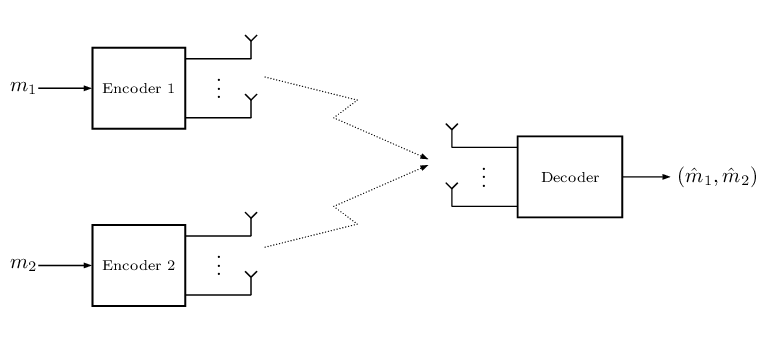}
\vspace*{-2mm}
\end{center}
\caption{The two-user MIMO fading MAC diagram.}
\label{fig:mac-system-model}
\vspace*{-5mm}
\end{figure}


We consider a two-user MIMO fading MAC, where two terminals wish to communicate with a third one, and where the channels between the terminals are MIMO fading channels. Extension to more than two users is straightforward. The first user has $\ntone$ antennas, the second user has $\nttwo$ antennas and the receiver has $\nr$ antennas. The channel model is depicted in Fig.~\ref{fig:mac-system-model}. The channel output at time instant $k \in \integ$ is a complex-valued $\nr$-dimensional random vector given by 
\begin{equation}
\label{eq:channel}
 \yrv_k = \sqrt{{\sf SNR}}\, \HRM_{1,k} \xv_{1,k} + \sqrt{{\sf SNR}}\, \HRM_{2,k} \xv_{2,k} + \zrv_k.
\end{equation}
Here  $\xv_{s,k} \in \field^{\nts}$ denotes the time-$k$ channel input vector corresponding to User $s$, $s=1,2$; $\HRM_{s,k}$ denotes the $(\nr \times \nts)$-dimensional fading matrix at time $k$ corresponding to User $s$, $s=1,2$; $\SNR$ denotes the average SNR for each transmit antenna; and $\zrv_k $ denotes the $\nr$-variate additive noise vector at time $k$. The fading processes $\{\HRM_{s,k}, k \in \integ\}$, $s=1,2$ are independent of each other and of the noise process $\{\zrv_k, k \in \integ \}$, and follow the same setup as the one used in the point-to-point channel (Section \ref{sec:model}).

Both users transmit codewords and pilot symbols over the channel \eqref{eq:channel}. To transmit the message $m_s \in \{1,\dotsc,\lfloor e^{nR_s} \rfloor \}$, $s=1,2$, (where $m_1$ and $m_2$ are drawn independently) each user's encoder selects a codeword of length $n$ from a codebook $\mathcal{C}_s$, where $\mathcal{C}_s$, $s=1,2$ are drawn i.i.d.\ from an $\nts$-variate, zero-mean, complex-Gaussian distribution of covariance matrix ${\sf I}_{\nts}$. Similar to the single-user case, orthogonal pilot vectors are used. The pilot vector ${\bm p}_{s,t} \in \field^{\nts}$, $s=1,2$, $t=1,\dotsc,\nts$ used to estimate the fading coefficients from transmit antenna $t$ of User $s$ is given by $p_{s,t} (t) = 1$ and $p_{s,t} (t') = 0$ for $t' \neq t$. For example, the first pilot vector of User $s$ is given by $\trans{(1,0,\dotsc,0)}$. To estimate the fading matrices $\HRM_{1,k}$ and $\HRM_{2,k}$, each training period requires transmission of $(\ntone + \nttwo)$ pilot vectors ${\bm p}_{1,1},\dotsc, {\bm p}_{1,\ntone}, {\bm p}_{2,1},\dotsc, {\bm p}_{2,\nttwo}$.

Assuming transmission from both users is synchronized, the transmission scheme extends the point-to-point setup in Section \ref{sec:model} to the two-user MAC setup as illustrated in Fig. \ref{fig:pilot_data_illustration_JTD}. Every $L$ time instants (for some $L \geq \ntone + \nttwo,~ L\in\naturals$), User 1 first transmits the $\ntone$ pilot vectors ${\bm p}_{1,1},\dotsc, {\bm p}_{1,\ntone}$. Once the transmission of the $\ntone$ pilot vectors ends, User 2 transmits its $\nttwo$ pilot vectors ${\bm p}_{2,1},\dotsc, {\bm p}_{2,\nttwo}$.  The codewords for both users are then split up into blocks of $(L- \ntone -\nttwo)$ data vectors, which are transmitted simultaneously after the $(\ntone + \nttwo)$ pilot vectors. The process of transmitting $(L- \ntone -\nttwo)$ data vectors and $(\ntone + \nttwo)$ pilot vectors continues until all $n$ data symbols are completed. Herein we assume that $n$ is an integer multiple of $(L - \ntone - \nttwo)$.\footnote{As in the point-to-point setup, this assumption is not critical in terms of rate, cf. Footnote 1 on page 5.} Prior to transmitting the first data block, and after transmitting the last data block, a guard period of $L (T -1)$ time instants (for some $T\in\naturals$) is introduced for the purpose of channel estimation, where we transmit every $L$ time instants the $(\ntone + \nttwo)$ pilot vectors but we do not transmit data vectors in between. Note that codewords from both users are \emph{jointly} transmitted at the same time instants whereas pilots from both users do not interfere and are \emph{separately} transmitted at different time instants. The total block-length of this transmission scheme (comprising data vectors, pilot vectors and guard period) is given by
\begin{equation}
 n' = n_{\rm p} + n + n_{\rm g} \label{eq:total_length_mac}
\end{equation}
where $n_{\rm p}$ and $n_{\rm g}$ are
\begin{align}
 n_{\rm p} &= \left(\frac{n}{L - \ntone - \nttwo} + 1  + 2 (T-1) \right) (\ntone + \nttwo), \IEEEeqnarraynumspace\\
 n_{\rm g} &= 2(L- \ntone - \nttwo)(T-1).
\end{align}

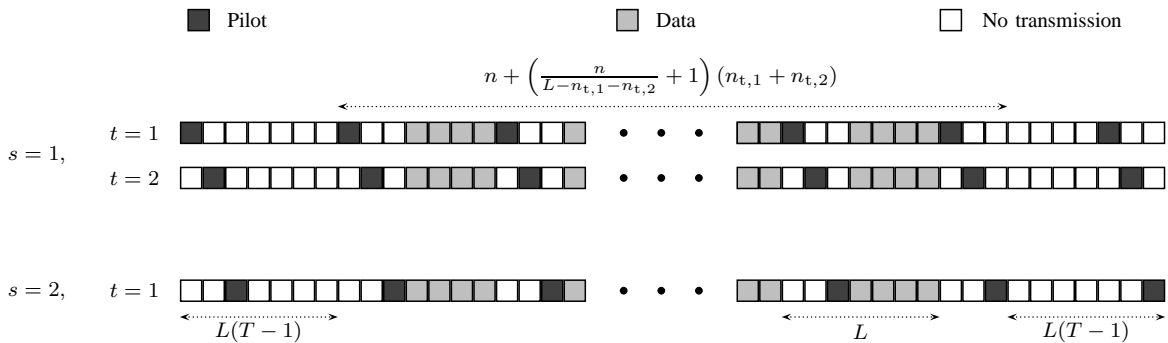
\begin{figure*}[t]
\begin{center}
\begin{pspicture}(-3.75,-2.65)(12,1.8)

\psframe[linewidth=0.02,fillstyle=solid,fillcolor=darkgray](-2,1.5)(-1.7,1.8)
\rput(-1.2,1.65){\scriptsize Pilot}

\psframe[linewidth=0.02,fillstyle=solid,fillcolor=lightgray](3.7,1.5)(4,1.8)
\rput(4.5,1.65){\scriptsize Data}

\psframe[linewidth=0.02](8.0,1.5)(8.3,1.8)
\rput(9.5,1.65){\scriptsize No transmission}

\rput(4.3,0.9){\scriptsize $n + \left(\frac{n}{L - \ntone - \nttwo} + 1 \right) (\ntone + \nttwo)$}
\psline[linewidth=0.02,linestyle=dotted,dotsep=1pt]{<->}(0,0.5)(8.9,0.5)

\rput(6.95,-2.5){\scriptsize $L$}
\psline[linewidth=0.02,linestyle=dotted,dotsep=1pt]{<->}(5.9,-2.3)(8,-2.3)

\rput(9.95,-2.5){\scriptsize $L(T-1)$}
\psline[linewidth=0.02,linestyle=dotted,dotsep=1pt]{<->}(8.9,-2.3)(11,-2.3)

\rput(-1.05,-2.5){\scriptsize $L(T-1)$}
\psline[linewidth=0.02,linestyle=dotted,dotsep=1pt]{<->}(-2.1,-2.3)(0,-2.3)

\rput(-4,-0.15){\scriptsize $s=1,$}

{
\rput(-2.7,0.15){\scriptsize $t=1$}

\psframe[linewidth=0.02,fillstyle=solid,fillcolor=darkgray](-2.1,0)(-1.8,0.3)
\psframe[linewidth=0.02](-1.8,0)(-1.5,0.3)
\psframe[linewidth=0.02](-1.5,0)(-1.2,0.3)
\psframe[linewidth=0.02](-1.2,0)(-0.9,0.3)
\psframe[linewidth=0.02](-0.9,0)(-0.6,0.3)
\psframe[linewidth=0.02](-0.6,0)(-0.3,0.3)
\psframe[linewidth=0.02](-0.3,0)(0,0.3)

\psframe[linewidth=0.02,fillstyle=solid,fillcolor=darkgray](0,0)(0.3,0.3)
\psframe[linewidth=0.02](0.3,0)(0.6,0.3) 
\psframe[linewidth=0.02](0.6,0)(0.9,0.3)
\psframe[linewidth=0.02,fillstyle=solid,fillcolor=lightgray](0.9,0)(1.2,0.3)
\psframe[linewidth=0.02,fillstyle=solid,fillcolor=lightgray](1.2,0)(1.5,0.3)
\psframe[linewidth=0.02,fillstyle=solid,fillcolor=lightgray](1.5,0)(1.8,0.3)
\psframe[linewidth=0.02,fillstyle=solid,fillcolor=lightgray](1.8,0)(2.1,0.3)

\psframe[linewidth=0.02,fillstyle=solid,fillcolor=darkgray](2.1,0)(2.4,0.3)
\psframe[linewidth=0.02](2.4,0)(2.7,0.3) 
\psframe[linewidth=0.02](2.7,0)(3,0.3)
\psframe[linewidth=0.02,fillstyle=solid,fillcolor=lightgray](3,0)(3.3,0.3)

\pscircle[fillstyle=solid,fillcolor=black,linewidth=0.0875,linecolor=white](3.8,0.15){0.14}
\pscircle[fillstyle=solid,fillcolor=black,linewidth=0.0875,linecolor=white](4.3,0.15){0.14}
\pscircle[fillstyle=solid,fillcolor=black,linewidth=0.0875,linecolor=white](4.8,0.15){0.14}

\psframe[linewidth=0.02](5.3,0)(8.6,0.3)
\psframe[linewidth=0.02,fillstyle=solid,fillcolor=lightgray](5.3,0)(5.6,0.3)
\psframe[linewidth=0.02,fillstyle=solid,fillcolor=lightgray](5.6,0)(5.9,0.3)
\psframe[linewidth=0.02,fillstyle=solid,fillcolor=darkgray](5.9,0)(6.2,0.3)
\psframe[linewidth=0.02](6.2,0)(6.5,0.3)
\psframe[linewidth=0.02](6.5,0)(6.8,0.3)
\psframe[linewidth=0.02,fillstyle=solid,fillcolor=lightgray](6.8,0)(7.1,0.3)
\psframe[linewidth=0.02,fillstyle=solid,fillcolor=lightgray](7.1,0)(7.4,0.3)
\psframe[linewidth=0.02,fillstyle=solid,fillcolor=lightgray](7.4,0)(7.7,0.3)
\psframe[linewidth=0.02,fillstyle=solid,fillcolor=lightgray](7.7,0)(8,0.3)

\psframe[linewidth=0.02,fillstyle=solid,fillcolor=darkgray](8,0)(8.3,0.3)
\psframe[linewidth=0.02](8.3,0)(8.6,0.3)
\psframe[linewidth=0.02](8.6,0)(8.9,0.3)
\psframe[linewidth=0.02](8.9,0)(9.2,0.3)
\psframe[linewidth=0.02](9.2,0)(9.5,0.3)
\psframe[linewidth=0.02](9.5,0)(9.8,0.3)
\psframe[linewidth=0.02](9.8,0)(10.1,0.3)

\psframe[linewidth=0.02,fillstyle=solid,fillcolor=darkgray](10.1,0)(10.4,0.3)
\psframe[linewidth=0.02](10.4,0)(10.7,0.3)
\psframe[linewidth=0.02](10.7,0)(11,0.3)

}

{
\rput(-2.7,-0.45){\scriptsize $t=2$}

\psframe[linewidth=0.02](-2.1,-0.6)(-1.8,-0.3)
\psframe[linewidth=0.02,fillstyle=solid,fillcolor=darkgray](-1.5,-0.6)(-1.8,-0.3)
\psframe[linewidth=0.02](-1.5,-0.6)(-1.2,-0.3) 
\psframe[linewidth=0.02](-1.2,-0.6)(-0.9,-0.3)
\psframe[linewidth=0.02](-0.9,-0.6)(-0.6,-0.3) 
\psframe[linewidth=0.02](-0.6,-0.6)(-0.3,-0.3)
\psframe[linewidth=0.02](-0.3,-0.6)(0,-0.3) 

\psframe[linewidth=0.02](0,-0.6)(0.3,-0.3)
\psframe[linewidth=0.02,fillstyle=solid,fillcolor=darkgray](0.3,-0.6)(0.6,-0.3)
\psframe[linewidth=0.02](0.6,-0.6)(0.9,-0.3)
\psframe[linewidth=0.02,fillstyle=solid,fillcolor=lightgray](0.9,-0.6)(1.2,-0.3)
\psframe[linewidth=0.02,fillstyle=solid,fillcolor=lightgray](1.2,-0.6)(1.5,-0.3)
\psframe[linewidth=0.02,fillstyle=solid,fillcolor=lightgray](1.5,-0.6)(1.8,-0.3)
\psframe[linewidth=0.02,fillstyle=solid,fillcolor=lightgray](1.8,-0.6)(2.1,-0.3)

\psframe[linewidth=0.02](2.1,-0.6)(2.4,-0.3)
\psframe[linewidth=0.02,fillstyle=solid,fillcolor=darkgray](2.4,-0.6)(2.7,-0.3)
\psframe[linewidth=0.02](2.7,-0.6)(3,-0.3)
\psframe[linewidth=0.02,fillstyle=solid,fillcolor=lightgray](3,-0.6)(3.3,-0.3)

\pscircle[fillstyle=solid,fillcolor=black,linewidth=0.0875,linecolor=white](3.8,-0.45){0.14}
\pscircle[fillstyle=solid,fillcolor=black,linewidth=0.0875,linecolor=white](4.3,-0.45){0.14}
\pscircle[fillstyle=solid,fillcolor=black,linewidth=0.0875,linecolor=white](4.8,-0.45){0.14}

\psframe[linewidth=0.02,fillstyle=solid,fillcolor=lightgray](5.3,-0.6)(5.6,-0.3)
\psframe[linewidth=0.02,fillstyle=solid,fillcolor=lightgray](5.6,-0.6)(5.9,-0.3)
\psframe[linewidth=0.02](5.9,-0.6)(6.2,-0.3)
\psframe[linewidth=0.02,fillstyle=solid,fillcolor=darkgray](6.2,-0.6)(6.5,-0.3)

\psframe[linewidth=0.02](6.5,-0.6)(6.8,-0.3)
\psframe[linewidth=0.02,fillstyle=solid,fillcolor=lightgray](6.8,-0.6)(7.1,-0.3)
\psframe[linewidth=0.02,fillstyle=solid,fillcolor=lightgray](7.1,-0.6)(7.4,-0.3)
\psframe[linewidth=0.02,fillstyle=solid,fillcolor=lightgray](7.4,-0.6)(7.7,-0.3)
\psframe[linewidth=0.02,fillstyle=solid,fillcolor=lightgray](7.7,-0.6)(8,-0.3)

\psframe[linewidth=0.02](8,-0.6)(8.3,-0.3)
\psframe[linewidth=0.02,fillstyle=solid,fillcolor=darkgray](8.3,-0.6)(8.6,-0.3)
\psframe[linewidth=0.02](8.6,-0.6)(8.9,-0.3)
\psframe[linewidth=0.02](8.9,-0.6)(9.2,-0.3)
\psframe[linewidth=0.02](9.2,-0.6)(9.5,-0.3)
\psframe[linewidth=0.02](9.5,-0.6)(9.8,-0.3)
\psframe[linewidth=0.02](9.8,-0.6)(10.1,-0.3)

\psframe[linewidth=0.02](10.1,-0.6)(10.4,-0.3)
\psframe[linewidth=0.02,fillstyle=solid,fillcolor=darkgray](10.7,-0.6)(10.4,-0.3)
\psframe[linewidth=0.02](11,-0.6)(10.7,-0.3)

}

{
\rput(-4,-1.95){\scriptsize $s=2,$}

\rput(-2.7,-1.95){\scriptsize $t=1$}

\psframe[linewidth=0.02](-2.1,-2.1)(-1.8,-1.8)
\psframe[linewidth=0.02](-1.5,-2.1)(-1.8,-1.8)
\psframe[linewidth=0.02,fillstyle=solid,fillcolor=darkgray](-1.5,-2.1)(-1.2,-1.8) 
\psframe[linewidth=0.02](-1.2,-2.1)(-0.9,-1.8)
\psframe[linewidth=0.02](-0.9,-2.1)(-0.6,-1.8) 
\psframe[linewidth=0.02](-0.6,-2.1)(-0.3,-1.8)
\psframe[linewidth=0.02](-0.3,-2.1)(0,-1.8) 

\psframe[linewidth=0.02](0,-2.1)(0.3,-1.8)
\psframe[linewidth=0.02](0.3,-2.1)(0.6,-1.8)
\psframe[linewidth=0.02,fillstyle=solid,fillcolor=darkgray](0.6,-2.1)(0.9,-1.8)
\psframe[linewidth=0.02,fillstyle=solid,fillcolor=lightgray](0.9,-2.1)(1.2,-1.8)
\psframe[linewidth=0.02,fillstyle=solid,fillcolor=lightgray](1.2,-2.1)(1.5,-1.8)
\psframe[linewidth=0.02,fillstyle=solid,fillcolor=lightgray](1.5,-2.1)(1.8,-1.8)
\psframe[linewidth=0.02,fillstyle=solid,fillcolor=lightgray](1.8,-2.1)(2.1,-1.8)

\psframe[linewidth=0.02](2.1,-2.1)(2.4,-1.8)
\psframe[linewidth=0.02](2.4,-2.1)(2.7,-1.8)
\psframe[linewidth=0.02,fillstyle=solid,fillcolor=darkgray](2.7,-2.1)(3,-1.8)
\psframe[linewidth=0.02,fillstyle=solid,fillcolor=lightgray](3,-2.1)(3.3,-1.8)

\pscircle[fillstyle=solid,fillcolor=black,linewidth=0.0875,linecolor=white](3.8,-1.95){0.14}
\pscircle[fillstyle=solid,fillcolor=black,linewidth=0.0875,linecolor=white](4.3,-1.95){0.14}
\pscircle[fillstyle=solid,fillcolor=black,linewidth=0.0875,linecolor=white](4.8,-1.95){0.14}

\psframe[linewidth=0.02,fillstyle=solid,fillcolor=lightgray](5.3,-2.1)(5.6,-1.8)
\psframe[linewidth=0.02,fillstyle=solid,fillcolor=lightgray](5.6,-2.1)(5.9,-1.8)
\psframe[linewidth=0.02](5.9,-2.1)(6.2,-1.8)
\psframe[linewidth=0.02](6.2,-2.1)(6.5,-1.8)

\psframe[linewidth=0.02,fillstyle=solid,fillcolor=darkgray](6.5,-2.1)(6.8,-1.8)
\psframe[linewidth=0.02,fillstyle=solid,fillcolor=lightgray](6.8,-2.1)(7.1,-1.8)
\psframe[linewidth=0.02,fillstyle=solid,fillcolor=lightgray](7.1,-2.1)(7.4,-1.8)
\psframe[linewidth=0.02,fillstyle=solid,fillcolor=lightgray](7.4,-2.1)(7.7,-1.8)
\psframe[linewidth=0.02,fillstyle=solid,fillcolor=lightgray](7.7,-2.1)(8,-1.8)

\psframe[linewidth=0.02](8,-2.1)(8.3,-1.8)
\psframe[linewidth=0.02](8.3,-2.1)(8.6,-1.8)
\psframe[linewidth=0.02,fillstyle=solid,fillcolor=darkgray](8.6,-2.1)(8.9,-1.8)
\psframe[linewidth=0.02](8.9,-2.1)(9.2,-1.8)
\psframe[linewidth=0.02](9.2,-2.1)(9.5,-1.8)
\psframe[linewidth=0.02](9.5,-2.1)(9.8,-1.8)
\psframe[linewidth=0.02](9.8,-2.1)(10.1,-1.8)

\psframe[linewidth=0.02](10.1,-2.1)(10.4,-1.8)
\psframe[linewidth=0.02](10.7,-2.1)(10.4,-1.8)
\psframe[linewidth=0.02,fillstyle=solid,fillcolor=darkgray](11,-2.1)(10.7,-1.8)

}

\end{pspicture}
\end{center}
\caption{Structure of joint-transmission scheme, $\ntone = 2$, $\nttwo = 1$, $L = 7$ and $T=2$.}
\label{fig:pilot_data_illustration_JTD}
\end{figure*}
Similarly to the single-user case, the receiver guesses which messages have been transmitted using a two-part decoder that consists a channel estimator and  a data detector. The channel estimator first obtains matrix-valued fading estimates $\{ \hat \HRM_{s,k}^{(T)}, k \in \mathcal{D} \}$, $s=1,2$ from the received pilots $\yrv_{k'}$, $k' \in \mathcal{P}$ using the same linear interpolator as \eqref{eq:LMMSEestimation}. From the received codeword $\{\yv_k, k \in \mathcal{D} \}$ and the channel-estimate matrices $\{\hat \HM_{s,k}^{(T)}, k \in \mathcal{D} \}$, $s=1,2$ (which are the realizations of $\{ \hat \HRM_{s,k}^{(T)}, k \in \mathcal{D} \}$, $s=1,2$), the decoder chooses the pair of messages $(\hat m_1, \hat m_2)$ that minimizes the distance metric
\begin{equation}
(\hat m_1, \hat m_2)  =  \arg \min_{(m_1, m_2)} D(m_1,m_2) 
\end{equation}
where 
\begin{equation}
D(m_1,m_2)  \triangleq  \sum_{k \in \mathcal{D}^{(n')} } \left \|\yv_k - \sqrt{\SNR}\, \hat \HM^{(T)}_{1,k} \xv_{1,k} {(m_1)} - \sqrt{\SNR}\: \hat \HM^{(T)}_{2,k}\xv_{2,k} {(m_2)} \right \|^2
 \label{eq:decoding-rule-JTD}
\end{equation}
and where $\mathcal{D}^{(n')}$ is defined in the same way as \eqref{eq:Dnprime-defined}. In the following, we will refer to the above communication scheme as the \emph{joint-transmission scheme}.

\begin{figure*}[t]
\begin{center}
\begin{pspicture}(-4,-3)(12,1.2)

\psframe[linewidth=0.02,fillstyle=solid,fillcolor=darkgray](-2,0.8)(-1.7,1.1)
\rput(-1.2,0.95){\scriptsize Pilot}

\psframe[linewidth=0.02,fillstyle=solid,fillcolor=lightgray](3.7,0.8)(4,1.1)
\rput(4.5,0.95){\scriptsize Data}

\psframe[linewidth=0.02](8.0,0.8)(8.3,1.1)
\rput(9.5,0.95){\scriptsize No transmission}


\rput(-1.5,-1){\scriptsize $L(T-1)$}
\psline[linewidth=0.02,linestyle=dotted,dotsep=1pt]{<->}(-2.1,-0.8)(-0.9,-0.8)

\rput(3,-1){\scriptsize $L$}
\psline[linewidth=0.02,linestyle=dotted,dotsep=1pt]{<->}(2.4,-0.8)(3.6,-0.8)

\rput(4.8,-1){\scriptsize $L(T-1)$}
\psline[linewidth=0.02,linestyle=dotted,dotsep=1pt]{<->}(4.2,-0.8)(5.4,-0.8)

\rput(6,-1.4){\scriptsize $L(T-1)$}
\psline[linewidth=0.02,linestyle=dotted,dotsep=1pt]{<->}(5.4,-1.6)(6.6,-1.6)

\rput(7.2,-2.5){\scriptsize $L$}
\psline[linewidth=0.02,linestyle=dotted,dotsep=1pt]{<->}(6.6,-2.3)(7.8,-2.3)

\rput(10.9,-1.4){\scriptsize $L(T-1)$}
\psline[linewidth=0.02,linestyle=dotted,dotsep=1pt]{<->}(10.3,-1.6)(11.5,-1.6)

\psline[linewidth=0.02,linestyle=dotted,dotsep=1pt](5.4,0.6)(5.4,-3)

\rput(1.65,-3){\scriptsize $\beta n'$} 
\psline[linewidth=0.02,linestyle=dotted,dotsep=1pt]{<->}(-2.1,-2.8)(5.4,-2.8)

\rput(8.45,-3){\scriptsize $(1 - \beta) n'$} 
\psline[linewidth=0.02,linestyle=dotted,dotsep=1pt]{<->}(5.4,-2.8)(11.5,-2.8)

\rput(-4,-0.15){\scriptsize $s=1,$}

{
\rput(-2.7,0.15){\scriptsize $t=1$}

\psframe[linewidth=0.02,fillstyle=solid,fillcolor=darkgray](-2.1,0)(-1.8,0.3)
\psframe[linewidth=0.02](-1.8,0)(-1.5,0.3)
\psframe[linewidth=0.02](-1.5,0)(-1.2,0.3)
\psframe[linewidth=0.02](-1.2,0)(-0.9,0.3)
\psframe[linewidth=0.02,fillstyle=solid,fillcolor=darkgray](-0.9,0)(-0.6,0.3)
\psframe[linewidth=0.02](-0.6,0)(-0.3,0.3)
\psframe[linewidth=0.02,fillstyle=solid,fillcolor=lightgray](-0.3,0)(0,0.3)
\psframe[linewidth=0.02,fillstyle=solid,fillcolor=lightgray](0,0)(0.3,0.3)
\psframe[linewidth=0.02,fillstyle=solid,fillcolor=darkgray](0.3,0)(0.6,0.3)
\psframe[linewidth=0.02](0.6,0)(0.9,0.3)
\psframe[linewidth=0.02,fillstyle=solid,fillcolor=lightgray](0.9,0)(1.2,0.3)
\psframe[linewidth=0.02,fillstyle=solid,fillcolor=lightgray](1.2,0)(1.5,0.3)

\pscircle[fillstyle=solid,fillcolor=black,linewidth=0.0875,linecolor=white](1.75,0.15){0.14}
\pscircle[fillstyle=solid,fillcolor=black,linewidth=0.0875,linecolor=white](2,0.15){0.14}
\pscircle[fillstyle=solid,fillcolor=black,linewidth=0.0875,linecolor=white](2.25,0.15){0.14}

\psframe[linewidth=0.02,fillstyle=solid,fillcolor=darkgray](2.4,0)(2.7,0.3) 
\psframe[linewidth=0.02](2.7,0)(3,0.3)
\psframe[linewidth=0.02,fillstyle=solid,fillcolor=lightgray](3,0)(3.3,0.3)
\psframe[linewidth=0.02,fillstyle=solid,fillcolor=lightgray](3.3,0)(3.6,0.3) 
\psframe[linewidth=0.02,fillstyle=solid,fillcolor=darkgray](3.6,0)(3.9,0.3)
\psframe[linewidth=0.02](3.9,0)(4.2,0.3)
\psframe[linewidth=0.02](4.2,0)(4.5,0.3)
\psframe[linewidth=0.02](4.5,0)(4.8,0.3)
\psframe[linewidth=0.02,fillstyle=solid,fillcolor=darkgray](4.8,0)(5.1,0.3)
\psframe[linewidth=0.02](5.1,0)(5.4,0.3)
\psframe[linewidth=0.02](5.4,0)(5.7,0.3)
\psframe[linewidth=0.02](5.7,0)(6,0.3)
\psframe[linewidth=0.02](6,0)(6.3,0.3)
\psframe[linewidth=0.02](6.3,0)(6.6,0.3)
\psframe[linewidth=0.02](6.6,0)(6.9,0.3)
\psframe[linewidth=0.02](6.9,0)(7.2,0.3)
\psframe[linewidth=0.02](7.2,0)(7.5,0.3)
\psframe[linewidth=0.02](7.5,0)(7.8,0.3)
\psframe[linewidth=0.02](7.8,0)(8.1,0.3)
\psframe[linewidth=0.02](8.1,0)(8.4,0.3)
\psframe[linewidth=0.02](8.4,0)(8.7,0.3)
\psframe[linewidth=0.02](8.7,0)(9,0.3)

\pscircle[fillstyle=solid,fillcolor=black,linewidth=0.0875,linecolor=white](9.25,0.15){0.14}
\pscircle[fillstyle=solid,fillcolor=black,linewidth=0.0875,linecolor=white](9.5,0.15){0.14}
\pscircle[fillstyle=solid,fillcolor=black,linewidth=0.0875,linecolor=white](9.75,0.15){0.14}

\psframe[linewidth=0.02](10.0,0)(10.3,0.3)
\psframe[linewidth=0.02](10.3,0)(10.6,0.3)
\psframe[linewidth=0.02](10.6,0)(10.9,0.3)
\psframe[linewidth=0.02](10.9,0)(11.2,0.3)
\psframe[linewidth=0.02](11.2,0)(11.5,0.3)

}

{
\rput(-2.7,-0.45){\scriptsize $t=2$}

\psframe[linewidth=0.02](-2.1,-0.6)(-1.8,-0.3)
\psframe[linewidth=0.02,fillstyle=solid,fillcolor=darkgray](-1.5,-0.6)(-1.8,-0.3)
\psframe[linewidth=0.02](-1.5,-0.6)(-1.2,-0.3) 
\psframe[linewidth=0.02](-1.2,-0.6)(-0.9,-0.3)
\psframe[linewidth=0.02](-0.9,-0.6)(-0.6,-0.3) 
\psframe[linewidth=0.02,fillstyle=solid,fillcolor=darkgray](-0.6,-0.6)(-0.3,-0.3)
\psframe[linewidth=0.02,fillstyle=solid,fillcolor=lightgray](-0.3,-0.6)(0,-0.3) 
\psframe[linewidth=0.02,fillstyle=solid,fillcolor=lightgray](0,-0.6)(0.3,-0.3)
\psframe[linewidth=0.02](0.3,-0.6)(0.6,-0.3)
\psframe[linewidth=0.02,fillstyle=solid,fillcolor=darkgray](0.6,-0.6)(0.9,-0.3)
\psframe[linewidth=0.02,fillstyle=solid,fillcolor=lightgray](0.9,-0.6)(1.2,-0.3)
\psframe[linewidth=0.02,fillstyle=solid,fillcolor=lightgray](1.2,-0.6)(1.5,-0.3)

\pscircle[fillstyle=solid,fillcolor=black,linewidth=0.0875,linecolor=white](1.75,-0.45){0.14}
\pscircle[fillstyle=solid,fillcolor=black,linewidth=0.0875,linecolor=white](2,-0.45){0.14}
\pscircle[fillstyle=solid,fillcolor=black,linewidth=0.0875,linecolor=white](2.25,-0.45){0.14}

\psframe[linewidth=0.02](2.4,-0.6)(2.7,-0.3)
\psframe[linewidth=0.02,fillstyle=solid,fillcolor=darkgray](2.7,-0.6)(3,-0.3)
\psframe[linewidth=0.02,fillstyle=solid,fillcolor=lightgray](3,-0.6)(3.3,-0.3)
\psframe[linewidth=0.02,fillstyle=solid,fillcolor=lightgray](3.3,-0.6)(3.6,-0.3)
\psframe[linewidth=0.02](3.6,-0.6)(3.9,-0.3)
\psframe[linewidth=0.02,fillstyle=solid,fillcolor=darkgray](3.9,-0.6)(4.2,-0.3)
\psframe[linewidth=0.02](4.2,-0.6)(4.5,-0.3)
\psframe[linewidth=0.02](4.5,-0.6)(4.8,-0.3)
\psframe[linewidth=0.02](4.8,-0.6)(5.1,-0.3)
\psframe[linewidth=0.02,fillstyle=solid,fillcolor=darkgray](5.1,-0.6)(5.4,-0.3)
\psframe[linewidth=0.02](5.4,-0.6)(5.7,-0.3)
\psframe[linewidth=0.02](5.7,-0.6)(6,-0.3)
\psframe[linewidth=0.02](6,-0.6)(6.3,-0.3)
\psframe[linewidth=0.02](6.3,-0.6)(6.6,-0.3)
\psframe[linewidth=0.02](6.6,-0.6)(6.9,-0.3)
\psframe[linewidth=0.02](6.9,-0.6)(7.2,-0.3)
\psframe[linewidth=0.02](7.2,-0.6)(7.5,-0.3)
\psframe[linewidth=0.02](7.5,-0.6)(7.8,-0.3)
\psframe[linewidth=0.02](7.8,-0.6)(8.1,-0.3)
\psframe[linewidth=0.02](8.1,-0.6)(8.4,-0.3)
\psframe[linewidth=0.02](8.4,-0.6)(8.7,-0.3)
\psframe[linewidth=0.02](8.7,-0.6)(9,-0.3)

\pscircle[fillstyle=solid,fillcolor=black,linewidth=0.0875,linecolor=white](9.25,-0.45){0.14}
\pscircle[fillstyle=solid,fillcolor=black,linewidth=0.0875,linecolor=white](9.5,-0.45){0.14}
\pscircle[fillstyle=solid,fillcolor=black,linewidth=0.0875,linecolor=white](9.75,-0.45){0.14}

\psframe[linewidth=0.02](10,-0.6)(10.3,-0.3)
\psframe[linewidth=0.02](10.3,-0.6)(10.6,-0.3)
\psframe[linewidth=0.02](10.6,-0.6)(10.9,-0.3)
\psframe[linewidth=0.02](10.9,-0.6)(11.2,-0.3)
\psframe[linewidth=0.02](11.2,-0.6)(11.5,-0.3)

}

{
\rput(-4,-1.95){\scriptsize $s=2,$}

\rput(-2.7,-1.95){\scriptsize $t=1$}

\psframe[linewidth=0.02](-2.1,-2.1)(-1.8,-1.8)
\psframe[linewidth=0.02](-1.5,-2.1)(-1.8,-1.8)
\psframe[linewidth=0.02](-1.5,-2.1)(-1.2,-1.8) 
\psframe[linewidth=0.02](-1.2,-2.1)(-0.9,-1.8)
\psframe[linewidth=0.02](-0.9,-2.1)(-0.6,-1.8) 
\psframe[linewidth=0.02](-0.6,-2.1)(-0.3,-1.8)
\psframe[linewidth=0.02](-0.3,-2.1)(0,-1.8) 

\psframe[linewidth=0.02](0,-2.1)(0.3,-1.8)
\psframe[linewidth=0.02](0.3,-2.1)(0.6,-1.8)
\psframe[linewidth=0.02](0.6,-2.1)(0.9,-1.8)
\psframe[linewidth=0.02](0.9,-2.1)(1.2,-1.8)
\psframe[linewidth=0.02](1.2,-2.1)(1.5,-1.8)

\pscircle[fillstyle=solid,fillcolor=black,linewidth=0.0875,linecolor=white](1.75,-1.95){0.14}
\pscircle[fillstyle=solid,fillcolor=black,linewidth=0.0875,linecolor=white](2,-1.95){0.14}
\pscircle[fillstyle=solid,fillcolor=black,linewidth=0.0875,linecolor=white](2.25,-1.95){0.14}

\psframe[linewidth=0.02](2.4,-2.1)(2.7,-1.8)
\psframe[linewidth=0.02](2.7,-2.1)(3,-1.8)
\psframe[linewidth=0.02](3,-2.1)(3.3,-1.8)

\psframe[linewidth=0.02](3.3,-2.1)(3.6,-1.8)
\psframe[linewidth=0.02](3.6,-2.1)(3.9,-1.8)
\psframe[linewidth=0.02](3.9,-2.1)(4.2,-1.8)
\psframe[linewidth=0.02](4.2,-2.1)(4.5,-1.8)
\psframe[linewidth=0.02](4.5,-2.1)(4.8,-1.8)
\psframe[linewidth=0.02](4.8,-2.1)(5.1,-1.8)
\psframe[linewidth=0.02](5.1,-2.1)(5.4,-1.8)
\psframe[linewidth=0.02,fillstyle=solid,fillcolor=darkgray](5.4,-2.1)(5.7,-1.8)
\psframe[linewidth=0.02](5.7,-2.1)(6,-1.8)

\psframe[linewidth=0.02](6,-2.1)(6.3,-1.8)
\psframe[linewidth=0.02](6.3,-2.1)(6.6,-1.8)
\psframe[linewidth=0.02,fillstyle=solid,fillcolor=darkgray](6.6,-2.1)(6.9,-1.8)
\psframe[linewidth=0.02,fillstyle=solid,fillcolor=lightgray](6.9,-2.1)(7.2,-1.8)
\psframe[linewidth=0.02,fillstyle=solid,fillcolor=lightgray](7.2,-2.1)(7.5,-1.8)
\psframe[linewidth=0.02,fillstyle=solid,fillcolor=lightgray](7.5,-2.1)(7.8,-1.8)
\psframe[linewidth=0.02,fillstyle=solid,fillcolor=darkgray](7.8,-2.1)(8.1,-1.8)
\psframe[linewidth=0.02,fillstyle=solid,fillcolor=lightgray](8.1,-2.1)(8.4,-1.8)
\psframe[linewidth=0.02,fillstyle=solid,fillcolor=lightgray](8.4,-2.1)(8.7,-1.8)
\psframe[linewidth=0.02,fillstyle=solid,fillcolor=lightgray](8.7,-2.1)(9,-1.8)

\pscircle[fillstyle=solid,fillcolor=black,linewidth=0.0875,linecolor=white](9.25,-1.95){0.14}
\pscircle[fillstyle=solid,fillcolor=black,linewidth=0.0875,linecolor=white](9.5,-1.95){0.14}
\pscircle[fillstyle=solid,fillcolor=black,linewidth=0.0875,linecolor=white](9.75,-1.95){0.14}

\psframe[linewidth=0.02,fillstyle=solid,fillcolor=darkgray](10,-2.1)(10.3,-1.8)
\psframe[linewidth=0.02](10.3,-2.1)(10.6,-1.8)
\psframe[linewidth=0.02](10.6,-2.1)(10.9,-1.8)
\psframe[linewidth=0.02](10.9,-2.1)(11.2,-1.8)
\psframe[linewidth=0.02,fillstyle=solid,fillcolor=darkgray](11.2,-2.1)(11.5,-1.8)

}

\end{pspicture}
\end{center}
\caption{Structure of TDMA scheme, $\ntone = 2$, $\nttwo = 1$, $L = 4$ and $T=2$.}
\label{fig:pilot_data_illustration_TDMA}
\end{figure*}
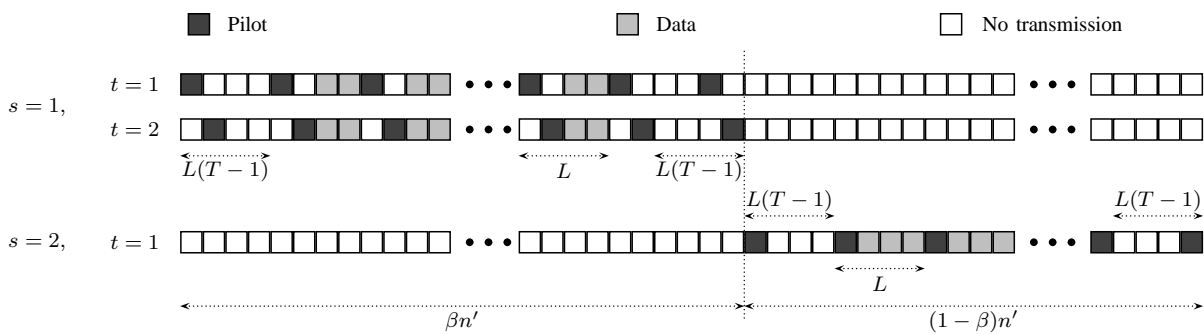

We shall compare the joint-transmission scheme with a time-division multiple-access (TDMA) scheme, where each user transmits its message using the transmission scheme illustrated in Fig.~\ref{fig:pilot_data_illustration_TDMA}. Specifically, during the first $\beta n'$ channel uses (for some $0\leq \beta \leq 1$), User 1 transmits its codeword according to the transmission scheme given in Section \ref{sec:model} (see also Fig.~\ref{fig:pilot_data_illustration_TDMA}), while User 2 is silent. (Here $n'$ is given in \eqref{eq:total_length_mac}.) Then, during the next $(1 - \beta)n'$ channel uses, User 2 transmits its codeword according to the same transmission scheme, while User 1 is silent. In both cases, the receiver guesses the corresponding message $m_s$, $s=1,2$ using a nearest neighbor decoder and pilot-aided channel estimation.

\subsection{The MAC Pre-Log}

Let $R_1^*(\SNR)$, $R_2^*(\SNR)$ and $R_{1 + 2}^* (\SNR)$ be the maximum achievable rate for User 1, the maximum achievable rate for User 2 and the maximum achievable sum-rate, respectively. The achievable-rate region is given by the closure of the convex hull of the set \cite{cover_elements_inf_theory}
\begin{align}
\mathcal{R}  =  \Big \{ R_1(\SNR),R_2 (\SNR) \colon  &R_1(\SNR) < R_1^* (\SNR),\nonumber\\
& R_2 (\SNR) < R_2^* (\SNR),\nonumber\\
& R_1 (\SNR) + R_2 (\SNR) < R_{1+2}^* (\SNR) \Big\}.
\end{align}
We are interested in the pre-logs of $R_1(\SNR)$ and $R_2(\SNR)$, defined as the limiting ratios of $R_1 (\SNR)$ and $R_2 (\SNR)$ to the logarithm of the SNR as the SNR tends to infinity.  Thus, the pre-log region is given by the closure of the convex hull of the set
\begin{align}
\Pi_{\mathcal{R}}  =   \Big\{\Pi_{R_1},\Pi_{R_2}\colon & \Pi_{R_1} < \Pi_{R^*_1},\nonumber\\
& \Pi_{R_2} < \Pi_{R^*_2},  \nonumber \\
& \Pi_{R_1}+\Pi_{R_2}<\Pi_{R^*_{1+2}}\Big\}
\end{align}
where
\begin{align}
 \Pi_{R^*_1}  &\triangleq  \limsup_{\SNR \rightarrow \infty} ~ \frac{R^*_1(\SNR)}{\log \SNR}, \label{def:gmi-pre-log-constraint-R-1} \\ 
 \Pi_{R^*_2}  &\triangleq  \limsup_{\SNR \rightarrow \infty} ~ \frac{R^*_2(\SNR)}{\log \SNR}, \label{def:gmi-pre-log-constraint-R-2} \\ 
 \Pi_{R^*_{1+2}} &\triangleq \limsup_{\SNR \rightarrow \infty} ~ \frac{R^*_{1+2} (\SNR)}{\log \SNR}. \label{def:gmi-pre-log-constraint-R-1-2} \IEEEeqnarraynumspace
\end{align}
The capacity pre-logs $\Pi_{C_1}$, $\Pi_{C_2}$ and $\Pi_{C_{1+2}}$ are defined in the same way but with $R_1^* (\SNR)$, $R_2^* (\SNR)$ and $R_{1+2}^* (\SNR)$ replaced by the respective capacities $C_1 (\SNR)$, $C_2 (\SNR)$ and \linebreak $C_{1+2} (\SNR)$.

We next present our result on the pre-log region of the two-user MIMO fading MAC achievable with the joint-transmission scheme.

\begin{theorem}
\label{th:JTD_pre-log}
 Consider the MIMO fading MAC model \eqref{eq:channel}. Then, the pre-log region achievable with the joint-transmission scheme is the closure of the convex hull of the set
 \begin{align}
 \Bigg  \{    \Pi_{R_1}, \Pi_{R_2}   \colon & \,\, \Pi_{R_1}  <  \min \left( \nr, \ntone \right) \left(1 - \frac{\ntone + \nttwo}{L^*} \right), \nonumber \\
& \,\,\Pi_{R_2}  <  \min \left( \nr, \nttwo \right) \left(1 - \frac{\ntone + \nttwo}{L^*} \right), \nonumber \\
& \,\,\Pi_{R_1} + \Pi_{R_2}   <  \min \left( \nr, \ntone + \nttwo \right) \left(1 - \frac{\ntone + \nttwo}{L^*} \right) \Bigg \} 
\label{eq:thJTD_pre-log}
\end{align}
where $L^* = \left \lfloor \frac{1}{2\lambda_D} \right \rfloor$. 
\end{theorem}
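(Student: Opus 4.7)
The plan is to prove Theorem~\ref{th:JTD_pre-log} by extending the nearest-neighbor/GMI analysis that establishes Theorem~\ref{th:pre-log-MIMO} to the two-user setting. For i.i.d.\ complex-Gaussian codebooks $\mathcal{C}_1,\mathcal{C}_2$ and the joint decoder \eqref{eq:decoding-rule-JTD}, I would analyze the three pairwise-error events: (i) $(\hat m_1,m_2)$ with $\hat m_1\neq m_1$, (ii) $(m_1,\hat m_2)$ with $\hat m_2\neq m_2$, and (iii) $(\hat m_1,\hat m_2)$ with both indices wrong. By a union bound over all wrong message pairs, vanishing error probability at the rate pair $(R_1,R_2)$ follows whenever the three GMI-type constraints associated with (i)--(iii) are satisfied, and the region \eqref{eq:thJTD_pre-log} is then obtained by time-sharing and taking the closed convex hull.

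First I would reuse the channel-estimation analysis of Section~\ref{sec:model}: because the pilot vectors of the two users occupy disjoint time slots and the $(\ntone+\nttwo)\cdot\nr$ scalar fading processes are mutually independent, the LMMSE interpolator for each coefficient $H_{s,k}(r,t)$ has the form \eqref{eq:LMMSEestimation} with coefficients chosen per user. Since a full training period of length $L$ now contains $\ntone+\nttwo$ pilots, the two-user analog of \eqref{eq:fadingestimate_error} (with $\nt$ replaced by $\ntone+\nttwo$) vanishes as $1/\SNR$ whenever $L\leq L^*=\lfloor 1/(2\lambda_D)\rfloor$, placing both users in the ``nearly perfect side information'' regime of \cite{IEEE:lapidoth:fadingchannels_howperfect}.

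Next I would expand the channel output along the true codewords as $\yv_k=\sqrt{\SNR}\,\hat\HM^{(T)}_{1,k}\xv_{1,k}+\sqrt{\SNR}\,\hat\HM^{(T)}_{2,k}\xv_{2,k}+\sqrt{\SNR}\,\EM^{(T)}_{1,k}\xv_{1,k}+\sqrt{\SNR}\,\EM^{(T)}_{2,k}\xv_{2,k}+\zv_k$, where $\EM^{(T)}_{s,k}\triangleq\HM_{s,k}-\hat\HM^{(T)}_{s,k}$, so that in the mismatched metric $D(m_1,m_2)$ the three error-dependent summands act as effective noise whose variance vanishes at high SNR. For event~(i), conditioning on the correct $m_2$ absorbs User~2's contribution into that effective noise and reduces the computation to a mismatched point-to-point analysis with $\ntone$ Gaussian-coded transmit antennas and $\nr$ receive antennas, whose coherent pre-log is $\min(\nr,\ntone)$; multiplying by the data duty cycle $(L-\ntone-\nttwo)/L$ and taking $L=L^*$ yields $\Pi_{R_1}<\min(\nr,\ntone)(1-(\ntone+\nttwo)/L^*)$. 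Event~(ii) is handled symmetrically. For event~(iii) I would stack $\xv_{1,k}$ and $\xv_{2,k}$ into a single $(\ntone+\nttwo)$-dimensional Gaussian codeword driving the composite channel matrix $[\hat\HM^{(T)}_{1,k},\hat\HM^{(T)}_{2,k}]$, whose coherent pre-log is $\min(\nr,\ntone+\nttwo)$, and repeat the argument to obtain the sum-rate constraint.

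The main obstacle, as in Theorem~\ref{th:pre-log-MIMO}, is controlling the mismatched decoder's error exponent in the presence of two simultaneously active estimation-error processes. Concretely, one must show that inside $D(\tilde m_1,\tilde m_2)-D(m_1,m_2)$ the quadratic and cross-terms involving $\xrv_{1,k}$, $\xrv_{2,k}$, $\EM^{(T)}_{1,k}$, and $\EM^{(T)}_{2,k}$ concentrate sharply enough---uniformly in $T\to\infty$---so that the GMI lower bound recovers the full coherent pre-log in each of the three regimes, rather than being degraded by the residual estimation errors. This is achieved along the lines of the point-to-point proof in Section~\ref{sec:proof-theorem-1}, exploiting the independence of $\{\HRM_{1,k}\}$, $\{\HRM_{2,k}\}$, $\mathcal{C}_1$, and $\mathcal{C}_2$. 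Once the three GMI bounds are in place, taking the closed convex hull delivers \eqref{eq:thJTD_pre-log}.
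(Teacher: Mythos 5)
Your proposal is correct and follows essentially the same route as the paper: a union bound over the three error events, a GMI analysis in which each single-user error event reduces (after subtracting the other user's known estimated signal) to a mismatched point-to-point problem with $\ntone$ or $\nttwo$ transmit antennas, a stacked $\nr\times(\ntone+\nttwo)$ point-to-point analysis for the joint error event, and the same LMMSE interpolation-error argument with $(\ntone+\nttwo)$ pilots per period followed by taking the closure of the convex hull. The only trivial discrepancy is in the constant of the interpolation-error variance (under the MAC normalization \eqref{eq:channel} the paper replaces $\SNR$ by $\nt\SNR$ in \eqref{eq:fadingestimate_error} rather than replacing $\nt$ by $\ntone+\nttwo$), which does not affect the $O(1/\SNR)$ decay or the resulting pre-logs.
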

\begin{proof}
 See Section \ref{sec:proof-mac-pre-log}.
\end{proof}

The pre-log region given in Theorem \ref{th:JTD_pre-log} is the largest region achievable with any transmission scheme that uses $(\ntone+\nttwo)/L^*$ of the time for transmitting pilot symbols. Indeed, even if the channel estimator would be able to estimate the fading coefficients perfectly, and even if we could decode the data symbols using a maximum-likelihood decoder, the capacity pre-log region (without pilot transmission) would be given by the closure of the convex hull of the set \cite{Bell_foschini_layered_space-time,telatar_multiantenna_Gaussian,cover_elements_inf_theory}
\begin{align}
 \Big\{(\Pi_{R_1},\Pi_{R_2})\colon &\Pi_{R_1} < \min(\nr, \ntone) \nonumber\\
& \Pi_{R_2} < \min(\nr, \nttwo) \nonumber\\
 & \Pi_{R_1}+\Pi_{R_2} < \min(\nr, \ntone + \nttwo) \Big\}\IEEEeqnarraynumspace
\end{align}
which, after multiplying by $1-(\ntone+\nttwo)/L^*$ in order to account for the pilot symbols, becomes \eqref{eq:thJTD_pre-log}. Thus, in order to improve upon \eqref{eq:thJTD_pre-log}, one would need to design a transmission scheme that employs less than $(\ntone+\nttwo)/L^*$ pilot symbols per channel use.

\begin{remark}[TDMA Pre-Log]
\label{rmrk:TDMA_pre-log}
Consider the MIMO fading MAC model \eqref{eq:channel}. Then, the pre-log region achievable with the TDMA scheme employing nearest neighbor decoding and pilot-aided channel estimation is the closure of the convex hull of the set
\begin{align}
 \Bigg  \{   \Pi_{R_1}, \Pi_{R_2}  \colon & \,\,\Pi_{R_1}  <  \beta \min  \left( \nr, \ntone \right) \left(1 - \frac{\ntone}{L^*} \right), \nonumber \\
 & \,\,\Pi_{R_2}  <  ( 1 - \beta) \min  \left( \nr, \nttwo \right) \left(1 - \frac{\nttwo}{L^*} \right), 0 \leq \beta \leq 1  \Bigg \} \IEEEeqnarraynumspace 
\end{align}
where $L^*= \left \lfloor \frac{1}{2\lambda_D} \right \rfloor$. This follows directly from the pre-log of the point-to-point MIMO fading channel (Theorem \ref{th:pre-log-MIMO}) where the number of transmit antennas from Users 1 and 2 is given by $\ntone$ and $\nttwo$, respectively.
\end{remark}

Note that the sum of the pre-logs $\Pi_{R_1}+\Pi_{R_2}$ is upper-bounded by the capacity pre-log of the point-to-point MIMO fading channel with $(\ntone + \nttwo)$ transmit antennas and $\nr$ receive antennas, since the point-to-point MIMO channel allows for cooperation between the transmitting terminals. While the capacity pre-log of point-to-point MIMO fading channels remains an open problem, the capacity pre-log of point-to-point MISO fading channels is known, cf.~\eqref{eq:MISO-capacity-prelog-BL}. It thus follows from \eqref{eq:MISO-capacity-prelog-BL} that, for $\nr=\ntone=\nttwo=1$, we have
\begin{equation}
\Pi_{R_1}+\Pi_{R_2} \leq \Pi_{C_{1+2}}  = 1-2\lambda_D
\end{equation}
which together with the single-user constraints
\begin{align}
\Pi_{R_1} & \leq  \Pi_{C_1} = 1-2\lambda_D \\
\Pi_{R_2} & \leq  \Pi_{C_2} = 1-2\lambda_D
\end{align}
implies that TDMA achieves the capacity pre-log region of the SISO fading MAC. The next section provides a more detailed comparison between the joint-transmission scheme and TDMA.

\subsection{Joint Transmission versus TDMA}

In this section, we discuss how the joint-transmission scheme performs compared to TDMA. To this end, we compare the sum-rate pre-log $\Pi_{R^*_{1+2}}$ of the joint-transmission scheme (Theorem~\ref{th:JTD_pre-log}) with the sum-rate pre-log of the TDMA scheme employing nearest neighbor decoding and pilot-aided channel estimation (Remark~\ref{rmrk:TDMA_pre-log}) as well as with the sum-rate pre-log of the coherent TDMA scheme, where the receiver has knowledge of the realizations of the fading processes $\{\HRM_{s,k},\,k\in\integ\}$, $s=1,2$. In the latter case, the sum-rate pre-log is given by
\begin{equation}
\label{eq:TDMAcoh}
\Pi_{R^*_{1+2}} = \beta \min(\nr,\ntone) + (1-\beta) \min(\nr,\nttwo).
\end{equation}
The following corollary presents a sufficient condition on $L^*$ under which the sum-rate pre-log of the joint-transmission scheme is strictly larger than that of the coherent TDMA scheme \eqref{eq:TDMAcoh}, as well as a sufficient condition on $L^*$ under which the sum-rate pre-log of the joint-transmission scheme is strictly smaller than the sum-rate pre-log of the TDMA scheme given in Remark~\ref{rmrk:TDMA_pre-log}. Since \eqref{eq:TDMAcoh} is an upper bound on the sum-rate pre-log of any TDMA scheme over the MIMO fading MAC \eqref{eq:channel}, and since the sum-rate pre-log given in Remark~\ref{rmrk:TDMA_pre-log} is a lower bound on the sum-rate pre-log of the best TDMA scheme, it follows that the sufficient conditions presented in Corollary~\ref{cor:JTvsTDMA} hold also for the best TDMA scheme.

\begin{corollary}
\label{cor:JTvsTDMA}
Consider the MIMO fading MAC model \eqref{eq:channel}. The joint-transmission scheme achieves a larger sum-rate pre-log than any TDMA scheme if
\begin{equation}
\label{eq:corJT}
L^* > \frac{\min(\nr,\ntone+\nttwo)(\ntone+\nttwo)}{\min(\nr,\ntone+\nttwo)-\min (\nr,\max(\ntone,\nttwo))}
\end{equation}
where we define $a/0\triangleq \infty$ for every $a>0$. Conversely, the best TDMA scheme achieves a larger sum-rate pre-log than the joint-transmission scheme if
\begin{align}
L^* < \,& \frac{\min(\nr,\ntone+\nttwo)(\ntone+\nttwo)}{\min(\nr,\ntone+\nttwo)-\min(\nr,\ntone,\nttwo)}  \nonumber \\
  &  - \frac{\min(\ntone\nr,\ntone^2,\nttwo\nr,\nttwo^2)}{\min(\nr,\ntone+\nttwo)-\min(\nr,\ntone,\nttwo)}.  \label{eq:corTDMA}
\end{align}
\end{corollary}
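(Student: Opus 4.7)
The plan is to bound the joint-transmission sum-rate pre-log from Theorem~\ref{th:JTD_pre-log}, namely $\Pi^{\rm JT}\triangleq\min(\nr,\ntone+\nttwo)\bigl(1-(\ntone+\nttwo)/L^*\bigr)$, against two different TDMA sum-rate expressions whose maxima over $\beta\in[0,1]$ are easy to evaluate. The first is the coherent TDMA sum-rate $\beta\min(\nr,\ntone)+(1-\beta)\min(\nr,\nttwo)$ of~\eqref{eq:TDMAcoh}, which upper-bounds the sum-rate pre-log of \emph{any} TDMA scheme over~\eqref{eq:channel} because the receiver is given perfect fading knowledge. The second is the non-coherent TDMA sum-rate $\beta\min(\nr,\ntone)(1-\ntone/L^*)+(1-\beta)\min(\nr,\nttwo)(1-\nttwo/L^*)$ delivered by Remark~\ref{rmrk:TDMA_pre-log}, which is a lower bound on the best TDMA sum-rate. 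Both expressions are affine in $\beta$, so their maxima over $[0,1]$ are attained at an endpoint.

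For~\eqref{eq:corJT}, I would maximise the coherent expression. Its value is $\max(\min(\nr,\ntone),\min(\nr,\nttwo))=\min(\nr,\max(\ntone,\nttwo))$, so requiring $\Pi^{\rm JT}>\min(\nr,\max(\ntone,\nttwo))$ is sufficient for JT to beat every TDMA scheme. Solving this inequality for $L^*$ yields $L^*>\min(\nr,\ntone+\nttwo)(\ntone+\nttwo)/\bigl[\min(\nr,\ntone+\nttwo)-\min(\nr,\max(\ntone,\nttwo))\bigr]$, which is exactly~\eqref{eq:corJT}; the convention $a/0\triangleq\infty$ takes care of the degenerate case $\min(\nr,\ntone+\nttwo)=\min(\nr,\max(\ntone,\nttwo))$, in which JT can never exceed the coherent upper bound.

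For~\eqref{eq:corTDMA}, I would introduce $f(x)\triangleq\min(\nr,x)(1-x/L^*)$ so that $\Pi^{\rm JT}=f(\ntone+\nttwo)$ and the two endpoints of the Remark~\ref{rmrk:TDMA_pre-log} bound are $f(\ntone)$ and $f(\nttwo)$. Assume without loss of generality $\ntone\leq\nttwo$. Evaluating the TDMA bound at the ``smaller-user'' endpoint gives the lower bound $f(\ntone)$ on the best TDMA sum-rate, and $f(\ntone)>f(\ntone+\nttwo)$ is then sufficient for TDMA to beat JT. Clearing denominators gives $L^*<\bigl[\min(\nr,\ntone+\nttwo)(\ntone+\nttwo)-\ntone\min(\nr,\ntone)\bigr]/\bigl[\min(\nr,\ntone+\nttwo)-\min(\nr,\ntone)\bigr]$. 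Finally, since the maps $x\mapsto\min(\nr,x)$ and $x\mapsto x\min(\nr,x)$ are both non-decreasing on $[0,\infty)$, the assumption $\ntone\leq\nttwo$ yields the identities $\min(\nr,\ntone)=\min(\nr,\ntone,\nttwo)$ and $\ntone\min(\nr,\ntone)=\min(\ntone\nr,\ntone^2,\nttwo\nr,\nttwo^2)$; substituting them converts the bound into~\eqref{eq:corTDMA}.

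The argument is essentially algebra; the only mildly non-routine point is recognising that the symmetric combinatorial mins in~\eqref{eq:corTDMA} arise precisely from evaluating the Remark~\ref{rmrk:TDMA_pre-log} bound at the smaller-user endpoint (rather than at a balanced $\beta$), and that the larger-user endpoint $f(\nttwo)>f(\ntone+\nttwo)$ would give a different and generally incomparable sufficient condition. A sanity check in the symmetric case $\ntone=\nttwo=n$ with $\nr\geq 2n$ reduces \eqref{eq:corJT} and \eqref{eq:corTDMA} to $L^*>4n$ and $L^*<3n$ respectively, confirming a genuine gap $3n\leq L^*\leq 4n$ in which neither sufficient condition is active.
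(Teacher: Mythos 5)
Your proposal is correct and follows essentially the same route the paper intends: it compares the Theorem~\ref{th:JTD_pre-log} sum-rate pre-log $\min(\nr,\ntone+\nttwo)\bigl(1-(\ntone+\nttwo)/L^*\bigr)$ against the coherent TDMA upper bound \eqref{eq:TDMAcoh} (maximized over $\beta$ at an endpoint, giving $\min(\nr,\max(\ntone,\nttwo))$) for \eqref{eq:corJT}, and against the Remark~\ref{rmrk:TDMA_pre-log} lower bound evaluated at the smaller-user endpoint for \eqref{eq:corTDMA}, which is exactly the argument the paper sketches in the paragraph preceding the corollary and leaves as algebra. Your algebra and the symmetric rewriting via $\min(\nr,\ntone,\nttwo)$ and $\min(\ntone\nr,\ntone^2,\nttwo\nr,\nttwo^2)$ check out, so nothing is missing.
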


Recall that $L^*$ is inversely proportional to the bandwidth of the power spectral density $f_H(\cdot)$, which in turn is inversely proportional to the coherence time of the fading channel. Corollary~\ref{cor:JTvsTDMA} thus demonstrates that the joint-transmission scheme tends to be superior to TDMA when the coherence time of the channel is large. In contrast, TDMA is superior to the joint-transmission scheme when the coherence time of the channel is small. Intuitively, this can be explained by observing that, compared to TDMA, the joint-transmission scheme uses the multiple antennas at the transmitters and at the receiver more efficiently, but requires more pilot symbols to estimate the fading coefficients. Thus, when the coherence time is large, the number of pilot symbols required to estimate the fading is small, so the gain in achievable rate by using the antennas more efficiently dominates the loss incurred by requiring more pilot symbols. On the other hand, when the coherence time is small, the number of pilot symbols required to estimate the fading is large and the loss in achievable rate incurred by requiring more pilot symbols dominates the gain by using the antennas more efficiently.

We next evaluate \eqref{eq:corJT} and \eqref{eq:corTDMA} for some particular values of $\nr$, $\ntone$, and $\nttwo$.

\subsubsection{Receiver employs less antennas than transmitters} 

Suppose that $\nr\leq\min(\ntone,\nttwo)$. Then, the right-hand sides (RHSs) of \eqref{eq:corJT} and \eqref{eq:corTDMA} become $\infty$, so every finite $L^*$ satisfies \eqref{eq:corTDMA}. Thus, if the number of receive antennas is smaller than the number of transmit antennas, then, irrespective of $L^*$, TDMA is superior to the joint-transmission scheme.

\subsubsection{Receiver employs more antennas than transmitters}
Suppose that $\nr\geq\ntone+\nttwo$, and suppose that $\ntone=\nttwo=\nt$. Then, \eqref{eq:corJT} and \eqref{eq:corTDMA} become
\begin{equation}
\label{eq:JT_SIMO}
L^* > 4\nt
\end{equation}
and
\begin{equation}
\label{eq:TDMA_SIMO}
L^* < 3\nt.
\end{equation}
Thus, if $L^*$ is greater than $4\nt$, then the joint-transmission scheme is superior to TDMA. In contrast, if $L^*$ is smaller than $3\nt$, then TDMA is superior. This is illustrated in Fig.~\ref{fig:simo-mac-plot} for the case where $\nr=2$ and $\ntone=\nttwo=1$. Note that if $L^*$ is between $3\nt$ and $4\nt$, then the joint-transmission scheme is superior to the TDMA scheme presented in Remark \ref{rmrk:TDMA_pre-log}, but it may be inferior to the best TDMA scheme.

\begin{figure*}[t]
\mbox{
\begin{psfrags}
 \psfrag{a}{\tiny $1- \frac{1}{L^*}$}
\psfrag{b}{\tiny $1 - \frac{2}{L^*}$}
\psfrag{c}{\tiny $\Pi_{R_1}$}
\psfrag{d}{\tiny $\Pi_{R_2}$}
\psfrag{e}{\tiny $1$}
\subfloat[$ L^* < 3 $] {\includegraphics[width=0.35\textwidth]{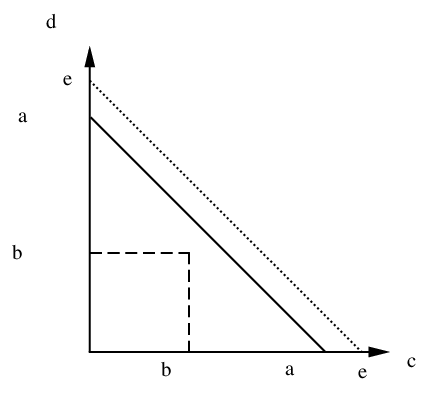}
\label{subfig:simo-mac-alpha-p3-to-p2} }
\end{psfrags}
\begin{psfrags}
\psfrag{a}{\tiny $1-\frac{1}{L^*}$}
\psfrag{b}{\tiny $1 - \frac{2}{L^*}$}
\psfrag{c}{\tiny $\Pi_{R_1}$}
\psfrag{d}{\tiny $\Pi_{R_2}$}
\psfrag{e}{\tiny $1$}
 \subfloat[$ L^* > 4$]{ \includegraphics[width=0.35\textwidth]{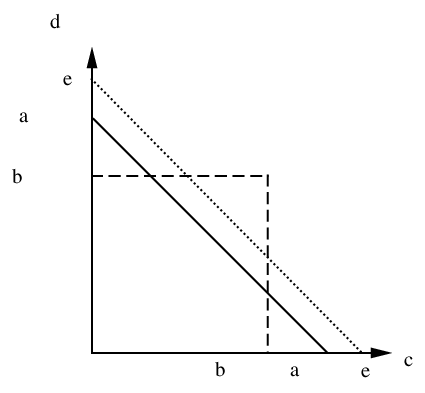}
\label{subfig:simo-mac-alpha-p4}
} 
\end{psfrags}\hspace{-25mm}
\subfloat{\includegraphics[width=0.4\textwidth]{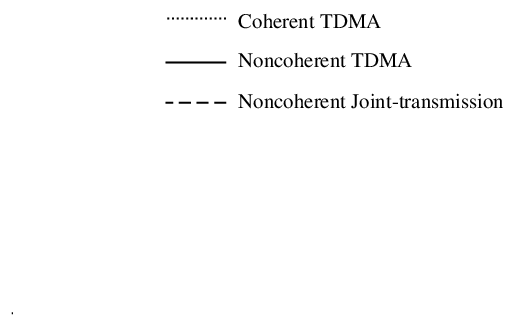}}}
\caption{Pre-log regions for a fading MAC with $\nr=2$ and $\ntone=\nttwo=1$ for different values of $L^*$. Depicted are the pre-log region for the joint-transmission scheme as given in Theorem~\ref{th:JTD_pre-log} (dashed line), the pre-log region of the TDMA scheme as given in Remark~\ref{rmrk:TDMA_pre-log} (solid line), and the pre-log region of the coherent TDMA scheme \eqref{eq:TDMAcoh} (dotted line).}
\label{fig:simo-mac-plot}
\end{figure*}

\subsubsection{A case in between}
Suppose that $\nr\leq\ntone+\nttwo$ and $\nttwo<\nr\leq\ntone$. Then, \eqref{eq:corJT} becomes
\begin{equation}
L^* > \infty
\end{equation}
and \eqref{eq:corTDMA} becomes
\begin{equation}
L^* < \nttwo+\frac{\nr\ntone}{\nr-\nttwo}.
\end{equation}
Thus, in this case the joint-transmission scheme is always inferior to the coherent TDMA scheme \eqref{eq:TDMAcoh}, but it can be superior to the TDMA scheme in Remark \ref{rmrk:TDMA_pre-log}.

\subsection{Typical Values of $L^*$}
We briefly discuss the range of values of $L^*$ that may occur in practical scenarios. To this end, we first recall that $L^*\leq \lfloor 1/(2\lambda_D) \rfloor$, and that $\lambda_D$ is the bandwidth of the fading power spectral density $f_H(\cdot)$, which can be associated with the Doppler spread of the channel as \cite{IEEE:etkin:degreeofffreedomMIMO}
\begin{equation}
\lambda_D = \frac{f_m}{W_c}.
\end{equation}
Here $f_m$ is the maximum Doppler shift given by
\begin{equation}
f_m = \frac{v}{c}f_c
\end{equation}
where $v$ is the speed of the mobile device, $c = 3\cdot10^8$ m/s is the speed of light and $f_c$ is the carrier frequency; and $W_c$ is the coherence bandwidth of the channel approximated as \cite{rappaport_wireless-comm,IEEE:etkin:degreeofffreedomMIMO}
\begin{equation}
W_c \approx \frac{1}{5\sigma_\tau}
\end{equation}
where $\sigma_\tau$ is the delay spread. Following the order of magnitude computations of Etkin and Tse \cite{IEEE:etkin:degreeofffreedomMIMO}, we determine typical values of $\lambda_D$ for indoor, urban, and hilly area environments and for carrier frequencies ranging from 800 MHz to 5 GHz and tabulate the results in Table \ref{table:typical_values_L}. 

\begin{table}[t]

\begin{center}
{\small 
\begin{tabular}{lcccccc}

\toprule
Environment 	& Delay spread $\sigma_\tau$ & Mobile speed $v$ 	& $\lambda_D \approx 5 \sigma_\tau \frac{v}{c} f_c$					&$L^*$\\
\midrule
Indoor 		& 10 -- 100 ns	& 5 km/h	&  $2\cdot 10^{-7}$ -- $10^{-5}$						&$5\cdot10^4$ -- $2.5\cdot10^{6}$	\\
Urban		& 1 -- 2 $\mu$s	& 5 km/h 	& $2\cdot 10^{-5}$ -- $2\cdot 10^{-4}$						& $2.5 \cdot 10^3$ -- $2.5 \cdot 10^4$	\\
Urban		& 1 -- 2 $\mu$s	& 75 km/h	& $2\cdot 10^{-4}$ -- $0.004$						& 125 -- $2.5 \cdot 10^3$	\\
Hilly area		& 3 -- 10 $\mu$s	& 200 km/h 	& $0.002$ -- $0.05$ & 10 -- 250 \\
\bottomrule
\end{tabular}
}
\end{center}
\caption{Typical values of $L^*$ for various environments with $f_c$ ranging from 800 MHz to 5 GHz. The values of the delay spread are taken from \cite{rappaport_wireless-comm,IEEE:etkin:degreeofffreedomMIMO} for indoor and urban environments and from \cite{saunders_propagation_wireless_comm} for hilly area environments.}
\label{table:typical_values_L}
\end{table}

For indoor environments and mobile speeds of 5 km/h, we have that $L^*$ is typically larger than $5 \cdot 10^4$. For urban environments, $L^*$ is typically larger than $2.5\cdot 10^3$ for mobile speeds of 5 km/h and larger than $125$ for mobile speeds of 75 km/h. For hilly area environments and mobile speeds of 200 km/h, $L^*$ ranges typically from $10$ to $250$. Thus, for most practical scenarios, $L^*$ is typically large. It therefore follows that, if $\nr\geq\ntone+\nttwo$, the condition \eqref{eq:corJT} is satisfied unless $\ntone+\nttwo$ is very large. For example, if the receiver employs more antennas than the transmitters, and if $\ntone=\nttwo=\nt$, then $L^* > 4\nt$ is satisfied even for urban environments and mobile speeds of 75 km/h, as long as $\nt<30$. Only for hilly area environments and mobile speeds of 200 km/h, this condition may not be satisfied for a practical number of transmit antennas. Thus, if the number of antennas at the receiver is sufficiently large, then the joint-transmission scheme is superior to TDMA in most practical scenarios. On the other hand, if $\nr\leq\min(\ntone,\nttwo)$, then TDMA is always superior to the joint-transmission scheme, irrespective of how large $L^*$ is. This suggests that one should use more antennas at the receiver than at the transmitters.

\section{Proof of Theorem \ref{th:pre-log-MIMO}}
\label{sec:proof-theorem-1}

Theorem \ref{th:pre-log-MIMO} is proven as follows. We first characterize the estimation error from the linear interpolator \eqref{eq:LMMSEestimation}. We then compute the rates achievable with the communication scheme described in Section \ref{sec:model}. Finally, we analyze the pre-log corresponding to these rates.

\subsection{Linear Interpolator}
\label{subsec:channelestimator}

We first note that the estimate of $H_k (r,t)$ is given by \eqref{eq:LMMSEestimation}, namely,
\begin{equation}
\hat H_k^{(T)} (r,t) = \sum^{k + T L}_{\substack{ k' = k - TL:\\ k' \in \mathcal{P}}}  a_{k'} (r,t) Y_{k'} (r), \qquad k \in \mathcal{D}. \label{eq:fading-estimate-pilot-only-observation}
\end{equation}
We denote the interpolation error by $E^{(T)}_k (r,t) = H_k (r,t) - \hat H_k^{(T)} (r,t)$. 

For future reference, and for any $k \in \integ$, we express $k = jL + \ell$, so $\ell=k \: {\rm mod} \: L$. Assuming that the first pilot symbol is transmitted at $k=0$, it follows that $\ell=0,\dotsc,\nt-1$ for $k \in \mathcal{P}$ and $\ell=\nt,\dotsc,L-1$ for $k \in \mathcal{D}$. The statistical properties of the channel estimator for a given window size $T$ are summarized in the following lemma. 

\begin{lemma}
\label{lemma:interpolator-property-fix-T}
For a given $T$, the linear interpolator \eqref{eq:fading-estimate-pilot-only-observation} has the following properties.
\begin{enumerate}

\item  For each $t=1,\dotsc,\nt$, $r=1,\dotsc,\nr$ and $\ell = \nt,\dotsc,L-1$, the estimate $\hat H^{(T)}_{jL + \ell} (r,t)$ and the corresponding estimation error $E^{(T)}_{jL + \ell} (r,t)$ are independent zero-mean complex-Gaussian random variables.

\item  

\begin{enumerate}
\item For a given transmit antenna $t$ and $\ell \in \{\nt,\dotsc,L-1 \}$, the $\nr$ processes 
\begin{equation*}
\{ (\hat H_{jL+\ell}^{(T)} (1,t), E_{jL+\ell}^{(T)} (1,t)), \: j \in \integ \},\dotsc,\{ (\hat H_{jL+\ell}^{(T)} (\nr,t), E_{jL+\ell}^{(T)} (\nr,t)), \: j \in \integ \}
\end{equation*}
are independent and have the same law. 

\item For a given receive antenna $r$ and  $\ell \in \{\nt,\dotsc,L-1 \}$, the $\nt$ processes 
\begin{equation*}
\{ (\hat H_{jL+\ell}^{(T)} (r,1), E_{jL+\ell}^{(T)} (r,1)), \: j \in \integ \}, \dotsc,\{ (\hat H_{jL+\ell}^{(T)} (r,\nt), E_{jL+\ell}^{(T)} (r,\nt)), \: j \in \integ \}
\end{equation*}
are independent but have different laws. 
\end{enumerate}

\item For each $\ell = \nt,\dotsc,L-1$, the process $ \{ (\hat \HRM^{(T)}_{jL+\ell}, \: \HRM_{jL + \ell}, \: \zrv_{jL + \ell},  \:  \xrv_{jL + \ell}), \: j \in \integ  \}$  is jointly stationary and ergodic.

\item For $\ell = \nt,\dotsc,L-1$, it holds that
\begin{equation}
\Expect \left[  \zrv^\dagger_\ell \hat \HRM^{(T)}_\ell  \xrv_\ell  \right] = 0 \label{eq:uncorrelated-noise-Expec-D-ell-1}
\end{equation}
where $\hermi{(\cdot)}$ denotes the conjugate transpose.

\end{enumerate}

\begin{proof}
See Appendix \ref{sec:proof-lemma:interpolator-property-fix-T}.
\end{proof}

\end{lemma}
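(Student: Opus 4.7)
\bigskip

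\noindent\emph{Proof proposal.} The plan is to establish the four statements in order, exploiting the joint Gaussianity of fading and noise together with the independence structure that the orthogonal pilot vectors impose on the observations.

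For Part 1, I would begin from the fact that, because the pilot vectors are deterministic and ${\bm p}_t$ has only its $t$-th coordinate nonzero, each pilot observation $Y_{k'}(r)$ with $k'\in\mathcal{P}$ is a zero-mean complex-Gaussian scalar. Hence $\hat H_{jL+\ell}^{(T)}(r,t)$ given in \eqref{eq:fading-estimate-pilot-only-observation} is a deterministic linear combination of zero-mean Gaussian variables and is itself zero-mean complex-Gaussian; the error $E_{jL+\ell}^{(T)}(r,t)=H_{jL+\ell}(r,t)-\hat H_{jL+\ell}^{(T)}(r,t)$ is zero-mean complex-Gaussian by joint Gaussianity with the observations. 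The orthogonality principle for LMMSE estimation then makes the error uncorrelated with every observation entering the estimator and therefore uncorrelated with the estimate itself; joint Gaussianity upgrades this to independence.

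For Part 2, I would trace which observations actually enter the estimator. The pilot from transmit antenna $t$ is emitted at times $\{j'L+(t-1):j'\in\integ\}$, at which $Y_{k'}(r)=\sqrt{\SNR/\nt}\,H_{k'}(r,t)+Z_{k'}(r)$; thus $\hat H_{jL+\ell}^{(T)}(r,t)$ depends only on the single fading coordinate $(r,t)$ and the noise $Z_{k'}(r)$ sampled at those pilot times. For fixed $t,\ell$ and varying $r$, the inputs to the interpolator are i.i.d.\ across $r$ (the fading processes across $r$ are i.i.d.\ and the noise is spatially white), and the LMMSE weights depend only on second-order statistics, so they coincide across $r$; this yields 2(a). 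For fixed $r,\ell$ and varying $t$, the estimators draw from disjoint pilot time indices and from fading processes that are independent in $t$, which gives mutual independence; however, the lag between the pilot index $j'L+(t-1)$ and the data index $jL+\ell$ shifts with $t$, so both the LMMSE weights and the resulting error variance genuinely depend on $t$, giving 2(b).

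For Part 3, the key observation is that sampling a stationary process at the sub-lattice $\{jL+\ell\}_{j\in\integ}$ again yields a stationary process, and the interpolator window $[jL+\ell-TL,\,jL+\ell+TL]$ slides rigidly with $j$, so $\{(\hat H_{jL+\ell}^{(T)}(r,t),H_{jL+\ell}(r,t))\}_j$ is jointly stationary with the underlying fading. Since the noise and codeword symbols are i.i.d.\ in $k$ and independent of the fading, restricting them to the sub-lattice preserves stationarity and independence from the fading-interpolator pair, which gives joint stationarity of the full four-tuple. For Part 4 I would then invoke independence directly: $\zrv_\ell$ is noise at a data time $\ell\notin\mathcal{P}$, whereas $\hat\HRM_\ell^{(T)}$ depends only on fading and noise at pilot times and $\xrv_\ell$ is independent of the noise; hence $\zrv_\ell$ is independent of the product $\hat\HRM_\ell^{(T)}\xrv_\ell$, and $\Expect[\zrv_\ell^\dagger\hat\HRM_\ell^{(T)}\xrv_\ell]=\Expect[\zrv_\ell^\dagger]\cdot\Expect[\hat\HRM_\ell^{(T)}\xrv_\ell]=0$ because $\zrv_\ell$ is zero-mean. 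The main obstacle is the ergodicity claim in Part 3: each ingredient is standard, but one must verify that sub-sampling a stationary Gaussian process at spacing $L$ preserves ergodicity, which ultimately rests on the absolute continuity of the fading spectral distribution and on the bandwidth hypothesis $\lambda_D<1/2$; once ergodicity of the sampled fading is in hand, independence of the i.i.d.\ noise and input transfers it to the joint process.
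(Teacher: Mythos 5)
Parts 1, 2 and 4 of your proposal follow essentially the paper's own argument: orthogonality principle plus joint Gaussianity for Part 1; the reduction of the estimator for antenna $t$ to the pilot observations at times congruent to $t-1$ modulo $L$, with LMMSE coefficients that depend on $\ell$ and $t$ but not on $r$, for Parts 2(a)--2(b); and the zero mean and independence of the data-time noise for Part 4. These steps are fine.

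The genuine gap is in Part 3, and it is exactly the step you defer with ``one must verify.'' Two of the transfers you invoke are false at the level of generality at which you state them: bare ergodicity of a stationary process is \emph{not} in general preserved by subsampling at spacing $L$ (the $L$-th power of an ergodic shift need not be ergodic), and the product of two \emph{independent ergodic} processes need not be jointly ergodic, so ``independence of the i.i.d.\ noise and input transfers it to the joint process'' needs a stronger property of one factor. The paper's proof supplies precisely the missing machinery: an ergodic stationary Gaussian process is \emph{weakly mixing}; weak mixing, unlike ergodicity, survives passing to the $L$-th power of the shift, and a weakly mixing process joined with an independent i.i.d.\ process remains jointly ergodic (this is also how the i.i.d.\ input $\xrv_{jL+\ell}$ is adjoined at the end, via the cited lemma of Kim). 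In addition, because the interpolation coefficients depend on $\ell=k\bmod L$, the estimate sequence is not a time-invariant functional of $\{(\HRM_k,\zrv_k)\}$; the paper fixes this by introducing the auxiliary process $\ARM_{k,\ell}$, a time-invariant functional that coincides with $\hat\HRM^{(T)}_{jL+\ell}$ at $k=jL+\ell$, before subsampling. Your alternative route through spectral continuity can be made to work (the Gaussian $(\HRM_k,\zrv_k)$ process has an absolutely continuous, hence atomless, spectral measure, so it is weakly mixing, indeed mixing, and the folded spectral measure of the blocked process is again atomless), but that argument still has to be carried out, and it is the continuity of the spectral measure -- not the bandwidth hypothesis $\lambda_D<1/2$, which plays no role here -- that drives it. As written, the crux of the lemma is asserted rather than proved.
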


\subsection{Achievable Rates and Pre-Logs}
\label{subsec:achievable-rates}

In the following proof, we only consider the case where $\nt = \nr$. The more general case of $\nt \neq \nr$ follows then by employing only $\nr$ transmit antennas or by ignoring $\nr-\nt$ antennas at the receiver. This yields a lower bound on the maximum achievable rate and does not incur a loss with respect to the pre-log. Indeed, it can be shown that the nearest neighbor decoder described in Section \ref{sec:model} achieves the pre-log $\min(\nr,\nt)$. Thus, increasing $\nt$ beyond $\nr$ or $\nr$ beyond $\nt$ does not improve the pre-log achievable by such a decoder. In fact, increasing $\nt$ beyond $\nr$ requires the transmission of more pilot symbols and does therefore even reduce the pre-log achievable with the communication system described in Section \ref{sec:model}. 

To prove Theorem \ref{th:pre-log-MIMO}, we analyze the generalized mutual information (GMI) \cite{IEEE:mehrav:oninformationrates_mismatched} for the channel and communication scheme in Section \ref{sec:model}. The GMI, denoted by $\gmi_T (\SNR)$, specifies the highest information rate for which the average probability of error, averaged over the ensemble of i.i.d. Gaussian codebooks, tends to zero as the codeword length $n$ tends to infinity (see \cite{IEEE:lapidoth:nearestneighbournongaussian,IEEE:lapidoth:fadingchannels_howperfect,IEEE:weingarten:gaussiancodes}  and references therein). The GMI for stationary Gaussian fading channels employing nearest neighbor decoding has been evaluated in \cite{IEEE:lapidoth:fadingchannels_howperfect,IEEE:weingarten:gaussiancodes} for the case where a genie provides the receiver with an estimate of the fading process. However, the estimate considered in \cite{IEEE:lapidoth:fadingchannels_howperfect,IEEE:weingarten:gaussiancodes} is assumed to be jointly stationary ergodic with $\{ (\HRM_k, \xrv_k, \zrv_k), \: k \in \integ \}$, which is not satisfied by $\{ \hat \HRM^{(T)}_k, \: k \in \mathcal{D} \}$. We therefore need to adapt the work in  \cite{IEEE:lapidoth:fadingchannels_howperfect,IEEE:weingarten:gaussiancodes} to our channel model. For completeness, we present all the main steps here, even though they are very similar to the ones in\cite{IEEE:lapidoth:fadingchannels_howperfect,IEEE:weingarten:gaussiancodes}. 

We prove Theorem \ref{th:pre-log-MIMO} as follows:
\begin{enumerate}
\item We compute a lower bound on $\gmi_T (\SNR)$ for a fixed window size $T$. 
\item We analyze the behavior of this lower bound as $T$ tends to infinity. 
\item We evaluate the limiting ratio of this lower bound to $\log \SNR$ as $\SNR$ tends to infinity. 
\end{enumerate}

\vspace{5mm}

\subsubsection{\underline{$\gmi_T (\SNR)$ for a fixed $T$} } 

We analyze the GMI for a fixed $T$ using a random coding upper bound on the average error probability. Note that due to the symmetry of the codebook construction, it suffices to consider the error behavior, conditioned on the event that message $1$ was transmitted. Let $\mathcal{E} (m')$ denote the event that $D(m') \leq D(1)$. The ensemble-average error probability, where the average is over the ensemble of i.i.d. Gaussian codes, corresponding to message $m=1$ is thus given by 
\begin{equation}
\bar P_e (1) = \Pr \left \{ \bigcup_{m' \neq 1} \mathcal{E} (m') \right\}. \label{eq:rc-err-m1}
\end{equation}

To evaluate the GMI from the RHS of \eqref{eq:rc-err-m1}, we define some useful quantities in the following. Recall the channel and the transmission model in Section \ref{sec:model}. Without loss of generality, assume that the first pilot vector is transmitted at time $k=0$. Define $F(\SNR)$ as
\begin{align}
F(\SNR) &\triangleq \nr  + \frac{\SNR}{(L - \nt) \nt} \sum^{(L-1)}_{\ell = \nt}    \Expect \left[ \left \| \ERM^{(T)}_\ell \right \|^2_F  \right]  \label{eq:F-SNR-defined}
\end{align}
where $\ERM^{(T)}_\ell$ is a random matrix with element at row $r$ and column $t$ given by $E^{(T)}_{\ell}(r,t)$, and where $\| \cdot \|_F$ denotes the Frobenius norm. Further define a typical set 
\begin{align}
 \mathcal{T}_{\delta}   \triangleq  \Bigg \{ &  \left(\xv_k, \yv_k, \hat \HM^{(T)}_k \right), k = 0,\dotsc, n'-1: \nonumber \\
 &  \qquad \left | \frac{1}{n} \sum_{k \in \mathcal{D}^{(n')} } \left
 \|\yv_k - \sqrt{\frac{\SNR}{\nt}} ~ \hat \HM^{(T)}_k \xv_k \right\|^2  - F(\SNR) \right | < \delta  \Bigg  \}  \label{eq:typical_set_delta}
\end{align}
with $\mathcal{D}^{(n')} = \{0,\dotsc,n' -1 \} \cap \mathcal{D}$ as provided in \eqref{eq:Dnprime-defined}, and some $\delta > 0$, where we have recalled $n'$ in \eqref{eq:total_length}, namely 
\begin{equation}
n' = n_{\rm p} + n + n_{\rm g}.
\end{equation}
Then, we have the following convergence as $n$ tends to infinity. 

\begin{lemma}
\label{lemma:F-SNR-typical-set}
 For the communication scheme described in Section \ref{sec:model}, we have that
\begin{equation}
\lim_{n \to \infty} \Pr \Big \{   \left( \xrv^{n'}, \yrv^{n'}, \hat \HRM^{ (T),n'}  \right) \in \mathcal{T}_{\delta}  \Big \}  = 1, \qquad \forall \delta > 0 \label{eq:lim-Pr-Typ-n-infinity}
\end{equation}
where we have used the notation $U^{n'}$ to denote  the sequence $U_0,\dotsc, U_{n' - 1}$.
\end{lemma}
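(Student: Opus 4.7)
The plan is to prove convergence in probability via a residue-class decomposition combined with the ergodic theorem provided by Lemma \ref{lemma:interpolator-property-fix-T}, part 3.

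First I would rewrite the summand in a form where the channel model \eqref{eq:channelmodel} makes the structure transparent. Substituting $\yrv_k = \sqrt{\SNR/\nt}\,\HRM_k\xrv_k + \zrv_k$ and recalling $\ERM_k^{(T)} = \HRM_k - \hat{\HRM}_k^{(T)}$, we get
\begin{equation}
\left\|\yrv_k - \sqrt{\tfrac{\SNR}{\nt}}\,\hat{\HRM}_k^{(T)}\xrv_k\right\|^2 = \left\|\sqrt{\tfrac{\SNR}{\nt}}\,\ERM_k^{(T)}\xrv_k + \zrv_k\right\|^2
\end{equation}
for every $k\in\mathcal{D}$. Taking the expectation and using that $\zrv_k$ has covariance $\IM_{\nr}$, that $\xrv_k$ has covariance $\IM_{\nt}$ and is independent of $\ERM_k^{(T)}$, and that the cross-term vanishes by $\Expect[\zrv_k^\dagger\HRM_k\xrv_k]=0$ together with part 4 of Lemma~\ref{lemma:interpolator-property-fix-T}, I obtain, for $\ell\in\{\nt,\dots,L-1\}$,
\begin{equation}
\Expect\!\left[\left\|\yrv_{jL+\ell} - \sqrt{\tfrac{\SNR}{\nt}}\,\hat{\HRM}_{jL+\ell}^{(T)}\xrv_{jL+\ell}\right\|^2\right] = \nr + \tfrac{\SNR}{\nt}\,\Expect\!\left[\|\ERM_\ell^{(T)}\|_F^2\right].
\end{equation}

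Next I would split the empirical average in \eqref{eq:typical_set_delta} according to the residue $\ell = k \bmod L$. For each $\ell\in\{\nt,\dots,L-1\}$, the process $\{(\hat\HRM_{jL+\ell}^{(T)},\HRM_{jL+\ell},\zrv_{jL+\ell},\xrv_{jL+\ell}),\,j\in\integ\}$ is jointly stationary and ergodic by part 3 of Lemma~\ref{lemma:interpolator-property-fix-T}, and the function $\|\sqrt{\SNR/\nt}\,\ERM\xrv+\zrv\|^2$ is integrable by the Gaussianity of all quantities. The Birkhoff ergodic theorem therefore yields, almost surely,
\begin{equation}
\frac{L-\nt}{n}\sum_{\substack{j:\,jL+\ell\in\mathcal{D}^{(n')}}}\left\|\yrv_{jL+\ell}-\sqrt{\tfrac{\SNR}{\nt}}\,\hat{\HRM}_{jL+\ell}^{(T)}\xrv_{jL+\ell}\right\|^2 \xrightarrow[n\to\infty]{} \nr + \tfrac{\SNR}{\nt}\,\Expect\!\left[\|\ERM_\ell^{(T)}\|_F^2\right].
\end{equation}

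Summing over the $L-\nt$ residue classes in $\{\nt,\dots,L-1\}$ and dividing by $L-\nt$ gives almost-sure convergence of the full average to $F(\SNR)$ as defined in \eqref{eq:F-SNR-defined}, hence convergence in probability, which establishes \eqref{eq:lim-Pr-Typ-n-infinity}. The boundary effects due to the finite guard period $n_{\mathrm{g}}$ and the $n_{\mathrm{p}}$ pilot positions are negligible because $T$ is fixed and these contribute only $O(1)$ terms to the sum of $n$ terms. The only delicate point is justifying that the per-residue sum is really an ergodic average along the index $j$: this is exactly the content of part 3 of Lemma~\ref{lemma:interpolator-property-fix-T}, so no further work is required beyond invoking it, and I would expect the verification of the cross-term cancellation (via part 4 of the same lemma) to be the only other ingredient that needs care.
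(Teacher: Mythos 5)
Your proposal is correct and follows essentially the same route as the paper's proof: rewrite the metric via the channel model, decompose the sum into the $L-\nt$ residue classes modulo $L$, invoke Part 3 of Lemma \ref{lemma:interpolator-property-fix-T} with the ergodic theorem for almost-sure convergence, and identify the limit as $F(\SNR)$ using Part 4 together with the independence and identity covariance of the inputs. The only difference is cosmetic (you compute the per-class expectation before applying the ergodic theorem rather than after), so no further comparison is needed.
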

\begin{proof}
We have 
\begin{align}
& \lim_{n \to \infty}  \frac{1}{n}  \sum_{k \in \mathcal{D}^{(n')}} \left \| \yv_k - \sqrt{\frac{\SNR}{\nt}} ~\hat \HM^{(T)}_k  \xv_k  \right\|^2  \nonumber \\
&  = \lim_{n \to \infty}  \frac{1}{n}  \sum_{k \in \mathcal{D}^{(n')}} \left \| \sqrt{\frac{\SNR}{\nt}} ~  \left(\HM_k  - \hat \HM^{(T)}_k  \right) \xv_k  + \zv_k \right\|^2   \\
&   =  \frac{1}{ L - \nt }  \sum_{\ell = \nt}^{L-1}  \lim_{n \to \infty}  \frac{L - \nt}{n} \sum_{j = 0}^{ \frac{n}{L-\nt} - 1} \left \| \sqrt{\frac{\SNR}{\nt}}    \left(\HM_{jL + \ell}  - \hat \HM^{(T)}_{jL + \ell}  \right) \xv_{jL + \ell}  + \zv_{jL + \ell} \right\|^2  \\
&   =  \frac{1}{L-\nt} \sum_{\ell = \nt}^{L-1} \Expect \left[ \left \| \sqrt{\frac{\SNR}{\nt}} \left( \HRM_\ell - \hat \HRM^{(T)}_\ell \right) \bar \xrv_\ell + \zrv_\ell \right \|^2  \right ], \qquad \mbox{almost surely} \label{eq:ergodicity-F} \\
&   = \frac{1}{L - \nt }\sum^{L-1}_{\ell = \nt} \left( \nr + \frac{\SNR}{\nt} \Expect \left[ \left \| \ERM^{(T)}_\ell \bar \xrv_\ell  \right \|^2  \right]\right) \label{eq:trace-hat-h-x-z-zero-cond} \\
&   = F(\SNR). \label{eq:xrv-ell-gauss-rv-zero-mean}
\end{align}
Herein \eqref{eq:ergodicity-F} follows from Part 3) of Lemma \ref{lemma:interpolator-property-fix-T} and the ergodic theorem \cite[Chap. 7]{durrett:probability}; \eqref{eq:trace-hat-h-x-z-zero-cond}  follows from Part  4) of Lemma \ref{lemma:interpolator-property-fix-T}; and \eqref{eq:xrv-ell-gauss-rv-zero-mean} follows since $\bar \xrv_\ell$ has zero mean and covariance matrix $\IM_\nt$, and is independent from $\ERM_\ell^{(T)}$ (since $\{ \ERM^{(T)}_k,\: k \in \mathcal{D}\}$ is a function of $\{ (\HRM_k, \zrv_k), k \in \integ \}$). It thus follows that 
\begin{equation}
\lim_{n \to \infty}  \frac{1}{n}  \sum_{k \in \mathcal{D}^{(n')}} \left \| \yv_k - \sqrt{\frac{\SNR}{\nt}} ~\hat \HM^{(T)}_k  \xv_k   \right\|^2
\end{equation}
converges to $F(\SNR)$ almost surely, which in turn implies that it also converges in probability, thus proving \eqref{eq:lim-Pr-Typ-n-infinity}. 
\end{proof}

Considering the typical set  \eqref{eq:typical_set_delta} and following the derivation in  \cite{IEEE:lapidoth:fadingchannels_howperfect,IEEE:weingarten:gaussiancodes}, $\bar P_e (1)$ in \eqref{eq:rc-err-m1} can be upper-bounded as 
\begin{align}
\bar P_e(1)
\leq & e^{nR} \cdot \Pr \left \{ \left. \frac{1}{ n } \cdot D(m') < F(\SNR) + \delta  \right |  \left( \xrv^{n'} (1), \yrv^{n'}, \hat \HRM^{(T), n'} \right) \in \mathcal{T}_{\delta}   \right \} \nonumber \\
&   + \Pr \left \{ \left( \xrv^{n'} (1), \yrv^{n'},  \hat \HRM^{(T), n'} \right)  \in \mathcal{T}^c_{\delta}  \right \} , \quad m' \neq 1  \label{eq:bounding-typical-atypical}
\end{align}
where $\mathcal{T}^c_\delta$ denotes the complement of $\mathcal{T}_\delta$. It follows from Lemma \ref{lemma:F-SNR-typical-set}  that the second term on the RHS of \eqref{eq:bounding-typical-atypical} can be made arbitrarily small by letting $n$ tend to infinity. 

The GMI characterizes the rate of exponential decay of the expression
\begin{equation}
\Pr \left \{ \left. \frac{1}{ n } \cdot D(m') < F(\SNR) + \delta  \right | \left( \xrv^{n'} (1), \yrv^{n'}, \hat \HRM^{(T), n'} \right) \in \mathcal{T}_{\delta}   \right \}, \quad m' \neq 1
\end{equation}
as $n \to  \infty$ \cite{IEEE:lapidoth:fadingchannels_howperfect,IEEE:weingarten:gaussiancodes}. The computation of the GMI requires the conditional log moment-generating function of the metric $D(m')$ associated with the wrong message output $m'\neq 1$---conditioned on the channel outputs and on the fading estimates---which we shall denote by $\kappa_{n} (\theta,\yv^{n'}, \hat \HM^{(T),n'} )$, i.e.,
\begin{align}
\kappa_{n} \left (\theta,\yv^{n'}, \hat \HM^{(T),n'} \right) & =  \log \Expect \left[ \left.  {\rm exp} \left(\frac{\theta}{n} \sum_{k \in \mathcal{D}^{(n')}}  D_k (m')   \right) \right| \left\{(\yv_k, \hat \HM^{(T)}_{k}), \: k \in {\mathcal{D}^{(n')}}   \right\} \right]  \label{eq:conditional-log-mgf-for-n}.
\end{align}
Here we define
\begin{equation}
D_k (m') \triangleq \left \| \yv_k  -  \sqrt{ \frac{\SNR}{ \nt}} \: \hat \HM^{(T)}_k \xv_k {(m')} \right \|^2.
\end{equation}
Proceeding along the lines of  \cite{IEEE:lapidoth:fadingchannels_howperfect,IEEE:weingarten:gaussiancodes}, we can express the conditional log moment-generating function in \eqref{eq:conditional-log-mgf-for-n} as the sum of conditional log moment-generating functions for the individual vector metrics $D_k (m')$, $k \in  \mathcal{D}^{(n')}$, i.e.,  
\begin{align}
& \kappa_{n} \left (\theta,\yv^{n'}, \hat \HM^{(T),n'} \right) \nonumber \\
& \quad = \sum_{k \in  \mathcal{D}^{(n')}} \log \mathsf{E} \left[ \left.  {\rm exp} \left(\frac{\theta}{n}   D_k (m')   \right) \right| \yv_k, \hat \HM^{(T)}_{k}\right] \label{eq:conditional-log-mgf-n-to-sum-mgf} \\
& \quad = \sum_{k \in  \mathcal{D}^{(n')}} \left( \frac{\theta}{n} \yv_{k}^\dagger \left( \mathsf{I}_\nr - \frac{\theta}{n} \frac{\SNR}{\nt} \hat \HM_{k}^{(T)} \hat \HM_{k}^{\dagger (T)}  \right)^{-1} \yv_{k} - \log {\rm det} \left( \mathsf{I}_\nr - \frac{\theta}{n} \frac{\SNR}{\nt} \hat \HM_{k}^{(T)} \hat \HM_{k}^{\dagger (T)}  \right) \right). \label{eq:sum-conditional-log-mgf}
\end{align}
We then have that for all $\theta < 0$
\begin{align}
&   \lim_{n \rightarrow \infty} ~\frac{ 1}{n} \cdot \kappa_{n} \left (n \theta,\yv^{n'}, \hat \HM^{(T),n'} \right)   \nonumber \\
& \quad = \lim_{n \to \infty}  \frac{1}{n} \sum_{k \in \mathcal{D}^{(n')}} \theta \yv_{k}^\dagger \left(  \mathsf{I}_\nr - \theta \frac{\SNR}{\nt} \hat \HM_{k}^{(T)} \hat \HM_{k}^{\dagger (T)}  \right)^{-1} \yv_{k}   \nonumber \\
& \qquad \qquad \qquad - \lim_{n \to \infty}  \frac{1}{n} \sum_{k \in  \mathcal{D}^{(n')}} \log {\rm det} \left( \mathsf{I}_\nr - \theta \frac{\SNR}{\nt} \hat \HM_{k}^{(T)} \hat \HM_{k}^{\dagger (T)}  \right)  \label{eq:log-mgf-n-to-infinity} \\
& \quad =  \frac{1}{L - \nt}  \sum_{\ell = \nt}^{L-1}  \lim_{n \to \infty}   \frac{L - \nt}{n} \sum_{j = 0}^{ \frac{n}{L-\nt} - 1} \theta \yv_{jL + \ell}^\dagger \left(  \mathsf{I}_\nr - \theta \frac{\SNR}{\nt} \hat \HM_{jL + \ell}^{(T)} \hat \HM_{jL + \ell}^{\dagger (T)}  \right)^{-1} \yv_{jL + \ell}   \nonumber \\
&  \qquad \qquad  - \frac{1}{L - \nt}  \sum_{\ell = \nt}^{L-1}  \lim_{n \to \infty} \: \frac{L - \nt}{n} \sum_{j = 0}^{ \frac{n}{L-\nt} - 1} \log {\rm det} \left( \mathsf{I}_\nr - \theta \frac{\SNR}{\nt} \hat \HM_{jL + \ell}^{(T)} \hat \HM_{jL + \ell}^{\dagger (T)}  \right)  \\
& \quad = \frac{1}{L - \nt} \sum^{L- 1}_{\ell= \nt} \Expect \left[ \theta \yrv_\ell^\dagger \cdot \left( \mathsf{I}_\nr - \theta \frac{\SNR}{ \nt} \hat \HRM_{\ell}^{(T)} \hat \HRM_{\ell}^{\dagger(T)}  \right)^{-1} \cdot \yrv_\ell \right]   \nonumber \\
&    \qquad \qquad \quad  - \: \frac{1}{L - \nt} \sum^{L- 1}_{\ell=\nt} \Expect \left[ \log {\rm det} \left(\mathsf{I}_\nr - \theta \frac{\SNR}{\nt} \hat \HRM_{\ell}^{(T)} \hat \HRM_{\ell}^{\dagger(T)} \right)   \right], \: \: \mbox{almost surely} \label{eq:converge-log-mgf}  \\
& \quad \triangleq \kappa (\theta, \SNR) \nonumber
\end{align}
where the last step should be regarded as the definition of $\kappa (\theta, \SNR)$. The convergence in \eqref{eq:converge-log-mgf}  is due to the ergodicity of $\{(\yrv_{jL + \ell},\: \hat \HRM_{jL + \ell}^{(T)}), \: j \in \integ\}$, $\ell=\nt,\dotsc,L-1$ (see Part 3) of Lemma \ref{lemma:interpolator-property-fix-T}) and the ergodic theorem. 

Following the same steps as  in \cite{IEEE:lapidoth:fadingchannels_howperfect, IEEE:weingarten:gaussiancodes}, we can then show that for all $\delta' > 0$, the ensemble-average error probability can be bounded as 
\begin{equation}
 \bar P_e (1) \leq {\rm exp} (nR) {\rm exp} \left( - n\left( I^{\rm gmi}_T (\SNR) - \delta' \right) \right) + \varepsilon (\delta',n) \label{eq:ens-error-R-gmi}
\end{equation}
for some $\varepsilon (\delta',n)$ satisfying
\begin{equation}
\lim_{n \to \infty} \varepsilon (\delta',n) = 0, \quad \delta' > 0. 
\end{equation}
On the RHS of \eqref{eq:ens-error-R-gmi}, $I^{\rm gmi}_T (\SNR)$ denotes the GMI as a function of $\SNR$ for a fixed $T$, which is given by 
\begin{equation}
I^{\rm gmi}_T ( \SNR) =  \frac{L - \nt}{L} \left( \sup_{\theta < 0} ~ \left(\theta F(\SNR) - \kappa (\theta, \SNR)  \right) \right). \label{eq:GMI-gartner-ellis}
\end{equation}
Herein the pre-factor $(L-\nt)/L$ equals the fraction of time used for data transmission. The bound \eqref{eq:ens-error-R-gmi} implies that for rates below $\gmi_T (\SNR)$, the communication scheme described in Section \ref{sec:model} has vanishing error probability as $n$ tends to infinity. Combining \eqref{eq:F-SNR-defined} and \eqref{eq:converge-log-mgf} with \eqref{eq:GMI-gartner-ellis} yields 
\begin{align}
&   I^{\rm gmi}_T  (\SNR) \nonumber \\
&  =  \sup_{\theta < 0}  \frac{1}{L}  \sum^{L-1}_{\ell = \nt} \bigg \{ \theta \left(  \nr + \frac{\SNR}{\nt} \Expect \left[\left \| \mathbb E_{\ell}^{(T)} \right \|^2_F \right] \right)  + \Expect \left[ \log {\rm det} \left(\IM_{\nr} - \theta \frac{\SNR}{\nt} \hat \HRM_{\ell}^{(T)} \hat \HRM_{\ell}^{\dagger(T)} \right) \right]   \nonumber \\
&  \qquad \qquad \qquad \qquad    -  \Expect \left[ \theta \hermi{\yrv}_{\ell} \left( \IM_{\nr} - \theta \frac{\SNR}{\nt} \hat \HRM_{\ell}^{(T)} \hat \HRM_{\ell}^{\dagger (T)} \right)^{-1} \yrv_{\ell} \right] \bigg \}. \label{eq:gmi-closed-form-with-sup-over-theta}
\end{align}

Following the steps used in \cite[App. D]{IEEE_IT_Asyhari_Nearest_neighbour}, it can be shown that for $\theta < 0$
\begin{equation}
 - \Expect \left[ \theta \hermi{\yrv}_{ \ell} \left( \IM_{\nr} - \theta \frac{\SNR}{\nt} \hat \HRM_{\ell}^{(T)} \hat \HRM_{\ell}^{\dagger (T)} \right)^{-1} \yrv_{\ell} \right] \geq 0. \label{eq:positive-term-ignored-gmi}
\end{equation}
As observed in \cite[App. D]{IEEE_IT_Asyhari_Nearest_neighbour}, a good lower bound on $\gmi_T (\SNR)$ for high SNR follows by choosing 
\begin{equation}
 \theta = \frac{-1}{\nr + \SNR \, \nr \epsilon^2_{*,T}} \label{eq:suboptimal-theta}
\end{equation}
where 
\begin{equation}
 \epsilon^2_{*,T} = \max_{ \substack{r = 1,\dotsc,\nr,  \\ t =  1,\dotsc,\nt, \\ \ell = \nt,\dotsc,L-1}} \:  \epsilon^2_{\ell,T} (r,t). \label{eq:epsilon-sq-max-T}
\end{equation}
Hence, substituting the choice of $\theta$ in \eqref{eq:suboptimal-theta} and applying \eqref{eq:positive-term-ignored-gmi} to the RHS of \eqref{eq:gmi-closed-form-with-sup-over-theta} yields
\begin{equation}
  I^{\rm gmi}_T  (\SNR) \geq \frac{1}{L}  \sum^{L-1}_{\ell = \nt} \left \{ \Expect \left[ \log {\rm det} \left(\IM_{\nr} + \frac{\SNR}{\nt\nr +  \nt \nr \SNR \epsilon^2_{*,T}} \hat \HRM_\ell^{(T)} \hat \HRM_\ell^{\dagger (T)} \right) \right] - 1  \right \}. \label{eq:GMILB_Tfixed}
\end{equation}

\subsubsection{\underline{Achievable Rates as $T \to \infty$}}

We next analyze the RHS of \eqref{eq:GMILB_Tfixed} in the limit as  $T$ tends to infinity. To this end, we note that, for $L \leq \frac{1}{2\lambda_D}$, the variance of the interpolation error tends to \eqref{eq:fadingestimate_error}, namely
\begin{equation}
\epsilon_\ell^2 (t) =  1 - \int^{1/2}_{-1/2} \frac{\SNR \left[ f_H (\lambda) \right]^2}{\SNR f_H (\lambda) + L \nt} d \lambda  \label{eq:epsilon-sq-irrespective-ell-t}
\end{equation}
irrespective of $\ell$ and $t$. We shall therefore denote the variance of the interpolation error $\epsilon^2_\ell (t)$ by $\epsilon^2$. Note that for a fixed $T$, the entries of 
\begin{equation}
\frac{1}{\sqrt{\nt \nr + \nt \nr \SNR \epsilon^2_{*,T}}} \hat \HRM_\ell^{(T)} 
\end{equation}
are independent but not i.i.d., which follows from Part 2) of Lemma \ref{lemma:interpolator-property-fix-T}. However, as $T$ tends to infinity, their distribution becomes identical  due to \eqref{eq:epsilon-sq-irrespective-ell-t} and hence they converge in distribution to 
\begin{equation}
\frac{1}{\sqrt{\nt \nr + \nt \nr \SNR \epsilon^2_{*,T}}} \hat \HRM_{\ell}^{(T)} \quad \stackrel{d}{\longrightarrow} \quad \frac{1}{\sqrt{\nt \nr + \nt \nr \SNR \epsilon^2 }} \bar \HRM 
\end{equation}
where the entries of $\bar \HRM$ are i.i.d. complex-Gaussian random variables with zero mean and variance $(1 - \epsilon^2)$. 

Note that 
\begin{equation}
\log {\rm det} \left(\IM_{\nr} + \frac{\SNR}{\nt \nr + \nt \nr \SNR \epsilon^2_{*,T}}   \hat \HRM_{\ell}^{(T)}  \hat \HRM_{\ell}^{\dagger (T)} \right) \geq 0
\end{equation}
is a continuous function with respect to the entries of the matrix 
\begin{equation}
\frac{1}{\nt \nr + \nt \nr\SNR \epsilon^2_{*,T}} \hat \HRM_{\ell}^{(T)}  \hat \HRM_{\ell}^{\dagger (T)}.
\end{equation}
It therefore follows from Portmanteau's lemma \cite{vdvaart_weakconvergence} that, as $T \rightarrow \infty$, the RHS of  \eqref{eq:GMILB_Tfixed} can be lower-bounded by
\begin{align}
&  \lim_{T \rightarrow \infty} ~ \frac{1}{L}  \sum^{L-1}_{\ell = \nt} \left \{ \Expect \left[ \log {\rm det} \left(\IM_{\nr} + \frac{\SNR}{\nt\nr +  \nt \nr \SNR \epsilon^2_{*,T}} \hat \HRM_{\ell}^{(T)} \hat \HRM_{\ell}^{\dagger (T)} \right) \right] - 1  \right \}  \nonumber \\
 &  \qquad \geq \frac{L - \nt}{L} \left \{ \Expect \left[ \log {\rm det} \left(\IM_{\nr} + \frac{\SNR}{\nt\nr +  \nt \nr \SNR \epsilon^2} \bar \HRM \hermi{\bar \HRM} \right) \right] - 1  \right \}  \label{eq:GMILB_asymptotic_1} \\ 
&  \qquad \geq \frac{L - \nt}{L}  \left ( \Expect \left[ \log {\rm det} \left( \frac{\SNR}{\nt\nr +  \nt \nr\SNR \epsilon^2 } \bar \HRM \hermi{\bar \HRM} \right) \right] - 1  \right ).  \label{eq:GMILB_asymptotic_2}
\end{align}
where the last inequality follows from the lower bound $\log {\rm det} \left( {\sf I} + {\sf A} \right) \geq \log {\rm det} \: {\sf A}$. 
Combining \eqref{eq:GMILB_asymptotic_2} with \eqref{eq:GMILB_Tfixed} yields
\begin{align}
 \gmi (\SNR) & = \lim_{T \to \infty}  \gmi_T (\SNR)   \\
& \geq \frac{L - \nt}{L}  \Big ( \nt \log \SNR -  \nt \log \left( \nt^2  + \nt^2 \SNR \epsilon^2 \right)   +  \Expect \left[ \log {\rm det} \: \bar \HRM  \bar \HRM^\dagger \right]   - 1 \Big  ). \label{eq:gmi-lb-T-infinity}
\end{align}
where in the last inequality we have used the assumption $\nt = \nr $.

\subsubsection{\underline{The Pre-Log}}
We next compute a lower bound on the pre-log by computing the limiting ratio of the RHS of \eqref{eq:gmi-lb-T-infinity} to $\log \SNR$ as $\SNR$ tends to infinity. To this end, we first consider 
\begin{align}
\SNR \: \epsilon^2 &= \SNR \left( 1 - \int^{1/2}_{-1/2} \frac{\SNR \left[ f_H (\lambda) \right]^2}{\SNR f_H (\lambda) + L \nt} d\lambda \right) \\
&= \int^{1/2}_{-1/2} \frac{\SNR f_H(\lambda) L \nt}{\SNR f_H (\lambda) + L \nt} d\lambda.
\end{align}
Since the integrand is bounded by 
\begin{equation}
0 \leq \frac{\SNR f_H(\lambda) L \nt }{\SNR f_H (\lambda) + L} \leq L \nt  \label{eq:bounding-SNR-epsilon-sq}
\end{equation}
it follows that $0 \leq \SNR \: \epsilon^2 \leq L\nt$, which implies that 
\begin{equation}
 \lim_{\SNR \to \infty}  \frac{\log \left( \nt^2 + \nt^2 \SNR \: \epsilon^2 \right)}{\log \SNR} = 0. \label{eq:limitingratio-SNR-epsilon-sq}
\end{equation}
We next consider the term $\Expect \left[ \log {\rm det} \: \bar \HRM  \bar \HRM^\dagger \right] - 1$. Note that by \cite[Lemma A.2]{eurasip:grant:rayleighfading_multi} and by the assumption $\nt = \nr$, we have
\begin{align}
 \Expect\left[ \log {\rm det} ~ \bar \HRM \bar \HRM^\dagger \right] - 1  &=  \nt \log (1 - \epsilon^2) + \sum^{\nt - 1}_{b = 0} \psi ( \nt  - b) - 1 \label{eq:expect-det-H-H}
\end{align}
where $\psi (\cdot)$ is Euler's digamma function \cite{abramowitz:handbook}. Furthermore, since 
\begin{equation}
0 \leq \frac{\SNR \left[ f_H (\lambda) \right]^2}{\SNR f_H (\lambda) + L \nt} \leq f_H (\lambda) 
\end{equation}
we have by the Dominated Convergence Theorem \cite{durrett:probability} that
\begin{equation}
\lim_{\SNR \to \infty} \epsilon^2 = \lim_{\SNR \to \infty} \left( 1 - \int^{1/2}_{-1/2} \frac{\SNR \left[ f_H (\lambda) \right]^2}{\SNR f_H (\lambda) + L \nt} d\lambda \right) = 0 \label{eq:limit-epsilon-SNR-to-infty}
\end{equation}
so $\log (1 - \epsilon^2)$ vanishes as the SNR tends to infinity. Combining \eqref{eq:limit-epsilon-SNR-to-infty} with \eqref{eq:expect-det-H-H} yields 
\begin{equation}
\lim_{\SNR \to \infty} \: \frac{\Expect \left[ \log {\rm det} ~ \bar \HRM \bar \HRM^\dagger \right] - 1}{\log \SNR} = 0. \label{eq:limiting-ratio-det-hatH}
\end{equation}

It thus follows from \eqref{eq:gmi-lb-T-infinity}, \eqref{eq:limitingratio-SNR-epsilon-sq} and \eqref{eq:limiting-ratio-det-hatH} that
\begin{align}
\Pi_{R^*}  & \geq  \nt \left( 1 - \frac{\nt}{L} \right) \\
& =  \min(\nt,\nr) \left( 1 - \frac{\min(\nt,\nr)}{L} \right), \qquad L \leq \frac{1}{2\lambda_D}
\end{align}
where we have used that $\nt = \nr = \min(\nt,\nr)$. Note that the condition $L \leq \frac{1}{2\lambda_D}$ is necessary since otherwise \eqref{eq:epsilon-sq-irrespective-ell-t} would not hold. This proves Theorem \ref{th:pre-log-MIMO}.

\subsection{A Note on Input Distribution}
\label{subsec:note-input-distribution}

The pre-log in Theorem \ref{th:pre-log-MIMO} is derived using codebooks whose entries are drawn i.i.d. from an $\nt$-variate Gaussian distribution with zero mean and identity covariance matrix. However, Gaussian inputs are not necessary to achieve the pre-log \eqref{eq:pre-log_L}. In fact, \eqref{eq:pre-log_L} can be achieved by any i.i.d. inputs having density satisfying $\Expect [  \|\bar \xrv \|^2 ]\leq\nt$ and \eqref{eq:pdf-constraint} and \eqref{eq:peak-density-cons}, namely,
\begin{align}
& p_{\xrv} (\bar \xv)  \leq \frac{K}{\pi^\nt} e^{-\| \bar \xv \|^2},   \quad \bar \xv \in \field^\nt   \label{eq:constraint-input-pdf} \\
& \lim_{\SNR \to \infty} \: \frac{\log K}{\log \SNR} = 0. 
\end{align}
Note that the fact that the inputs have a density implies that $\Expect [ \| \bar \xrv \|^2 ] > 0$. To show that the conditions \eqref{eq:pdf-constraint} and \eqref{eq:peak-density-cons} suffice to achieve \eqref{eq:pre-log_L}, we follow the steps in Section \ref{subsec:achievable-rates} but with $F(\SNR)$ replaced by
\begin{equation}
F(\SNR) = \nr  + \frac{\SNR}{(L - \nt)\nt} \sum^{L-1}_{\ell = \nt}   \Expect \left[\left \| \ERM^{(T)}_\ell \bar \xrv_\ell  \right \|^2_F  \right].
\end{equation}
We then upper-bound $F(\SNR)$ and $\kappa (\theta,\SNR)$ as follows. Using that for any two matrices $\sf A$ and $\sf B$ we have $\| {\sf A} {\sf B} \|^2_F \leq \| {\sf A} \|^2_F \cdot \| {\sf B} \|^2_F$ \cite[Sec. 5.6]{horn:matrixanalysis} and that $\ERM^{(T)}_\ell$ and $\bar \xrv_\ell$ are independent, we can upper-bound $F(\SNR)$ by
\begin{align}
F(\SNR) \leq \nr  + \frac{\SNR}{(L - \nt)\nt}  \sum^{L-1}_{\ell = \nt} \Expect \left[ \left \| \ERM^{(T)}_\ell \right \|^2_F \right] \cdot \Expect \left[ \left \|\bar \xrv_\ell \right \|^2 \right].  \label{eq:F-SNR-UB}
\end{align}
As for $\kappa (\theta,\SNR)$, we have
\begin{align}
 &  \Expect \left[ \left.  {\rm exp} \left(\frac{\theta}{n}   D_k (m')   \right) \right| \yv_k, \hat \HM^{(T)}_{k} \right] \nonumber \\
 &  \qquad = \int_{\bar \xv_k } p_{\xrv} (\bar \xv_k) \: {\rm exp} \left( \frac{\theta}{n} \left \| \yv_k -  \sqrt{ \frac{\SNR }{ \nt} } \: \hat \HM_k^{(T)} \bar \xv_k \right \|^2 \right) d \bar \xv_k \\
 &  \qquad \leq \int_{\bar \xv_k }  \frac{K}{\pi^\nt} \: {\rm exp} \left( - \|\bar \xv_k \|^2 + \frac{\theta}{n} \left \| \yv_k -  \sqrt{ \frac{\SNR }{ \nt} } \: \hat \HM_k^{(T)} \bar \xv_k \right \|^2 \right) d\bar \xv_k \label{eq:inequality-due-to-pdf-constraint}\\
 &  \qquad  =  \frac{K}{{\rm det} \left( \mathsf{I}_\nr - \frac{\theta}{n} \frac{\SNR}{\nt} \hat \HM_{k}^{(T)} \hat \HM_{k}^{\dagger (T)}  \right)} {\rm exp} \left( \frac{\theta}{n} \yv_{k}^\dagger \left( \mathsf{I}_\nr - \frac{\theta}{n} \frac{\SNR}{\nt} \hat \HM_{k}^{(T)} \hat \HM_{k}^{\dagger (T)}  \right)^{-1} \yv_{k}\right). \label{eq:kappa-n-theta-SNR-UB}
\end{align}
Here \eqref{eq:inequality-due-to-pdf-constraint} follows from \eqref{eq:constraint-input-pdf}, and \eqref{eq:kappa-n-theta-SNR-UB} follows by evaluating the integral as in \cite[App. A]{IEEE:weingarten:gaussiancodes}. By following the steps used in Section \ref{subsec:achievable-rates}, and by choosing 
\begin{equation}
\theta = \frac{-1}{\nr +  \SNR \, \nr \epsilon^2_{*,T} \Expect \left[ \| \bar \xrv \|^2 \right]}
\end{equation}
where $\epsilon^2_{*,T}$ is given in \eqref{eq:epsilon-sq-max-T}, we obtain from \eqref{eq:F-SNR-UB} and \eqref{eq:kappa-n-theta-SNR-UB} 
\begin{align}
 I^{\rm gmi}_T  (\SNR)  & \geq    \frac{1}{L}  \sum^{L-1}_{\ell = \nt} \left \{ \Expect \left[ \log {\rm det} \left(\IM_{\nr} +  \frac{\SNR}{\nt \nr + \nt \nr \SNR \epsilon^2_{*,T} \Expect \left[ \| \bar \xrv \|^2 \right] } \hat \HRM_{\ell}^{(T)} \hat \HRM_{\ell}^{\dagger(T)} \right) \right]    \right\} \nonumber \\
& \qquad - \frac{L - \nt}{L} \left(  1 +  \log K  \right).
\end{align}
Taking the limit as $T$ tends to infinity, and repeating the steps used in Section \ref{subsec:achievable-rates} yield
\begin{align}
\gmi (\SNR) & =  \lim_{T \to \infty} I^{\rm gmi}_T  (\SNR)  \\
& \geq  \frac{L - \nt}{L} \left(\Expect \left[ \log\:{\rm det} \left( \frac{\SNR}{\nt \nr + \nt \nr \SNR \: \epsilon^2 \Expect \left[ \| \bar \xrv \|^2 \right] } \bar \HRM \bar \HRM^\dagger \right) \right] - 1 - \log K \right)  \\
& =   \frac{L-\nt}{L} \Big( \nt \log \SNR - \nt \log ( \nt^2 + \nt^2 \SNR \: \epsilon^2  \Expect \left[ \| \bar \xrv \|^2 \right] )  \nonumber \\
&  \quad \qquad \qquad \qquad \qquad \qquad \qquad \qquad   + \: \Expect \left[ \log {\rm det} \: \bar \HRM  \bar \HRM^\dagger \right]   - 1 - \log K \Big) \label{eq:gmi-lb-T-infinity-arbitrary-input}
\end{align}
where we have again used the assumption $\nt = \nr$. We conclude by evaluating the limiting ratio of the RHS of \eqref{eq:gmi-lb-T-infinity-arbitrary-input} to $\log \SNR$ as $\SNR$ tends to infinity. Using \eqref{eq:bounding-SNR-epsilon-sq} and that $ \Expect [ \| \bar \xrv \|^2 ] \leq \nt$ yields 
\begin{equation}
\lim_{\SNR \to \infty}  \frac{\log \left( \nt^2 + \nt^2\SNR \: \epsilon^2  \Expect \left[ \| \bar \xrv \|^2 \right]  \right)}{\log \SNR} = 0. \label{eq:limiting-ratio-snr-eps-sq-gen-input}
\end{equation}
This in turn yields together with \eqref{eq:limiting-ratio-det-hatH}  that 
\begin{equation} 
\lim_{\SNR\to\infty} \frac{I^{\rm gmi}(\SNR)}{\log\SNR} \geq \nt \left(1-  \frac{\nt}{L} \right)
\end{equation} 
provided that
\begin{align}
\lim_{\SNR \to \infty} \: \frac{\log K}{\log \SNR} = 0.  
\end{align}
This concludes the proof.

\section{Proof of Theorem \ref{th:JTD_pre-log}}
\label{sec:proof-mac-pre-log}

In contrast to the proof of Theorem \ref{th:pre-log-MIMO}, for the fading MAC, it is not sufficient to restrict ourselves to the case of $\ntone = \nttwo = \nr$. For example, increasing $\nr$ beyond $\ntone$ and $\nttwo$ does not increase the single-rate pre-logs $\Pi_{R_1^*}$ and $\Pi_{R_2^*}$, but it does increase the pre-log of the achievable sum-rate $\Pi_{R_{1+2}^*}$. For the proof of Theorem \ref{th:JTD_pre-log}, we therefore consider a general setup of $\ntone$, $\nttwo$ and $\nr$.

We derive the achievable pre-logs for the MAC case using a similar approach to the point-to-point case. We first consider the average error probability, averaged over the ensemble of i.i.d. Gaussian codebooks. Let $\bar P_e$ and $\bar P_e (m_1,m_2)$ be the ensemble-average error probability and the ensemble-average error probability corresponding to message $m_1$ and $m_2$ being transmitted, respectively. Due to the symmetry of the codebook construction, $\bar P_e$ is equal to $\bar P_e (1,1)$ and it therefore suffices to consider $\bar P_e (1,1)$ to derive the achievable rates. Let $\mathcal{E}(m_1', m_2')$ denote the event that $D(m_1',m_2') \leq D(1,1)$. Using the union bound, the error probability $\bar P_e (1,1)$ can be upper-bounded as
\begin{align}
&  \bar P_e (1,1) \nonumber \\
& =   \Pr \left \{ \bigcup_{(m_1', m_2') \neq (1,1)}  \mathcal{E}(m_1', m_2') \right \} \\
& \leq  \Pr \left \{ \bigcup_{ m_1'\neq 1}  \mathcal{E}(m_1', 1) \right \} + \Pr \left \{ \bigcup_{ m_2'\neq 1}  \mathcal{E}(1,m_2') \right \} + \Pr \left \{ \bigcup_{ m_1'\neq 1} \bigcup_{ m_2'\neq 1} \mathcal{E}(m_1', m_2') \right \}.  \label{eq:rcub-probs-MAC}
\end{align}
We next analyze these probabilities corresponding to the error events $(m_1' \neq 1, m_2' = 1)$, $(m_1' = 1, m_2' \neq 1)$ and $(m_1' \neq 1, m_2' \neq 1)$. Let the matrix $\mathbb{E}_{s,k}^{(T)}$, $s=1,2$ with entries $E_{s,k}^{(T)} (r,t)$ be the estimation-error matrix in estimating $\mathbb{H}_{s,k}$, i.e.,
\begin{equation}
\mathbb{E}_{s,k}^{(T)} = \mathbb{H}_{s,k} - \hat{\mathbb{H}}_{s,k}^{(T)}.
\end{equation}
To facilitate the analysis, we first generalize $F(\SNR)$ and $\mathcal{T}_\delta$ in the point-to-point case (cf. \eqref{eq:F-SNR-defined} and \eqref{eq:typical_set_delta}) to the MAC case, i.e., 
\begin{align}
F(\SNR) & = \nr  + \frac{\SNR}{L - \ntone -\nttwo}  \sum^{L-1}_{\ell= \ntone + \nttwo} \Expect \left[ \left \| \ERM^{(T)}_{1,\ell} \right \|^2_F +  \left \|  \ERM^{(T)}_{2,\ell} \right \|^2_F  \right],    \label{eq:expectation-F-mac} \\
\mathcal{T}_{\delta} &  = \Bigg \{  \left( \xv_{s,k}, \yv_k, \hat \HM_{s,k}^{(T)} \right), k=0,\dotsc, n'-1, s = 1,2 : \nonumber \\
 &   \qquad \: \: \left | \frac{1}{n} \sum_{k \in  \mathcal{D}^{(n')}} \left
 \|\yv_k - \sqrt{\SNR} ~ \hat \HM^{(T)}_{1,k} \xv_{1,k} -  \sqrt{\SNR} ~ \hat \HM^{(T)}_{2,k} \xv_{2,k} \right\|^2  - F(\SNR) \right | < \delta  \Bigg \} 
\end{align}
for some $\delta > 0$, with $n'$ given in \eqref{eq:total_length_mac} and $\mathcal{D}^{(n')} = \{0,\dotsc,n' - 1\} \cap \mathcal{D}$. Using $F(\SNR)$ and the typical set $\mathcal{T}_\delta$, we continue by evaluating the GMI for each of the three probabilities on the RHS of \eqref{eq:rcub-probs-MAC}  corresponding to the error events $(m_1' \neq 1, m_2' = 1)$, $(m_1' = 1, m_2' \neq 1)$ and $(m_1' \neq 1, m_2' \neq 1)$.

\subsubsection{\underline{Error Event $(m_1' \neq 1, m_2' = 1)$}}

Following the steps as used in Section~\ref{subsec:achievable-rates} to derive \eqref{eq:bounding-typical-atypical}, we can upper-bound the ensemble-average error probability for the error event $\mathcal{E} (m'_1, 1)$, $m'_1 \neq 	1$  using $\mathcal{T}_\delta$ and its complement $\mathcal{T}^c_\delta$ as 
\begin{align}
&  \Pr \left \{ \bigcup_{ m_1'\neq 1}  \mathcal{E}(m_1', 1) \right \} \nonumber \\
&  \quad \leq e^{nR_1} \cdot \Pr \left \{ \left. \frac{1}{n} \cdot D(m'_1, 1) < F(\SNR) + \delta  \,\right |  \left \{  \left( \xrv_{s}^{n'}(1), \yrv^{n'}, \hat \HRM_{s}^{(T), n'} \right),  s=1,2 \right \}  \in \mathcal{T}_{\delta}    \right \} \nonumber \\
&  \qquad + \Pr \left \{   \left \{ \left( \xrv_{s}^{n'}(1), \yrv^{n'}, \hat \HRM_{s}^{(T), n'} \right),  s=1,2 \right \}  \in \mathcal{T}_{\delta}^c  \right \}, \quad m_1' \neq 1. \label{eq:rcub-error-user1-mac}
\end{align}
Note that the second probability on the RHS of \eqref{eq:rcub-error-user1-mac} vanishes as $n$ tends to infinity, which can be shown along the lines of the proof of Lemma \ref{lemma:F-SNR-typical-set}. 

The GMI for User 1 gives the rate of exponential decay of the term 
\begin{equation}
 \Pr \left \{ \left. \frac{1}{n} \cdot D(m'_1, 1) < F(\SNR) + \delta  \right |  \left \{  \left( \xrv_{s}^{n'}(1), \yrv^{n'}, \hat \HRM_{s}^{(T), n'} \right),  s=1,2 \right \}  \in \mathcal{T}_{\delta}    \right \} 
\end{equation}
as $n \to \infty$. The evaluation of the GMI for User 1 requires the expression of the log moment-generating function of the metric $D(m'_1, 1)$ associated with an incorrect message $m'_1 \neq 1$---conditioned on the channel outputs, on $m_2' = 1$, and on the fading estimates---which we shall denote by $\kappa_{1,n} (\theta, \yv^{n'}, \xv^{n'}_2 (1), \hat \HM^{(T),n'}_1, \hat \HM^{(T),n'}_2)$, i.e.,
\begin{align}
&\kappa_{1,n} \left(\theta, \yv^{n'}, \xv^{n'}_2 (1), \hat \HM^{(T),n'}_1, \hat \HM^{(T),n'}_2 \right) \nonumber \\
 &= \log \Expect \bigg[   {\rm exp} \left(\frac{\theta}{n} \sum_{k \in \mathcal{D}^{(n')}}  D_k (m_1', 1)   \right) \bigg| \left \{ \left(\yv_k, \xv_{2,k} (1), \hat \HM^{(T)}_{1,k}, \hat \HM^{(T)}_{2,k}\right),\: k \in \mathcal{D}^{(n')}   \right\} \bigg]  
\end{align}
where we have defined
\begin{equation}
D_k (m_1',m_2')  \triangleq  \left \| \yv_k - \sqrt{\SNR} ~ \hat \HM^{(T)}_{1,k} \xv_{1,k} {(m_1')}  - \sqrt{\SNR} ~ \hat \HM^{(T)}_{2,k} \xv_{2,k} {(m_2')} \right \|^2. \label{eq:Dk-m1-m2}
\end{equation}
Following the steps used in Section~\ref{subsec:achievable-rates} to obtain \eqref{eq:conditional-log-mgf-n-to-sum-mgf} and \eqref{eq:sum-conditional-log-mgf}, it can be shown that
\begin{align}
& \kappa_{1,n} \left(\theta, \yv^{n'}, \xv^{n'}_2 (1), \hat \HM^{(T),n'}_1, \hat \HM^{(T),n'}_2 \right) = \nonumber \\
&  \sum_{k \in \mathcal{D}^{(n')}} \bigg \{ \frac{\theta}{n} \left( \yv_k - \sqrt{\SNR}\, \hat \HM_{2,k}^{(T)} \xv_{2,k} (1) \right)^\dagger   \left( \mathsf{I}_\nr - \frac{\theta}{n} \SNR \,\hat \HM_{1,k}^{(T)} \hat \HM_{1,k}^{\dagger(T)}  \right)^{-1} \left( \yv_k - \sqrt{\SNR} \,\hat \HM_{2,k}^{(T)} \xv_{2,k} (1) \right)\nonumber\\
 &  \qquad\qquad \qquad\qquad \qquad\qquad \qquad\qquad {} - \log {\rm det} \left(\mathsf{I}_\nr - \frac{\theta}{n}  \SNR\, \hat \HM_{1,k}^{(T)} \hat \HM_{1,k}^{\dagger(T)} \right ) \bigg \}. 
 \end{align}
Then, following the steps used in Section~\eqref{subsec:achievable-rates} to derive \eqref{eq:log-mgf-n-to-infinity}--\eqref{eq:converge-log-mgf}, we have that for all $\theta < 0$
\begin{align}
&  \lim_{n \to \infty} \frac{1}{n} \cdot \kappa_{1,n} \left(n\theta, \yv^{n'}, \xv^{n'}_2 (1), \hat \HM^{(T),n'}_1, \hat \HM^{(T),n'}_2 \right)  \nonumber \\
&  \ = \frac{1}{L - \ntone - \nttwo } \sum^{L- 1}_{\ell= \ntone + \nttwo} \left( g_{1,\ell} (\theta,\SNR)  -  \Expect \left[ \log {\rm det} \left(\mathsf{I}_\nr - \theta\, \SNR\, \hat \HRM_{1,\ell}^{(T)} \hat \HRM_{1,\ell}^{\dagger(T)} \right)\right] \right) \label{eq:kappa-1-n-theta-SNR-n-to-infty} \\
& \triangleq \kappa_1 (\theta, \SNR) \nonumber \label{eq:def:kappa1}
\end{align}
almost surely, where \eqref{eq:kappa-1-n-theta-SNR-n-to-infty} should be regarded as the definition of $\kappa_1 (\theta,\SNR)$. Here we define
\begin{align}
 g_{1,\ell} (\theta,\SNR) & \triangleq \Expect \bigg[ \theta \left( \yrv_\ell - \sqrt{\SNR}\, \hat \HRM_{2,\ell}^{(T)} \xrv_{2,\ell} \right)^\dagger  \times\left( \mathsf{I}_\nr - \theta \,\SNR \,\hat \HRM_{1,\ell}^{(T)} \hat \HRM_{1,\ell}^{\dagger(T)}  \right)^{-1}\nonumber\\
 &  \qquad\qquad \qquad\qquad \qquad\qquad \qquad\qquad {} \times \left( \yrv_\ell - \sqrt{\SNR} \,\hat \HRM_{2,\ell}^{(T)} \xrv_{2,\ell} \right) \bigg].
\end{align}

Following the derivation in \cite{IEEE:lapidoth:fadingchannels_howperfect,IEEE:weingarten:gaussiancodes}, we can then upper-bound the ensemble-average error probability ($\mathcal{E} (m'_1, 1)$, $m'_1 \neq 1$) for any $\delta' > 0$ as
\begin{equation}
\Pr \left \{ \bigcup_{ m_1'\neq 1}  \mathcal{E}(m_1', 1) \right \} \leq {\rm exp} \left( n R_1 \right) {\rm exp} \left(- n \left( \gmi_{1,T} (\SNR)  - \delta' \right) \right) + \varepsilon_1 ( \delta', n) \label{eq:rcub-gmi-mac-1}
\end{equation}
for some $\varepsilon_1 (\delta', n)$ satisfying
\begin{equation}
\lim_{n \to \infty} \varepsilon_1 ( \delta', n)  = 0,\quad \delta' > 0.
\end{equation}
On the RHS of \eqref{eq:rcub-gmi-mac-1}, $I^{\rm gmi}_{1,T} (\SNR)$ denotes the GMI for User 1 as a function of $\SNR$ for a fixed $T$ and is given by
\begin{equation}
I^{\rm gmi}_{1,T} (\SNR) =  \frac{L - \ntone - \nttwo}{L} \left( \sup_{\theta < 0} ~ \big(\theta F(\SNR) - \kappa_1 (\theta, \SNR)  \big)  \right).\label{eq:gmi_user_1_def}
\end{equation}
The pre-factor $(L-\ntone - \nttwo)/L$ equals the fraction of time used for data transmission. The bound \eqref{eq:rcub-gmi-mac-1} implies that for all rates below $\gmi_{1,T} (\SNR)$, decoding the message from User 1 using the scheme described in Section \ref{sec:mac} has vanishing error probability as $n$ tends to infinity. Combining \eqref{eq:expectation-F-mac} and \eqref{eq:kappa-1-n-theta-SNR-n-to-infty} with \eqref{eq:gmi_user_1_def} yields
\begin{align}
\gmi_{1,T} (\SNR) = \sup_{\theta < 0}~ \frac{1}{L} \sum^{L-1}_{\ell = \ntone + \nttwo}  \bigg \{ \theta & \left( \nr + \SNR \: \Expect \left[ \left \| \ERM^{(T)}_{1,\ell} \right \|^2_F +  \left \|  \ERM^{(T)}_{2,\ell} \right \|^2_F   \right] \right) - g_{1,\ell} (\theta,\SNR)  \nonumber \\
& \qquad \qquad +   \Expect \left[ \log {\rm det} \left(\mathsf{I}_\nr - \theta\, \SNR\, \hat \HRM_{1,\ell}^{(T)} \hat \HRM_{1,\ell}^{\dagger(T)} \right)\right]  \bigg \}. \label{eq:gmi-1-T-sup}
\end{align}

As the supremum \eqref{eq:gmi-1-T-sup} is difficult to evaluate, we next consider a lower bound on $\gmi_{1,T}(\SNR)$. By noting $g_{1,\ell} (\theta,\SNR) \leq 0$ for $\theta \leq 0$ (which can be shown using the technique developed in \cite[App. D]{IEEE_IT_Asyhari_Nearest_neighbour}) and by choosing\footnote{As pointed in Section \ref{sec:proof-theorem-1}, this choice of $\theta$ yields a good lower bound at high SNR.} 
\begin{equation}
 \theta = \frac{-1}{\nr +  \nr \left( \ntone + \nttwo \right)  \SNR \,\epsilon^2_{*,T} }
\end{equation}
where
\begin{equation}
\label{eq:estimationerror}
\epsilon^2_{*,T} = \max_{\substack{s = 1,2,\\ r = 1,\dotsc,\nr,\\ t = 1,\dotsc,\nts,\\ \ell = \ntone+\nttwo,\dotsc,L-1}} ~ \Expect \left[ \left|E_{s,\ell}^{(T)} (r,t) \right|^2 \right]
\end{equation}
 we obtain a lower bound on $\gmi_{1,T}(\SNR)$
\begin{align}
\gmi_{1,T} (\SNR) \geq  \frac{1}{L} \sum^{L-1}_{\ell= \ntone + \nttwo}\Expect \Bigg[ \log {\rm det} \Bigg(\mathsf{I}_\nr + \frac{\SNR \: \hat \HRM_{1,\ell}^{(T)} \hat \HRM_{1,\ell}^{\dagger (T)}}{\nr +   \nr \left(\ntone + \nttwo \right)  \SNR \,\epsilon^2_{*,T}}  \Bigg)  - 1\Bigg]. 
 \label{eq:lower_bound_gmi_1_Tfixed} 
\end{align}

We continue by analyzing the RHS of \eqref{eq:lower_bound_gmi_1_Tfixed} in the limit as the observation window $T$ of the channel estimator tends to infinity. To this end, we note that, for $L\leq\frac{1}{2\lambda_D}$, the variance of the interpolation error $\Expect [ |E_{s,\ell}^{(T)} (r,t) |^2 ]$  tends to \eqref{eq:fadingestimate_error} (with $\SNR $ in \eqref{eq:fadingestimate_error} replaced by $\nt \SNR$),\footnote{Note the difference between the point-to-point channel model \eqref{eq:channelmodel} and the MAC channel model \eqref{eq:channel}.} so
\begin{equation}
\lim_{T \rightarrow \infty} \Expect \left[ \left|E_{s,\ell}^{(T)} (r,t) \right|^2 \right] =  \epsilon^2 = 1 - \int^{1/2}_{-1/2} \frac{\SNR \left[f_H (\lambda) \right]^2}{\SNR f_H (\lambda) + L} d\lambda \label{eq:interpolation-error-mac}
\end{equation}
irrespective of $s,\ell,r$ and $t$. Hence, irrespective of $\ell$, the estimate $\hat \HRM_{1,\ell}^{(T)}$ tends to $\bar \HRM_1$ in distribution as $T$ tends to infinity, so
\begin{align}
\frac{\hat \HRM_{1,\ell}^{(T)} \hat \HRM_{1,\ell}^{\dagger (T)}}{ \nr +   \nr \left( \ntone  + \nttwo \right) \SNR \,\epsilon^2_{*,T}} \quad \stackrel{d}{\longrightarrow}  \quad \frac{\bar \HRM_1 \bar \HRM^\dagger_1}{ \nr + \nr \left( \ntone + \nttwo \right) \SNR\, \epsilon^2} 
\end{align}
where the $\nr \times \ntone$ entries of $\bar \HRM_1$ are i.i.d., circularly-symmetric, complex-Gaussian random variables with zero mean and variance $(1 - \epsilon^2)$. Using Portmanteau's lemma (as used in \eqref{eq:GMILB_asymptotic_1}), we obtain  that
\begin{align}
 \gmi_1(\SNR) & = \lim_{T\to\infty}\gmi_{1,T} (\SNR)   \\
&\geq \frac{L- \ntone  - \nttwo}{L} \Bigg( \Expect \left[ \log {\rm det} \left(\mathsf{I}_\nr + \frac{\SNR\, \bar \HRM_1 \bar \HRM_1^\dagger }{ \nr +   \nr \left( \ntone + \nttwo \right) \SNR \,\epsilon^2 } \right) \right] - 1\Bigg)   \label{eq:portmanteau-mac-1} \\
&\geq  \frac{L -  \ntone - \nttwo}{L}\min(\nr,\ntone)\Big[\log\SNR  -  \log\big(\nr +  \nr ( \ntone + \nttwo ) \,\SNR \,\epsilon^2\big) \Big] \nonumber\\
& \quad {} + \frac{L -  \ntone - \nttwo}{L}\Psi_1  \label{eq:lower_bound_gmi_1_asymptotic_ntonesmall}
\end{align}
where
\begin{equation}
 \Psi_1 \triangleq \begin{cases}
         \Expect \left[ \log {\rm det} \: \bar \HRM_1  \bar \HRM^\dagger_1 \right]   - 1, &  $\nr \leq \ntone$\\
         \Expect \left[ \log {\rm det} \: \bar \HRM^\dagger_1 \bar \HRM_1 \right] - 1, &  $\nr > \ntone$.
        \end{cases}
\end{equation}
Here the last inequality follows by lower-bounding $\log {\rm det} \left( {\sf I} + {\sf A} \right) \geq \log {\rm det} {\sf A}$.

By evaluating the RHS of \eqref{eq:lower_bound_gmi_1_asymptotic_ntonesmall} in the same way as evaluating the RHS of \eqref{eq:gmi-lb-T-infinity} in Section \ref{subsec:achievable-rates}, we obtain a lower bound for the maximum achievable pre-log for User 1 as 
\begin{equation}
\Pi_{R^*_1}  \geq \min(\nr,\ntone)\left(1-\frac{\ntone + \nttwo }{L}\right), \quad L\leq\frac{1}{2\lambda_D}.\label{eq:firstbound}
\end{equation}
Here instead of assuming $\nt = \ntone + \nttwo = \nr$, we have used a general setup of $\ntone$, $\nttwo$ and $\nr$. Note that the condition $L\leq1/(2\lambda_D)$ is necessary since otherwise \eqref{eq:fadingestimate_error} would not hold. This yields one boundary of the pre-log region presented in Theorem \ref{th:JTD_pre-log}.

\vspace{4mm}

\subsubsection{\underline{Error Event $(m_1'=1, m_2' \neq 1)$}}

This follows from the proof for the error event $(m'_1\neq 1, m'_2=1)$ by swapping User 1 with User 2. We thus have
\begin{align}
\Pi_{R^*_2} &  \geq   \min(\nr,\nttwo)\left(1-\frac{\ntone + \nttwo }{L}\right), \quad L\leq\frac{1}{2\lambda_D}\label{eq:secondbound}
\end{align}
yielding the second boundary of the pre-log region presented in Theorem~\ref{th:JTD_pre-log}.

\vspace{4mm}

\subsubsection{\underline{Error Event $(m_1' \neq 1, m_2' \neq 1)$}}

The analysis on the achievable sum rate corresponding to the joint error event $\mathcal{E}(m'_1, m'_2)$, $(m_1' \neq 1, m_2' \neq 1)$ in the MAC case follows the same analysis as in the point-to-point case (Section \ref{subsec:achievable-rates}). More specifically, the sum of the GMI $\gmi_{1+2,T} (\SNR)$ can be viewed as the GMI of an $\nr \times (\ntone + \nttwo)$-dimensional point-to-point MIMO channel with fading matrix at time $k$, $[\HRM_{1,k}, \HRM_{2,k}]$, and fading estimate matrix at time $k$, $\left[ \hat \HRM_{1,k}^{(T)}, \hat \HRM_{2,k}^{(T)} \right]$. The maximum achievable sum-rate pre-log can therefore be obtained using the same approaches as in Section \ref{subsec:achievable-rates}, but with arbitrary $\nr$ and $\nt = \ntone + \nttwo$. It can be shown that the maximum achievable sum-rate pre-log $\Pi_{R_{1+2}^*}$ is lower-bounded by
\begin{align}
 \Pi_{R_{1+2}^*} \geq \min \left( \nr, \ntone + \nttwo \right)   \left(1 - \frac{\ntone + \nttwo}{L} \right), \quad L \leq \frac{1}{2 \lambda_D}.\label{eq:thirdbound}
\end{align}
On the RHS of \eqref{eq:thirdbound}, the term $ \min \left( \nr, \ntone + \nttwo \right)$ corresponds to the MIMO gain, which is given by the minimum number of receive and transmit antennas, and the term $\left(1 - \frac{\ntone + \nttwo}{L} \right)$ corresponds to the fraction of time for data transmission, which changes for arbitrary number of transmit antennas in comparison to the proof of the point-to-point channel. This yields the third boundary of the pre-log region presented in Theorem~\ref{th:JTD_pre-log}.

\section{Conclusion}
\label{sec:conclusion} 

In this paper we studied a communication scheme for MIMO fading channels that estimates the fading via transmission of pilot symbols at regular intervals, and feeds the fading estimates to the nearest neighbor decoder. Restricting ourselves to fading processes with a bandlimited power spectral density, we studied the information rates achievable with this scheme at high SNR. Specifically, we analyzed the achievable rate pre-log, defined as the limiting ratio of the achievable rate to the logarithm of the SNR in the limit as the SNR tends to infinity. 

We showed that, in order to obtain fading estimates whose variance vanishes as the SNR tends to infinity, the portion of time required for pilot transmission must be greater or equal to the number of transmit antennas times twice the bandwidth of the fading power spectral density. We demonstrated that, in this case, the nearest neighbor decoder achieves the capacity pre-log of the coherent fading channel times the fraction of time used for the transmission of data. Hence, the loss with respect to the coherent case is solely due to the transmission of pilots used to obtain accurate fading estimates. Our achievability bounds are tight in the sense that any scheme using as many pilots as our proposed scheme cannot achieve a higher pre-log using a nearest neighbor decoder. Furthermore, if the inverse of twice the bandwidth of the fading process is an integer, then, for MISO channels, our scheme achieves the capacity pre-log of the non-coherent fading channel derived by Koch and Lapidoth \cite{IEEE:koch:fadingnumber_degreeoffreedom}. For non-coherent MIMO channels, our scheme achieves the best so far known lower bound on the capacity pre-log obtained by Etkin and Tse \cite{IEEE:etkin:degreeofffreedomMIMO}. Since the last result only yields a lower bound on the capacity pre-log of MIMO channels, there may exist other schemes achieving a better pre-log than our scheme.

\appendices

\section{Proof of Lemma \ref{lemma:interpolator-property-fix-T}}
\label{sec:proof-lemma:interpolator-property-fix-T}

\begin{enumerate}

\item  By the orthogonality principle \cite{poor-detect-estimation}, we have that $\hat H_{k}^{(T)} (r,t)$ and $E_{k}^{(T)} (r,t)$ are uncorrelated. Noting that the pilot symbols are unity, we can write \eqref{eq:fading-estimate-pilot-only-observation} as 
\begin{equation}
\hat H_k^{(T)} (r,t) = \sum^{k + T L}_{\substack{ k' = k - TL:\\ k' \in \mathcal{P}}} a_{k'} (r,t) \left( \sqrt{\frac{\SNR}{\nt}} H_{k'} (r,t) + Z_{k'} (r) \right), \qquad k \in \mathcal{D}. \label{eq:fading-estimate-pilot-expanded}
\end{equation} 
Since the processes $\{ H_k (r,t),\: k \in \integ \}$ and $\{ Z_k (r),\: k \in \integ \}$ are zero-mean complex-Gaussian processes,  we have from \eqref{eq:fading-estimate-pilot-expanded} and the orthogonality principle that $\hat H_{k}^{(T)} (r,t)$ and $E_{k}^{(T)} (r,t)$ are independent zero-mean complex-Gaussian random variables.

\item Recall from Section \ref{subsec:channelestimator} that the time index $k$ can be written as $k = jL + \ell$. Then, for $k \in \mathcal{D}$, we have $\ell = \nt,\dotsc,L-1$, and for $k \in \mathcal{P}$ we have $\ell = 0,\dotsc,\nt-1$. Since the pilot vectors are transmitted sequentially from ${\bm p}_1$ to ${\bm p}_\nt$, we have for $k \in \mathcal{P}$ that  
\begin{equation}
{\xv}_{jL + \ell } = {\bm p}_{\ell +1}, \qquad \ell =0,\dotsc,\nt-1
\end{equation}
namely the $(\ell + 1)$-th pilot vector, $\ell =0,\dotsc,\nt-1$ is used to estimate the fading coefficients from transmit antenna $t$. We next note that, in order to estimate $H_k(r,t)$, there is no loss in optimality by considering only the outputs $Y_{k'} (r)$ for $k' \in \mathcal{P} \cap \{k-TL,\dotsc,k+TL \}$ satisfying
\begin{equation}
k' \: {\rm mod} \: L = t - 1.
\end{equation}
Indeed, the channel outputs $Y_{k'} (r)$, $k'\: {\rm mod} \: L \neq t-1$ correspond to $H_{k'} (r,t')$,  $t' \neq t$, which are independent from $H_k (r,t)$ since we have assumed that the fading processes corresponding to different transmit and receive antennas are independent. It follows that for the estimation at $k=jL + \ell$, the coefficients $a_{k'}  (r,t)$ that minimize the mean-squared error depend only on $L$ and $\ell$ \cite{IEEE:ohno:averageratePSAM}. The fading estimate \eqref{eq:fading-estimate-pilot-only-observation} can then be expressed  as 
\begin{align}
 \hat H_{jL + \ell}^{(T)} (r,t) & = \sum^{ T - 1 }_{\tau = -T } \alpha_{-\tau L, \ell}  (r,t) Y_{(j-\tau) L + t - 1} (r) \\ 
 & = \sum^{T-1}_{\tau = -T} \alpha_{-\tau L, \ell} (r,t)  \left(  \sqrt{\frac{\SNR}{\nt}}   H_{(j-\tau)L + t - 1}  (r,t) + Z_{(j-\tau)L + t-1} (r) \right) \label{eq:fad-estimate-jL-ell} 
\end{align}
where  for a given $L$ and  $\ell = \nt, \dotsc, L-1$, we have defined
\begin{equation}
\alpha_{-\tau L, \ell} (r,t) \triangleq a_{{(j-\tau) L +  t - 1} }(r,t), \quad \tau = -T,\dotsc,T-1.
\end{equation}

Noting again that the $\nr \cdot \nt$ processes $\{ H_k (r,t),\: k \in \integ \}$ are independent from each other and have the same law, we obtain the following results from  \eqref{eq:fad-estimate-jL-ell}.

\begin{enumerate}
\item For a given $t$, the time differences between the index of interest---($jL + \ell$)---and the positions of  pilots---($(j-\tau)L+t-1$)---do not depend on $r$. It thus follows that for a given $t$, the optimal coefficients $\alpha_{-\tau L, \ell} (r,t)$ are identical for all $r=1,\dotsc,\nr$ \cite{IEEE:ohno:averageratePSAM}. This implies that for a given $t$ and $\ell$, the $\nr$ processes 
$$\{ (\hat H_{jL+\ell}^{(T)} (1,t), E_{jL+\ell}^{(T)} (1,t)), \: j \in \integ \},\dotsc,\{ (\hat H_{jL+\ell}^{(T)} (\nr,t), E_{jL+\ell}^{(T)} (\nr,t)), \: j \in \integ \}$$
are independent and have the same law. 

 \item For a given $r$, the time differences between the index of interest---($jL + \ell$)---and the position of  pilots---($(j-\tau)L+t-1$, $\tau=-T,\dotsc,T-1$)---depend on $t$.  It thus follows from \cite{IEEE:ohno:averageratePSAM} that for a given $r$, the optimal coefficients $\alpha_{-\tau L, \ell} (r,t)$ are generally different for $t=1,\dotsc,\nt$. This implies that for a given $r$ and $\ell$, the $\nt$ processes 
 $$\{ (\hat H_{jL+\ell}^{(T)} (r,1), E_{jL+\ell}^{(T)} (r,1)), \: j \in \integ \},\dotsc,\{ (\hat H_{jL+\ell}^{(T)} (r,\nt), E_{jL+\ell}^{(T)} (r,\nt)), \: j \in \integ \}$$
 are independent but have  different laws. 
\end{enumerate}

\item  We first note that $\{ \HRM_k,\: k \in \integ \}$ is an ergodic Gaussian process, which implies that it is also a weakly mixing process \cite{IEEE:Sethuraman:capaiy-per-unit-energy}. (See \cite{brown_ergodic} for a definition of a weakly-mixing process.) Since $\{\zrv_k,\: k \in \integ \}$ is an i.i.d. Gaussian process and independent from $\{ \HRM_k,\: k \in \integ \}$, it follows from \cite[Prop. 1.6]{brown_ergodic} that $\{ (\HRM_k, \zrv_k),\: k \in \integ \}$ is jointly ergodic.

We next evaluate the process $\{ (\hat \HRM_k^{(T)}, \HRM_k, \zrv_k), \: k \in \mathcal{D} \}$. Note that this process cannot be expressed directly as a time-invariant function of $\{ (\HRM_k, \zrv_k ), \: k \in \integ \}$. Indeed, by assuming $k = jL + \ell$, we can see from \eqref{eq:fad-estimate-jL-ell} that the function to produce $\hat \HRM_k^{(T)}$ from $\{ (\HRM_k, \zrv_k ), \: k \in \integ \}$ depends on the time index $k$ via $\ell$, for $\ell = \nt,\dotsc,L-1$ (corresponding to time indices for data transmission). As such, to facilitate the analysis, we need to introduce a ``dummy'' matrix-valued process $\{ \ARM_{k, \ell}, \: k \in \integ \}$ where $\ARM_{k,\ell}$ has $\nr \times \nt$ entries, and where its entry at row $r$ and column $t$ is given by
\begin{equation}
A_{k,\ell} (r,t) = \sum^{T-1}_{\tau = -T} \alpha_{-\tau L, \ell} (r,t)  \left(  \sqrt{\frac{\SNR}{\nt}}   H_{k-\tau L - \ell + t - 1}  (r,t) + Z_{k-\tau L - \ell + t - 1} (r) \right).  \label{eq:def-A-k-ell-r-t}
\end{equation} 
Here the coefficients $\alpha_{-\tau L, \ell}$, $\tau = -T,\dotsc,T,$ have the same value as those in \eqref{eq:fad-estimate-jL-ell} for a given $L$ and $\ell$. Consequently, $\{ \ARM_{k, \ell}, \: k \in \integ \}$ is a time-invariant function of $\{ (\HRM_k, \zrv_k),\: k \in \integ \}$ that coincides with $\hat\HRM_k^{(T)}$ for $k=jL+\ell$. This in turn implies that $\{ ({\mathbb A}_{k,\ell}, \HRM_k, \zrv_k), \: k \in \integ \}$ is jointly weakly mixing. Furthermore, by the definition of weakly mixing \cite{Petersen_Ergodic,brown_ergodic,IEEE:Sethuraman:capaiy-per-unit-energy}, the process $\{ ({\mathbb A}_{jL + \ell, \ell}, \HRM_{jL + \ell}, \zrv_{jL + \ell}), \: j \in \integ \}$ for any $\ell = 0,\dotsc,L-1$ is also jointly weakly mixing. Since for $k = jL + \ell$, $k \in \mathcal{D}$, the matrix $\ARM_{jL + \ell,\ell}$ is identical to $\hat \HRM^{(T)}_{jL + \ell}$, it follows that the process  $\{(\hat \HRM_{jL + \ell}^{(T)}, \HRM_{jL + \ell},  \zrv_{jL+\ell} ),\: j \in \integ\}$ for each $\ell=\nt,\dotsc,L-1$ is jointly weakly mixing, which implies ergodicity. 

We finally evaluate the joint behavior of the two processes $\{(\hat \HRM_{jL + \ell}^{(T)}, \HRM_{jL + \ell},  \zrv_{jL+\ell} ),\: j \in \integ\}$ and $\{ \xrv_{jL + \ell},\: j \in \integ \}$ for $\ell \in \{ \nt,\dotsc,L-1 \}$. Since $\{ \xrv_{jL + \ell},\: j \in \integ \}$ for $\ell \in \{ \nt, \dotsc,L-1 \}$ is i.i.d. and independent from  $\{(\hat \HRM_{jL + \ell}^{(T)}, \HRM_{jL + \ell},  \zrv_{jL+\ell} ),\: j \in \integ\}$, we have by \cite[Lemma 2]{IEEE_IT:YH-kim:coding-stationary-fb} that the process 
$$ \{ (\hat \HRM^{(T)}_{jL+\ell}, \: \HRM_{jL + \ell},   \: \zrv_{jL + \ell}, \:  \xrv_{jL + \ell}), \: j \in \integ  \}, \quad \ell \in \{\nt, \dotsc,L-1 \}$$ 
is jointly ergodic. This proves Part 3) of Lemma \ref{lemma:interpolator-property-fix-T}.

\item Note that the process $\{ \hat \HRM^{(T)}_k,\: k \in \mathcal{D}\}$ is a function of $\{(\HRM_k, \zrv_k), k \in \mathcal{P} \}$. Since $\{ \zrv_k,\: k \in \mathcal{D} \}$ has zero mean and is independent from $\{ (\HRM_k, \zrv_k),\: k \in \mathcal{P} \}$ and  $\{ \xrv_k,\: k \in \mathcal{D} \}$, it follows that for any  of $\ell = \nt,\dotsc,L-1$ (which correspond to $k \in \mathcal{D}$) 
\begin{equation}
\Expect \left[  \zrv_\ell^\dagger \hat \HRM^{(T)}_\ell \xrv_\ell  \right] = 0. \label{eq:trace-inner-outer-expec}
\end{equation}

\end{enumerate}

\section{Variance of the Interpolation Error for $L > \frac{1}{2\lambda_D}$ }
\label{sec:proof-lemma:interpolator-property-T-infinity}

Recall that, as $T$ tends to infinity, we have that, irrespective of $j$ and $r$, the variance of the interpolation error \eqref{eq:interpolation-error-variance-T-infinity-general}, namely
\begin{align}
\epsilon^2_\ell (t) =  1 - \int^{1/2}_{-1/2} \frac{\SNR \left|f_{L,\ell - t + 1}(\lambda) \right|^2}{\SNR f_{L,0} (\lambda) + \nt} d \lambda \label{eq:proof-error-asymptotic-T}
\end{align}
where 
\begin{equation}
 f_{L,\ell} (\lambda) = \frac{1}{L} \sum^{L-1}_{\nu=0} \bar{f}_H \left( \frac{\lambda - \nu}{L} \right)e^{\ii 2\pi \ell \frac{\lambda - \nu}{L} }, \qquad  -\frac{1}{2} \leq \lambda \leq \frac{1}{2}. \label{eq:undersamp_fHell_general-proof}
\end{equation}
In order to analyze the behavior of $\epsilon^2_\ell (t)$ for $L > \frac{1}{2\lambda_D}$, we first express $L$ as
\begin{equation}
L = \frac{1}{2\lambda_D} + \varepsilon
\end{equation}
for some $\varepsilon > 0$. The variance of the interpolation error \eqref{eq:proof-error-asymptotic-T} can be lower-bounded as\begin{align}
 \epsilon^2_\ell (t)
 =  & \int^{1/2}_{-1/2} \frac{\nt f_{L,0} (\lambda)}{\SNR f_{L,0} (\lambda) + \nt} d\lambda  \nonumber \\
& + \int^{1/2}_{-1/2} \frac{\SNR \left( \left[  f_{L, 0} (\lambda) \right]^2 - \left|f_{L, \ell-t+1} (\lambda) \right|^2 \right)}{\SNR f_{L,0} (\lambda) + \nt} d\lambda  \label{eq:decomposition-integral-est-error-aliasing} \\
\geq &  \int^{1/2}_{-1/2} \frac{\SNR \left( \left[ f_{L, 0} (\lambda) \right]^2 - \left|f_{L, \ell-t+1} (\lambda) \right|^2 \right)}{\SNR f_{L,0} (\lambda) + \nt} d\lambda  \label{eq:lb-interpolation-error-L-large}
\end{align}
where the inequality is because the first integral in \eqref{eq:decomposition-integral-est-error-aliasing} is non-negative. Let $\ell' \triangleq \ell - t + 1$.  We have that
\begin{align}
&   \left[ f_{L, 0} (\lambda) \right]^2 - \left|f_{L, \ell'} (\lambda) \right|^2  \nonumber \\
 &   =  \frac{1}{L^2} \sum_{\nu = 0}^{L-1} \sum^{L-1}_{\substack{\nu' = 0, \\ \nu' \neq \nu }} \bar f_H \left( \frac{\lambda - \nu}{L} \right) \bar f_H \left( \frac{\lambda - \nu'}{L} \right)  \left[ 1  - e^{\ii 2\pi \ell' \frac{\lambda - \nu }{L}} \cdot e^{-\ii 2 \pi \ell' \frac{\lambda - \nu'}{L}}\right] \\
 &   =  \frac{2}{L^2} \sum_{\nu = 0}^{L-1} \sum^{L-1}_{\substack{ \nu' > \nu }} \bar f_H \left( \frac{\lambda - \nu}{L} \right) \bar f_H \left( \frac{\lambda - \nu'}{L} \right)  \left[1  -  \cos \left( 2\pi \ell' \frac{\nu' - \nu }{L} \right)  \right]. \label{eq:upsilon-ell}
\end{align}
Since the summands are non-negative, it follows that
\begin{align}
   \left[f_{L, 0} (\lambda) \right]^2 - \left|f_{L, \ell'} (\lambda) \right|^2   \geq   \frac{2}{L^2}  \bar f_H \left( \frac{\lambda}{L} \right) \bar f_H \left( \frac{\lambda - 1}{L} \right)  \left[1  -\cos \left(   \frac{2\pi \ell'}{L} \right) \right].
\end{align}
The RHS of \eqref{eq:lb-interpolation-error-L-large} can thus be lower-bounded as
\begin{align}
&  \int^{1/2}_{-1/2} \frac{\SNR \left( \left[ f_{L, 0} (\lambda) \right]^2 - \left|f_{L, \ell'} (\lambda) \right|^2 \right)}{\SNR f_{L,0} (\lambda) + \nt} d\lambda \nonumber \\
&  \qquad \qquad  \geq \frac{2\left[1  -  \cos \left( \frac{2\pi \ell'}{L} \right) \right]}{L^2}  \int_\mathcal{L}   \frac{\SNR \: \bar f_H \left( \frac{\lambda}{L} \right) \bar f_H \left( \frac{\lambda - 1}{L} \right)}{\SNR f_{L,0} (\lambda) + \nt}  d\lambda \label{eq:LB-second-int-result-1}
\end{align}
where $\mathcal{L}$ denotes the interval in $[-1/2, 1/2]$  where $\bar f_H \left( \frac{\lambda}{L} \right)$ and $\bar f_H \left( \frac{\lambda - 1}{L} \right)$ overlap. Note that, for $L = \frac{1}{2\lambda_D} + \varepsilon$, this interval is of Lebesgue measure 
\begin{equation}
\mu \left( \mathcal{L} \right) = \min(1, 2\lambda_D \varepsilon). \label{eq:lebesgue-mesaure-cal-L}
\end{equation}

By Fatou's lemma \cite{royden:realanalysis}, we obtain 
\begin{align}
&  \liminf_{\SNR \to \infty} \frac{2\left[1  -  \cos \left( \frac{2\pi \ell'}{L} \right) \right]}{L^2}   \int_\mathcal{L}   \frac{\SNR \: \bar f_H \left( \frac{\lambda}{L} \right) \bar f_H \left( \frac{\lambda - 1}{L} \right)}{\SNR f_{L,0} (\lambda) + \nt}  d\lambda  \nonumber \\
&  \qquad \geq \frac{2\left[1  -  \cos \left( \frac{2\pi \ell'}{L} \right) \right]}{L^2}   \int_\mathcal{L}  \liminf_{\SNR \to \infty}  \frac{\SNR \: \bar f_H \left( \frac{\lambda}{L} \right) \bar f_H \left( \frac{\lambda - 1}{L} \right)}{\SNR f_{L,0} (\lambda) + \nt}  d\lambda  \\
&  \qquad =\frac{2\left[1  -  \cos \left( \frac{2\pi \ell'}{L} \right) \right]}{L^2}  \int_\mathcal{L}   \frac{ \bar f_H \left( \frac{\lambda}{L} \right) \bar f_H \left( \frac{\lambda - 1}{L} \right)}{ f_{L,0} (\lambda) }  d\lambda. \label{eq:second-integral-error-L-ged-1overlambdaD-LB}
\end{align}
Since $\mathcal{L}$ is of positive Lebesgue measure, and since the integrand on the RHS of \eqref{eq:second-integral-error-L-ged-1overlambdaD-LB} is strictly positive, it follows from \cite{Weir_lebesgue_integration} that 
\begin{equation}
 \int_\mathcal{L}   \frac{ \bar f_H \left( \frac{\lambda}{L} \right) \bar f_H \left( \frac{\lambda - 1}{L} \right)}{ f_{L,0} (\lambda) }  d\lambda > 0. \label{eq:int-positive-lebesgue-meas}
\end{equation}
Recall that $\ell' = \ell - t + 1$. Thus, for $\ell = \nt,\dotsc,L-1$, we have 
\begin{equation}
  \cos \left(  \frac{ 2\pi \ell' }{L} \right) < 1. \label{eq:positivity-1-cos2piell}
\end{equation}
Then, combining \eqref{eq:positivity-1-cos2piell}  and  \eqref{eq:int-positive-lebesgue-meas} with \eqref{eq:second-integral-error-L-ged-1overlambdaD-LB}, \eqref{eq:LB-second-int-result-1} and \eqref{eq:lb-interpolation-error-L-large} yields
\begin{equation}
\liminf_{\SNR \to \infty} \: \: \epsilon^2_\ell (t) > 0.
\end{equation}


\end{document}